\newcommand {\rNokgjXa} {\REQUIRE}
\newcommand {\BeqmjZec} {\ENSURE}
\newcommand{\UDODcOFW}[4][h t b]{
\begin{algorithm}[#1]
\caption{#2}
\label{alg:#3}
\begin{algorithmic}[1]
#4
\end{algorithmic}
\end{algorithm}
}
\newcommand {\EfCTdHOd} [1] { \quad \COMMENT{#1}}
\newcommand {\hXDTkvgo} {
\RequirePackage[a4paper, twoside,
ignoreheadfoot, nohead, nomarginpar,
width=14.6cm, hcentering,
top=2.6cm, vmarginratio=3:4]{geometry}
\HAmZsFFB
\ZTAzoJRA
\CKZbVqUA
\vDplsxWT
\JjzwRmFi
\RequirePackage{flafter}
}
\newcommand {\HAmZsFFB} {
\RequirePackage[T1]{fontenc}
\RequirePackage{lmodern}
\RequirePackage{mathptmx}
\RequirePackage[scaled=.92]{helvet}
\RequirePackage{eucal}
}
\newcommand {\ZTAzoJRA} {
\RequirePackage{fancyhdr}
\pagestyle{fancyplain}
\fancyhf{}
\renewcommand{\headrulewidth}{0pt}
\renewcommand{\footrulewidth}{0pt}
\newcommand{\PJMUQtLt}{\textsl{\small\thepage}}
\fancyfoot[C]{\fancyplain{\PJMUQtLt}{}}
\fancyfoot[LE,RO]{\fancyplain{}{\PJMUQtLt}}
}
\newcommand{\vDplsxWT}{ \RequirePackage[stretch=10,shrink=10]{microtype} }
\newcommand{\JjzwRmFi}{ \RequirePackage[nodayofweek]{datetime} }
\newcommand {\LTNsnnmo} [1] {\href{mailto:#1}{\texttt{#1}}}
\newcommand {\GxamKrlf} {Attila Pereszl\'{e}nyi}
\newcommand {\INbSqQEv} {\LTNsnnmo{attila.pereszlenyi@gmail.com}}
\newcommand {\pSisqEJw}
{Centre for Quantum Technologies, National University of Singapore}
\newcommand {\LDYwFcAX} {without loss of generality}
\newcommand {\GmcWftaL} {such that}
\newcommand {\YRlgvUEY} {i.e.,}
\newcommand {\djIkZXTK} {e.g.,}
\newcommand {\TbXXHkNG} {Hilbert space}
\newcommand {\vHMHwcRb} {Bloch sphere}
\newcommand {\CTcThSvb} {Choi-Jamio{\l}kowski}
\newcommand {\dFUQloAO} {\CTcThSvb{} representation}
\newcommand {\ExDgacVq} {de Finetti}
\newcommand {\hFGJtSId} {et al.}
\newcommand {\wRzSJcrL} {quantum interactive proof system}
\newcommand {\WlSHGbnz} {
\author {\GxamKrlf\thanks{E-mail: \INbSqQEv.}\\
\textsl{\small \pSisqEJw}}
}
\newcommand {\iivqMZNr} [1] {
\WlSHGbnz
\title{\textbf{#1}}
\wcQJzGlS{#1}{\GxamKrlf}
}
\newcommand {\fERcsqXi} {
\bibliographystyle{halpha}
\bibliography{./bib}
}
\newcommand {\XsBWhkhM} [1] {Theorem~\ref{#1}}
\newcommand {\cEgZOrfG} [1] {Lemma~\ref{#1}}
\newcommand {\ozsaYFnP} [1] {Definition~\ref{#1}}
\newcommand {\nvBBVayi} [1] {Algorithm~\ref{#1}}
\newcommand {\BeRtWGMB} [1] {Section~\ref{#1}}
\newcommand {\GTIJeHWG} [1] {\ensuremath{ \left( #1 \right) }}
\newcommand {\YnJXqMTX} [1] {\ensuremath{ \left[ #1 \right] }}
\newcommand {\dVhhxuvI} [1] {\ensuremath{ \left\lbrace #1 \right\rbrace }}
\newcommand {\VslhEMHJ} {\;}
\newcommand {\GscNOKTs} {\: \! \!}
\newcommand {\mcpkmsYW} [2] {\ensuremath{ #1 \GscNOKTs \GTIJeHWG{ #2 } }}
\newcommand {\aJAxrBMB} {\ensuremath{ \stackrel {\mathrm{def}} {=} }}
\newcommand {\HFkfLWeq} {\ensuremath{ \stackrel{\mathrm{?}}{=} }}
\newcommand {\XDGqcIBa} {\ensuremath{ \iota }}
\newcommand {\WrVYYpex} [1] {\mcpkmsYW {O} {#1}}
\newcommand {\hdzgZeCW} [1] {\dVhhxuvI{ 1, 2, \dotsc, #1 }}
\newcommand {\lyfuwjUX} {\, : \,}
\newcommand {\LtfmKMob} [2] {\dVhhxuvI{ #1 \lyfuwjUX #2 }}
\newcommand {\BuwGPLRs} [3] {\ensuremath{ #1 \LtfmKMob{#2}{#3} }}
\newcommand {\rRGRugxs} [3] {\ensuremath{ #1 : \, #2 \rightarrow #3 }}
\newcommand {\PbCNIRqc} [1] {\ensuremath{ \mathbb{#1} }}
\newcommand {\ynpHhbyi} {\PbCNIRqc{C}}
\newcommand {\FLbfwYzZ} {\ensuremath{ \PbCNIRqc{Z}^{+} }}
\newcommand {\zmKbSHOE} {\ensuremath{ \in_{\mathrm{R}} }}
\newcommand {\bymhjSLm} [1] {\ensuremath{ \left\langle #1 \right| }}
\newcommand {\CARSeAIr} [1] {\ensuremath{ \left| #1 \right\rangle }}
\newcommand {\SCAQAFmU} [2] {\ensuremath{ \left| #1 \middle\rangle \middle\langle #2 \right| }}
\newcommand {\yAIHQVQC} [1] {\SCAQAFmU{#1}{#1}}
\newcommand {\mlpiBNcl} [2] {\ensuremath{ \left\langle #1 \middle\vert #2 \right\rangle }}
\newcommand {\FGGdfwUg} {\ensuremath{ \frac{\CARSeAIr{00} + \CARSeAIr{11}}{\sqrt{2}} }}
\newcommand {\UPMtlSUt} {\ensuremath{ \otimes }}
\newcommand {\XhMHOPJK} [1] {\ensuremath{ #1^{*} }}
\newcommand {\EovgbaUD} {\ensuremath{ \mathrm{Tr} }}
\newcommand {\gyBwVFHt} [1] {\mcpkmsYW{\EovgbaUD}{#1}}
\newcommand {\FbNPrGDa} [2] {\mcpkmsYW{ \EovgbaUD_{#1} }{ #2 }}
\newcommand {\UylNyihM} [1] {\ensuremath{ \left| #1 \right| }}
\newcommand {\eNNesbxu} [1] {\ensuremath{ \left\| #1 \right\| }}
\newcommand {\xotrFLoO} [2] {\ensuremath{ \eNNesbxu{#1}_{#2} }}
\newcommand {\JjPvCTWX} [1] {\xotrFLoO{#1}{\EovgbaUD}}
\newcommand {\bjhdjWcQ} [1] {\xotrFLoO{#1}{\infty}}
\newcommand {\DGaDmHJe} [2] {\mcpkmsYW{\mathrm{F}}{#1, #2}}
\newcommand {\HcxXjAcJ} [1] {\ensuremath{ \mathsf{#1} }}
\newcommand {\evJAjzTZ} [1] {\ensuremath{\text{\textsc{#1}}}}
\newcommand {\OYsGslWI} {\HcxXjAcJ{NP}}
\newcommand {\fGxhDjlY} {\HcxXjAcJ{PSPACE}}
\newcommand {\NEDasFOP} {\HcxXjAcJ{IP}}
\newcommand {\KRCqOJJJ} {\HcxXjAcJ{MA}}
\newcommand {\FEZMUXFw} {\HcxXjAcJ{PP}}
\newcommand {\wkGMBCsG} {\HcxXjAcJ{A_{0}PP}}
\newcommand {\ihRQQChH} {\HcxXjAcJ{QMA}}
\newcommand {\vPhVIqeI} {\HcxXjAcJ{QCMA}}
\newcommand {\GUYckNnU} {\HcxXjAcJ{QIP}}
\newcommand {\DYmPyZgR} {\HcxXjAcJ{BQP}}
\newcommand {\pbpWqYJX} [2] {\mcpkmsYW{\ihRQQChH}{#1, #2}}
\newcommand {\HnNIoaAl} {\ensuremath{ \KRCqOJJJ_{1} }}
\newcommand {\medaFzJc} {\ensuremath{ \ihRQQChH_{1} }}
\newcommand {\HvMKrrDT} {\ensuremath{ \vPhVIqeI_{1} }}
\newcommand {\qOrOiFzp} {\ensuremath{ \GUYckNnU_{1} }}
\newcommand {\jYpQwXZd} [1] {\ensuremath{ \mathnormal{#1} }}
\newcommand {\BWHXbaAC} [1] {\ensuremath{ \mathbf{#1} }}
\newcommand {\GECLKEJq} {\BWHXbaAC{CNOT}}
\newcommand {\eSvaqyjm} {\BWHXbaAC{H}}
\newcommand {\FqTFSrwh} {\BWHXbaAC{X}}
\newcommand {\RUYzfkFo} {\BWHXbaAC{Z}}
\newcommand {\WsoGCMAU} {\BWHXbaAC{Y}}
\newcommand {\cNXPyGJX} {\BWHXbaAC{T}}
\newcommand {\aOIUkcRb} {\ensuremath{ \mathbbm{1} }}
\newcommand {\FyfkKUDd} [1] {\ensuremath{ \mathsf{#1} }}
\newcommand {\nuoeaWUW} [1] {\ensuremath{ \mathcal{#1} }}
\newcommand {\FnwnljbK} [1] {\ensuremath{\mathrm{#1}}}
\newcommand {\poaESbLo} [1] {\mcpkmsYW{\FnwnljbK{L}}{#1}}
\newcommand {\sYllRVEw} [1] {\mcpkmsYW{\FnwnljbK{D}}{#1}}
\newcommand {\CKZbVqUA} {
\theoremstyle {plain}
\newtheorem {theorem} {Theorem} [section]
\newtheorem {corollary} [theorem] {Corollary}
\newtheorem {lemma} [theorem] {Lemma}
\theoremstyle {remark}

\theoremstyle {definition}
\newtheorem {definition} [theorem] {Definition}

}
\newcommand {\GzyBBfAC} {\tag*{\qedhere}}
\newcommand {\wcQJzGlS} [2] {
\definecolor{color_of_url}{RGB}{0,0,75}
\hypersetup{
pdftitle={#1},
pdfauthor={#2},
pdfstartview={FitH},
breaklinks={true},
colorlinks={true},
linkcolor={black},
citecolor={black},
urlcolor={color_of_url}
}
}
\newcommand {\gAuWdsUU} {\ensuremath{N}}
\newcommand {\IwFBcXWd} [1] {\ensuremath{ \BWHXbaAC{W}_{#1} }}
\newcommand {\bYQpVZwG} {\ensuremath{ \ihRQQChH^{\text{\textnormal{const-EPR}}} }}
\newcommand {\hxlhxXVG} {\ensuremath{ \medaFzJc^{\text{\textnormal{const-EPR}}} }}
\newcommand {\mjIKlHwf} [2] {\mcpkmsYW{\bYQpVZwG}{#1, #2}}
\newcommand {\qIsFHsNm} [1] {\ensuremath{ \CARSeAIr{\mcpkmsYW{J}{#1}} }}
\newcommand {\IbAozRcX} {\ensuremath{ \aOIUkcRb_{\nuoeaWUW{P}} \UPMtlSUt \KMFbyGRX }}
\newcommand {\ulbwWGDE} {\ensuremath{ \Pi_{\mathrm{acc}} }}
\newcommand {\CGvPMhCL} {\ensuremath{ \widetilde{\Pi}_{\mathrm{acc}} }}
\newcommand {\lmkGSgYf} {\ensuremath{ \BWHXbaAC{M}_x }}
\newcommand {\LZfegDul} {\ensuremath{ \BWHXbaAC{V}_x }}
\newcommand {\hhqeGUPm} {Distillation Procedure}
\newcommand {\KlrOGnZA} {SWAP Test}
\newcommand {\LkIoSNVy} {Reflection Simulation Test}
\newcommand {\gorjOgEG} {Reflection Procedure}
\newcommand {\mCDuGkWD} {Space Restriction Test}
\newcommand {\Vohxuxyh}[2]{\mcpkmsYW{d}{#1, #2}}
\newcommand {\LBGptwLj} {\ensuremath{\bar{0}}}
\newcommand {\MEVIdGJb} {\ensuremath{\phi^{+}}}
\newcommand {\VPzthEpq} {\ensuremath{\phi^{-}}}
\newcommand {\FZGQkLzf} {\cNXPyGJX \sigma_i \XhMHOPJK{\cNXPyGJX}}
\newcommand {\tVRtKpaI} {\ensuremath{\mu_{\fTIaZVxF}}}
\newcommand {\RWgRdDui} {\ensuremath{\nu_{\fTIaZVxF}}}
\newcommand {\MrWNkHpB} {\ensuremath{\frac{\aOIUkcRb_{\nuoeaWUW{S}_1'
\UPMtlSUt \nuoeaWUW{S}_2'}}{4}}}
\newcommand {\KMFbyGRX} {\ensuremath{\yAIHQVQC{\LBGptwLj}_{\nuoeaWUW{A}}}}
\newcommand {\fTLcMlGS} [2] {\ensuremath{\yAIHQVQC{#1}_{\nuoeaWUW{S}_{#2}}}}
\newcommand {\dIKLUvxq} {\GTIJeHWG{\aOIUkcRb - \CGvPMhCL}}
\newcommand {\rkQwwcQs} {\ensuremath{\UuexCuFw^{+}}}
\newcommand {\tTkICuaj} {\ensuremath{\UuexCuFw^{-}}}
\newcommand {\UuexCuFw} {\ensuremath{A}}
\newcommand {\fTIaZVxF} {\ensuremath{B}}
\newcommand {\CQiPcAHs} {\FbNPrGDa{\nuoeaWUW{S}_2}{\sigma_i}}
\newcommand {\LvsCZiIr} {\ensuremath{\xi}}
\newcommand {\hxCscLfe} {p_{+} \GTIJeHWG{\yAIHQVQC{\MEVIdGJb}}^{\UPMtlSUt 2}
+ p_{-} \GTIJeHWG{\yAIHQVQC{\VPzthEpq}}^{\UPMtlSUt 2}}
\newcommand {\hqLKZmNM} {\ensuremath{\Psi}}
\newcommand {\mpAYNUjM} [1] {\mcpkmsYW{\hqLKZmNM}{#1}}
\newcommand {\NxqVzdIP} {\ensuremath{\nuoeaWUW{W}^{+}}}
\newcommand {\NticoQpX} {\ensuremath{\nuoeaWUW{W}^{-}}}
\newcommand {\KlqDGJMb} {\ensuremath{\Pi^{+}}}
\newcommand {\ybAbpHUd} {\ensuremath{\Pi^{-}}}
\newcommand {\uvRBmLYf} {\ensuremath{k\evJAjzTZ{-SAT}}}
\date{\formatdate{23}{6}{2013}}
\begin{document}
\maketitle
\begin{abstract}
We give a simpler proof of one of the results of
Kobayashi, Le Gall, and Nishimura \cite{Kobayashi2013},
which shows that any \ihRQQChH{} protocol can be
converted to a one-sided error protocol, in which Arthur
and Merlin initially share a constant number of EPR pairs
and then Merlin sends his proof to Arthur.
Our protocol is similar but somewhat simpler than the original.
Our main contribution is a simpler and more direct analysis
of the soundness property that uses well-known results
in quantum information
such as properties of the trace distance and the fidelity,
and the quantum \ExDgacVq{} theorem.
\end{abstract}
\section{Introduction}
The class \KRCqOJJJ{} was defined by Babai \cite{Babai1985}
as the natural probabilistic extension of the class \OYsGslWI.
In the definition of \KRCqOJJJ, the prover (Merlin) gives a polynomial
length `proof' to the verifier (Arthur), who then performs a
polynomial-time randomized computation and has to decide
if an input $x$ is in a language $L$ or not.
If we add interaction to the model, \YRlgvUEY{} the prover and
the verifier can exchange a polynomial number of messages
before the verifier makes his decision, then
we get the class \NEDasFOP{} \cite{Goldwasser1989}.\footnote{Babai also defined
an interactive version of \KRCqOJJJ, that can be thought of as a
`public-coin' version of \NEDasFOP.
Later Goldwasser and Sipser \cite{Goldwasser1986} showed
that this class has the same expressive power as \NEDasFOP.}
The verifiers of the above proof systems are allowed
to make some small error in their decision,
but they must satisfy two conditions.
\begin{itemize}
\item If $ x \in L $ then the verifier has to accept
a valid proof with high probability.
The probability that the verifier rejects such proof
is called the \emph{completeness} error.
\item If $ x \notin L $ then no matter what proof the verifier
receives, he must reject with high probability.
The probability that the verifier accepts an invalid
proof is called the \emph{soundness} error.
\end{itemize}
One of the first questions one may ask is whether it is
possible to get rid of one or both types of error.
It is easy to see that forcing the soundness error
to zero collapses \NEDasFOP{} (and also \KRCqOJJJ) to \OYsGslWI{} \cite{Arora2009}.
So we can't eliminate the soundness error completely,
but it is known that we can make it to be at most an
inverse-exponential function of the input length, without
reducing the expressive power of \KRCqOJJJ{} or \NEDasFOP.
On the other hand, it was shown by Zachos and F\"{u}rer \cite{Zachos1987}
that having \emph{perfect completeness}, also called
as \emph{one-sided error}, doesn't change the power of \KRCqOJJJ{}.
More formally, it holds that $ \KRCqOJJJ = \HnNIoaAl $,
where \HnNIoaAl{} is the class with perfect completeness.
The class \NEDasFOP{} can also be made to have one-sided error,
which follows, for example, from the characterization of \NEDasFOP{}
being equal to \fGxhDjlY, the class of problems decidable
in polynomial space
\cite{Lund1992,Shamir1992,Shen1992}.
For more information on these classes see \djIkZXTK{} the book
of Arora and Barak \cite{Arora2009}.
\par
Quantum Merlin-Arthur proof systems (and the class \ihRQQChH)
were introduced by Knill \cite{Knill1996}, Kitaev
\cite{Kitaev2002}, and also by Watrous \cite{Watrous2000}
as a natural extension
of \KRCqOJJJ{} and \OYsGslWI{} to the quantum computational setting.
Similarly, \wRzSJcrL{}s (and the class \GUYckNnU) were introduced by
Watrous \cite{Watrous2003} as a quantum
analogue of \NEDasFOP.
These classes have also been well studied and now it's known
that the power of \wRzSJcrL{}s is the same as the classical ones,
\YRlgvUEY{} $ \GUYckNnU = \NEDasFOP = \fGxhDjlY $ \cite{Jain2010}.
Furthermore, \wRzSJcrL{}s still have the same expressive power
if we restrict the number of messages to three and
have exponentially small one-sided error \cite{Kitaev2000}.
\par
The class \ihRQQChH{} is not as well understood as \GUYckNnU, but we
do have a reasonable amount of knowledge about it.
We know from the early results that it can be made
to have exponentially small two-sided error
\cite{Kitaev2002,Aharonov2002,Marriott2005}.
It also has natural complete problems, such as the
`$k$-local Hamiltonian' problem \cite{Kitaev2002,Aharonov2002},
for $ k \geq 2 $ \cite{Kempe2006},
which can be thought of as a quantum analogue of \uvRBmLYf.
With respect to the relation of \ihRQQChH{} to classical
complexity classes, we know that
$ \KRCqOJJJ \subseteq \ihRQQChH \subseteq \FEZMUXFw $
\cite{Marriott2005}.\footnote{A slightly stronger bound
of $ \ihRQQChH \subseteq \wkGMBCsG $ was shown by Vyalyi \cite{Vyalyi2003}.}
There are also interesting generalizations of \ihRQQChH,
such as with multiple unentangled provers
\cite{Kobayashi2003,Aaronson2009,Harrow2010,Blier2009},
but we will not consider them in this paper.
\par
Interestingly, we don't know if $ \ihRQQChH \HFkfLWeq \medaFzJc $, \YRlgvUEY{}
whether \ihRQQChH{} can be made to have perfect completeness.
It is a long-standing open problem which was already mentioned
in an early survey by Aharonov and Naveh \cite{Aharonov2002}.
Besides its inherent importance, giving a positive
answer to it would immediately imply that the
\medaFzJc-complete problems are also complete for \ihRQQChH.
Most notable of these is the `Quantum \uvRBmLYf' problem
of Bravyi \cite{Bravyi2006}, for $ k \geq 3 $ \cite{Gosset2013},
which is considered as a more natural quantum generalization
of \uvRBmLYf{} than the $k$-local Hamiltonian problem.\footnote{For
a list of \ihRQQChH- and \medaFzJc-complete problems,
see \djIkZXTK{} \cite{Bookatz2012}.}
Unfortunately, all previous techniques used to show one-sided
error properties of quantum interactive proof systems
require adding extra messages to the protocol
\cite{Kitaev2000,Kempe2008,Kobayashi2013},
so they can't be used directly in \ihRQQChH.
Aaronson \cite{Aaronson2009a} gave an evidence that shows
that proving $ \ihRQQChH = \medaFzJc $ may be difficult.
He proved that there exists a quantum oracle relative to which
$ \ihRQQChH \neq \medaFzJc $.
Another difficulty with \ihRQQChH, compared to \KRCqOJJJ,
is that in a \ihRQQChH{} proof system the acceptance
probability can be an arbitrary irrational number.
However, if certain assumptions are made
about the maximum acceptance probability then \ihRQQChH{}
can be made to have one-sided error \cite{Nagaj2009}.
Recently, Jordan, Kobayashi, Nagaj, and Nishimura
\cite{Jordan2012} showed that if Merlin's proof is
classical (in which case the class is denoted by \vPhVIqeI),
then perfect completeness is achievable, \YRlgvUEY{}
it holds that $ \vPhVIqeI = \HvMKrrDT $.
In another variant of \ihRQQChH, where we have multiple
unentangled provers and exponentially or double-exponentially
small gap, we also know that
perfect completeness is achievable \cite{Pereszlenyi2012}.
The most recent and strongest result towards proving
the original \ihRQQChH{} versus \medaFzJc{} question is by
Kobayashi, Le Gall, and Nishimura \cite{Kobayashi2013}.
They showed that we can convert a \ihRQQChH{} proof system to have
one-sided error, if we allow the prover and the verifier
of the resulting \medaFzJc{} protocol to share a
constant number of EPR pairs before the prover sends the
proof to the verifier.
The corresponding class is denoted by \hxlhxXVG.
With this notation, their result can be formalized
as the following theorem.
\begin{theorem}[\cite{Kobayashi2013}]
\label{GtTrXBeA}
$ \displaystyle \ihRQQChH \subseteq \hxlhxXVG $.
\end{theorem}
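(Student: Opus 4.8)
The plan is to begin with an arbitrary $\ihRQQChH$ protocol and amplify it, by standard error reduction \cite{Marriott2005}, so that both its completeness and soundness errors are exponentially small in the input length. From the amplified verifier I extract two projectors acting on the proof register $\nuoeaWUW{P}$ and a workspace ancilla: the acceptance projector $\ulbwWGDE$, and the initialization projector $\IbAozRcX$ checking that the ancilla is set to $\LBGptwLj$. The new protocol then lets Arthur and Merlin share a constant number of EPR pairs, has Merlin send a polynomially long proof, and has Arthur run a fixed sequence of subroutines---the \gorjOgEG{}, the \LkIoSNVy{}, the \KlrOGnZA{}, the \hhqeGUPm{}, and the \mCDuGkWD{}---whose combined effect I will argue has perfect completeness and small soundness error.

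For completeness I would arrange the honest strategy so that the shared Bell pairs exactly cancel the phase that ordinarily keeps the acceptance probability below one. Rather than measuring $\ulbwWGDE$ outright---which for an honest proof succeeds only with probability exponentially close to, but not equal to, one---Arthur applies a reflection about a modified accept subspace with projector $\CGvPMhCL$, while Merlin helps simulate the complementary reflection $\dIKLUvxq$ using his halves of the EPR pairs. I would design $\CGvPMhCL$ and the honest proof so that the proof, together with the shared state $\hxCscLfe$ built from $\MEVIdGJb$ and $\VPzthEpq$, is stabilized by every subroutine; verifying acceptance with certainty is then a direct computation on this combined proof-and-EPR state.

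The real work is soundness, and I expect the main obstacle there. A cheating Merlin may entangle his proof both with his halves of the shared EPR pairs and across the many copies that the \hhqeGUPm{} forces him to supply. I would first use the \LkIoSNVy{} together with properties of the fidelity to show that, conditioned on passing, Merlin's claimed reflection is close in trace distance to a genuine reflection about $\CGvPMhCL$, any deviation being caught with probability proportional to its size, measured in the trace norm $\JjPvCTWX{\cdot}$. The \KlrOGnZA{} then certifies that the reduced state on the shared registers is close to the symmetric Bell mixture $\hxCscLfe$, which pins Merlin's remaining freedom down to the phase distinguishing $\MEVIdGJb$ from $\VPzthEpq$, while the \mCDuGkWD{} confines his proof to the intended subspace.

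Finally, to turn these per-subroutine guarantees into a bound on the overall acceptance probability I would invoke the quantum \ExDgacVq{} theorem. Because the \hhqeGUPm{} makes Merlin present many exchangeable copies, the theorem lets me replace the reduced state on a constant number of them by a mixture $\int \sigma^{\UPMtlSUt k}\, d\mu(\sigma)$ of i.i.d.\ states, up to an error that decays in the number of copies. For each single $\sigma$ the amplified soundness of the original protocol bounds the single-copy acceptance probability by an exponentially small quantity, and averaging over $\mu$ and folding in the trace-distance estimates from the two tests yields a soundness error bounded away from one. The delicate step---the one I expect to be hardest---is to control the interaction between the constant-size EPR registers and the \ExDgacVq{} approximation, choosing the number of copies and the test parameters so that the unavoidable \ExDgacVq{} error does not swamp the exponentially small single-copy bound; making these balance is precisely what lets a constant EPR-pair budget suffice.
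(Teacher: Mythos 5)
Your proposal has a genuine gap at its core: it never supplies the mechanism that makes \emph{exact} perfect completeness possible, and this mechanism is the entire content of the theorem. For $x \in L$ the maximum acceptance probability $p_x$ of the original verifier is an unknown, possibly irrational number. The reflection trick you invoke---apply $\aOIUkcRb - 2\Pi$ and test whether the state returns to the initial subspace $\Delta$---rejects the honest eigenstate with probability exactly $(1 - 2\lambda)^2$, where $\lambda$ is the corresponding eigenvalue of $\Delta \Pi \Delta$; this vanishes only when $\lambda$ is \emph{exactly} $1/2$. Your plan of amplifying completeness so that $p_x$ is exponentially close to $1$ and then reflecting about a modified accept subspace therefore backfires: with $\lambda \approx 1$ the reflection test rejects the honest proof with probability close to $1$, and no amplification can force $\lambda$ to equal any prescribed value, since $p_x$ is not under anyone's control. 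The actual proof keeps completeness only at $p_x \geq 1/2$ and has the \emph{prover}, who can compute $p_x$, calibrate the eigenvalue to exactly $1/2$ by contributing the rotation $\IwFBcXWd{q}$ with $q = 1/(2 p_x)$. Because Arthur knows neither $q$ nor how to realize $\IwFBcXWd{q}$ in his gate set, Merlin delivers it as the \dFUQloAO{}s $\qIsFHsNm{\XhMHOPJK{\IwFBcXWd{q}}}$ by acting on his halves of the shared EPR pairs; Arthur consumes one copy to prepare $\IwFBcXWd{q} \CARSeAIr{0}$ and another to apply $\XhMHOPJK{\IwFBcXWd{q}}$ by teleportation-style post-selection, and he must \emph{accept} whenever the post-selection fails, precisely to preserve perfect completeness. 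This is why the shared EPR pairs appear in the model at all; your phrase about Bell pairs ``cancelling the phase'' gestures at this but gives no construction, and without it the completeness claim fails.

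On soundness, your outline follows the original route of Kobayashi \hFGJtSId{} (\hhqeGUPm, \mCDuGkWD, \KlrOGnZA, then \ExDgacVq{} theorem) rather than the paper's more direct analysis of the post-selection and final measurement; that choice would be legitimate, but your quantitative plan is inverted. The exchangeable objects are the $\gAuWdsUU$ EPR-register pairs, and $\gAuWdsUU$ is a \emph{constant}, so the \ExDgacVq{} error (of order $\sqrt{32/\gAuWdsUU}$ after purification arguments) is a constant. It neither can nor needs to be balanced against the exponentially small soundness $4^{-n}$ of the amplified verifier: that exponential bound is used only to argue that the cheating state lies within trace distance $2^{-n}$ of the rejection subspace of $\XhMHOPJK{\LZfegDul} \ulbwWGDE \LZfegDul$. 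What the \ExDgacVq{} and \KlrOGnZA{} errors must be beaten by is the \emph{constant} success probability of the post-selection step (in the paper, $2^{-20}$), which is what yields a constant soundness gap with a constant EPR budget. Insisting, as you do, that the \ExDgacVq{} error ``not swamp the exponentially small single-copy bound'' would force the number of shared EPR pairs to grow with $n$, contradicting the very statement being proved.
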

Since sharing an EPR pair can be done by
the verifier preparing it and sending half of it
to the prover, the above result implies that \ihRQQChH{}
is contained in the class of languages provable by
one-sided error, two-message quantum interactive proof systems
($ \ihRQQChH \subseteq \mcpkmsYW{\qOrOiFzp}{2} $).
This is a nontrivial upper bound.
Moreover, a result of Beigi, Shor, and Watrous
\cite{Beigi2011} implies that equality in
\XsBWhkhM{GtTrXBeA} holds,
resulting in the following characterization of \ihRQQChH.
\begin{corollary}[\cite{Kobayashi2013}]
$ \displaystyle \ihRQQChH = \hxlhxXVG = \bYQpVZwG $.
\end{corollary}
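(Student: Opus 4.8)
The plan is to prove the corollary by closing the cycle of inclusions
\[
\ihRQQChH \subseteq \hxlhxXVG \subseteq \bYQpVZwG \subseteq \ihRQQChH ,
\]
which forces the three classes to coincide. The first inclusion is precisely \XsBWhkhM{GtTrXBeA}, established in this paper, so I would take it as given. The remaining two inclusions then split into one trivial containment and one that carries the real content.

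First I would dispose of the middle inclusion $\hxlhxXVG \subseteq \bYQpVZwG$. By definition, a language in \hxlhxXVG{} has a const-EPR protocol with \emph{zero} completeness error; since zero error trivially meets the weaker completeness requirement of a two-sided-error protocol, the very same verifier witnesses membership in \bYQpVZwG. Hence this inclusion holds with nothing further to check.

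The heart of the argument is the last inclusion $\bYQpVZwG \subseteq \ihRQQChH$, and here I would invoke the result of Beigi, Shor, and Watrous \cite{Beigi2011}. The key observation is that a const-EPR protocol is a special case of a two-message quantum interactive proof in which the verifier speaks first: Arthur prepares the constant number of EPR pairs, sends one half of each to Merlin as a first message of only constantly many qubits, and afterwards receives Merlin's (arbitrarily long) proof. Consequently \bYQpVZwG{} is contained in the class of languages having a $\mcpkmsYW{\GUYckNnU}{2}$ protocol whose first, verifier-to-prover, message is short. The Beigi--Shor--Watrous theorem shows that any such short-first-message two-message protocol can be simulated by an ordinary \ihRQQChH{} verifier, which gives $\bYQpVZwG \subseteq \ihRQQChH$ and completes the cycle.

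The main obstacle is not computational but one of careful modelling: one must verify that the const-EPR model genuinely embeds into the short-first-message two-message model to which the Beigi--Shor--Watrous simulation applies---in particular that sending a constant number of EPR halves is a legitimate short first message, and that the completeness and soundness parameters of the original protocol are preserved (up to the usual polynomially small slack) under their reduction. Granting this, the three inclusions chain together to yield $\ihRQQChH = \hxlhxXVG = \bYQpVZwG$, as claimed.
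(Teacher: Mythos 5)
Your proposal is correct and is essentially the paper's own justification: the paper likewise chains \XsBWhkhM{GtTrXBeA}, the trivial containment of \hxlhxXVG{} in \bYQpVZwG, and the Beigi--Shor--Watrous theorem applied after viewing the shared EPR pairs as a short (constant-length) verifier-to-prover first message in a two-message proof system. The only detail you gloss over is that the middle containment also needs the soundness error $1/2$ of the one-sided protocol driven below the two-sided threshold, which \cEgZOrfG{PpcvCrEf} (perfect parallel repetition) supplies.
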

The \emph{contribution of this paper} is a conceptually simpler
and more direct proof of \XsBWhkhM{GtTrXBeA},
compared to the original one by Kobayashi \hFGJtSId\ \cite{Kobayashi2013}.
The algorithm of our verifier is also simpler, but the
main difference is in the proof of its soundness.
We believe that our proof helps to understand the result
better and we think that it may be simplified further.
The description of the idea behind our proof
can be found in \BeRtWGMB{RAEGhdYj},
while the complete proof is presented in
\BeRtWGMB{TJORLhCK}.
\subsection*{Organization of the Paper}
The remainder of the paper is organized as follows.
\BeRtWGMB{ZmECVrtM} discusses the background
definitions, theorems, and lemmas needed to understand
our proof.
The proof itself is presented in \BeRtWGMB{xpIEIUYC},
starting with a high level description in
\BeRtWGMB{RAEGhdYj}, and then presenting
the detailed proof in \BeRtWGMB{TJORLhCK}.
\section{Preliminaries}
\label{ZmECVrtM}
We assume familiarity with quantum information \cite{Watrous2008} and
computation \cite{Nielsen2000}; such as
quantum states, unitary operators, measurements,
quantum super-operators, etc.
We also assume the reader is familiar with
computational complexity, both classical \cite{Arora2009}
and quantum \cite{Watrous2008a}.
The purpose of this section is to present the notations
and background information (definitions,
theorems) required to understand the rest of the paper.
In this paper we denote the imaginary unit by \XDGqcIBa{} instead of
$i$, which we use as an index in summations, for example.
When we talk about a quantum register \FyfkKUDd{R}
of size $k$, we mean the object made up of
$k$ qubits.
It has associated \TbXXHkNG{}
$ \nuoeaWUW{R} = \ynpHhbyi^{2^k} $.
We always assume that some standard basis of
$ \nuoeaWUW{R} = \ynpHhbyi^{2^k} $ have been fixed and
we index those basis vectors by bit strings of length $k$.
So the standard basis of \nuoeaWUW{R} is denoted by
\LtfmKMob{\CARSeAIr{s}}{s \in \dVhhxuvI{0,1}^k}.
We denote the all zero string by
$ \LBGptwLj \aJAxrBMB 00 \ldots 0 $.
Throughout the paper,
\poaESbLo{\nuoeaWUW{R}} denotes the space of all linear
mappings from \nuoeaWUW{R} to itself.
The set of all density operators on \nuoeaWUW{R}
is denoted by \sYllRVEw{\nuoeaWUW{R}}.
The adjoint of $ \BWHXbaAC{A} \in \poaESbLo{\nuoeaWUW{R}} $
is denoted by \XhMHOPJK{\BWHXbaAC{A}}.
\begin{definition}
The \emph{trace norm} of $ \BWHXbaAC{A} \in \poaESbLo{\nuoeaWUW{R}} $
is defined by
\[ \JjPvCTWX{\BWHXbaAC{A}} \aJAxrBMB
\gyBwVFHt{\sqrt{\XhMHOPJK{\BWHXbaAC{A}} \BWHXbaAC{A}}}
\text{,} \]
and the \emph{operator norm} of \BWHXbaAC{A} is
\[ \bjhdjWcQ{\BWHXbaAC{A}} \aJAxrBMB
\BuwGPLRs{\max}{\eNNesbxu{\BWHXbaAC{A} \CARSeAIr{\varphi}}}
{\CARSeAIr{\varphi} \in \nuoeaWUW{R}, \VslhEMHJ
\eNNesbxu{\varphi} = 1} \text{.} \]
\end{definition}
The following inequality is a special case of the
H\"{o}lder Inequality for Schatten norms.
\begin{lemma}
\label{qtWTtRce}
For any \TbXXHkNG{} \nuoeaWUW{H} and operators
$ \BWHXbaAC{A}, \BWHXbaAC{B} \in \poaESbLo{\nuoeaWUW{H}} $,
it holds that
\[ \UylNyihM{\gyBwVFHt{\XhMHOPJK{\BWHXbaAC{B}} \BWHXbaAC{A}}}
\leq \JjPvCTWX{\BWHXbaAC{A}} \cdot \bjhdjWcQ{\BWHXbaAC{B}} . \]
\end{lemma}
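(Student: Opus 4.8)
The plan is to reduce the inequality to the singular value decomposition of $\BWHXbaAC{A}$, after which it becomes a term-by-term application of the Cauchy--Schwarz inequality together with the definition of the operator norm.

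First I would write $\BWHXbaAC{A}$ in its singular value decomposition, $\BWHXbaAC{A} = \sum_i s_i \SCAQAFmU{u_i}{v_i}$, where the $s_i \geq 0$ are the singular values of $\BWHXbaAC{A}$ and $\dVhhxuvI{\CARSeAIr{u_i}}$, $\dVhhxuvI{\CARSeAIr{v_i}}$ are orthonormal families. Since $\sqrt{\XhMHOPJK{\BWHXbaAC{A}} \BWHXbaAC{A}}$ has eigenvalues exactly the $s_i$, the definition of the trace norm gives $\JjPvCTWX{\BWHXbaAC{A}} = \sum_i s_i$.

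Next, by linearity of the trace, $\gyBwVFHt{\XhMHOPJK{\BWHXbaAC{B}} \BWHXbaAC{A}} = \sum_i s_i \gyBwVFHt{\XhMHOPJK{\BWHXbaAC{B}} \SCAQAFmU{u_i}{v_i}} = \sum_i s_i \bymhjSLm{v_i} \XhMHOPJK{\BWHXbaAC{B}} \CARSeAIr{u_i}$, and the triangle inequality for the modulus yields $\UylNyihM{\gyBwVFHt{\XhMHOPJK{\BWHXbaAC{B}} \BWHXbaAC{A}}} \leq \sum_i s_i \UylNyihM{\bymhjSLm{v_i} \XhMHOPJK{\BWHXbaAC{B}} \CARSeAIr{u_i}}$. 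The final step bounds each of these inner products: taking the complex conjugate does not change the modulus, so $\UylNyihM{\bymhjSLm{v_i} \XhMHOPJK{\BWHXbaAC{B}} \CARSeAIr{u_i}} = \UylNyihM{\bymhjSLm{u_i} \BWHXbaAC{B} \CARSeAIr{v_i}}$, and by Cauchy--Schwarz this is at most $\eNNesbxu{u_i} \cdot \eNNesbxu{\BWHXbaAC{B} \CARSeAIr{v_i}}$. Because $\eNNesbxu{u_i} = \eNNesbxu{v_i} = 1$, the definition of the operator norm bounds $\eNNesbxu{\BWHXbaAC{B} \CARSeAIr{v_i}}$ by $\bjhdjWcQ{\BWHXbaAC{B}}$. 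Substituting back gives $\UylNyihM{\gyBwVFHt{\XhMHOPJK{\BWHXbaAC{B}} \BWHXbaAC{A}}} \leq \bjhdjWcQ{\BWHXbaAC{B}} \sum_i s_i = \JjPvCTWX{\BWHXbaAC{A}} \cdot \bjhdjWcQ{\BWHXbaAC{B}}$, as claimed.

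I do not expect a serious obstacle here, as the statement is a routine consequence of the singular value structure. The only points that demand care are keeping the adjoints and complex conjugations straight when moving $\XhMHOPJK{\BWHXbaAC{B}}$ across the inner product, and justifying that the singular values of $\BWHXbaAC{A}$ are exactly the eigenvalues of $\sqrt{\XhMHOPJK{\BWHXbaAC{A}} \BWHXbaAC{A}}$ entering the definition of $\JjPvCTWX{\BWHXbaAC{A}}$.
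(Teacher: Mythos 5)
Your proof is correct, but there is nothing in the paper to compare it against: the paper states \cEgZOrfG{qtWTtRce} without proof, citing it as a known special case of the H\"{o}lder inequality for Schatten norms. Your argument is the standard elementary proof of precisely this special case. Writing $ \BWHXbaAC{A} = \sum_i s_i \SCAQAFmU{u_i}{v_i} $ and observing $ \JjPvCTWX{\BWHXbaAC{A}} = \sum_i s_i $ is correct (indeed $ \XhMHOPJK{\BWHXbaAC{A}} \BWHXbaAC{A} = \sum_i s_i^2 \yAIHQVQC{v_i} $, so $ \sqrt{\XhMHOPJK{\BWHXbaAC{A}} \BWHXbaAC{A}} = \sum_i s_i \yAIHQVQC{v_i} $), the rank-one reduction $ \gyBwVFHt{\XhMHOPJK{\BWHXbaAC{B}} \BWHXbaAC{A}} = \sum_i s_i \bymhjSLm{v_i} \XhMHOPJK{\BWHXbaAC{B}} \CARSeAIr{u_i} $ is right, and the adjoint bookkeeping $ \UylNyihM{\bymhjSLm{v_i} \XhMHOPJK{\BWHXbaAC{B}} \CARSeAIr{u_i}} = \UylNyihM{\bymhjSLm{u_i} \BWHXbaAC{B} \CARSeAIr{v_i}} $ followed by Cauchy--Schwarz and the definition of \bjhdjWcQ{\BWHXbaAC{B}} closes the estimate. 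The only implicit hypothesis is that the singular value decomposition is a finite sum, i.e.\ that \nuoeaWUW{H} is finite-dimensional; in this paper every Hilbert space is of the form $ \ynpHhbyi^{2^k} $, so this is harmless, though you could note it explicitly. What your approach buys over the paper's is self-containedness: the paper leans on an external reference for the Schatten--H\"{o}lder inequality, whereas your two tools are nothing beyond the spectral theorem and Cauchy--Schwarz, which matches the expository spirit of the paper's preliminaries section.
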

The following definition is used to quantify the
distance between operators.
\begin{definition}
The \emph{trace distance} between operators
$ \BWHXbaAC{A}, \BWHXbaAC{B} \in \poaESbLo{\nuoeaWUW{H}} $
is defined as
\[ \Vohxuxyh{\BWHXbaAC{A}}{\BWHXbaAC{B}}
\aJAxrBMB \frac{\JjPvCTWX{\BWHXbaAC{A} - \BWHXbaAC{B}}}{2} . \]
If the operators represent pure quantum states,
\YRlgvUEY{} $ \BWHXbaAC{A} = \yAIHQVQC{\varphi} $
and $ \BWHXbaAC{B} = \yAIHQVQC{\psi} $, for some
$ \CARSeAIr{\varphi}, \CARSeAIr{\psi} \in \nuoeaWUW{H} $,
for which $ \eNNesbxu{\varphi} = \eNNesbxu{\psi} = 1 $,
then the trace distance
can be more conveniently written as
\begin{align}
\Vohxuxyh{\CARSeAIr{\varphi}}{\CARSeAIr{\psi}} =
\sqrt{1 - \UylNyihM{\mlpiBNcl{\varphi}{\psi}}^2} .
\label{uzRxUJas}
\end{align}
\end{definition}
Another way of quantifying the similarity between
density operators is by the fidelity defined below.
\begin{definition}
The \emph{fidelity} between
$ \rho, \sigma \in \sYllRVEw{\nuoeaWUW{H}} $
is defied as
\[ \DGaDmHJe{\rho}{\sigma} \aJAxrBMB
\JjPvCTWX{\sqrt{\rho} \sqrt{\sigma}} . \]
If $ \rho = \yAIHQVQC{\varphi} $ then the fidelity
can be more conveniently written as
\begin{align}
\DGaDmHJe{\yAIHQVQC{\varphi}}{\sigma}
= \sqrt{\bymhjSLm{\varphi} \sigma \CARSeAIr{\varphi}} .
\label{ZIjFQVNH}
\end{align}
\end{definition}
The following alternate characterization of the
fidelity will be useful later.
\begin{theorem}[Uhlmann's Theorem, see \djIkZXTK{} \cite{Watrous2008} for a proof]
\label{pQJzAmLP}
Let $ \rho, \sigma \in \sYllRVEw{\nuoeaWUW{H}} $ and
\nuoeaWUW{X} be a Hilbert space \GmcWftaL{}
$ \mcpkmsYW{\dim}{\nuoeaWUW{X}} \geq \mcpkmsYW{\dim}{\nuoeaWUW{H}} $.
Let $ \CARSeAIr{\varphi} \in \nuoeaWUW{X} \UPMtlSUt \nuoeaWUW{H} $
be any purification of $\rho$, \YRlgvUEY{}
$ \FbNPrGDa{\nuoeaWUW{X}}{\yAIHQVQC{\varphi}} = \rho $.
Then
\[ \DGaDmHJe{\rho}{\sigma} = \BuwGPLRs{\max}
{\UylNyihM{\mlpiBNcl{\varphi}{\psi}}}
{\CARSeAIr{\psi} \in \nuoeaWUW{X} \UPMtlSUt \nuoeaWUW{H}, \VslhEMHJ
\FbNPrGDa{\nuoeaWUW{X}}{\yAIHQVQC{\psi}} = \sigma} . \]
\end{theorem}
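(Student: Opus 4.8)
The plan is to reduce the statement to a single maximization over unitary operators, which is then solved by the polar (equivalently, singular-value) decomposition, with the easy inequality supplied by \cEgZOrfG{qtWTtRce}. The central device is the \emph{operator--vector correspondence}. After fixing orthonormal bases and an isometric embedding of $\nuoeaWUW{H}$ into $\nuoeaWUW{X}$ (possible since $\mcpkmsYW{\dim}{\nuoeaWUW{X}} \geq \mcpkmsYW{\dim}{\nuoeaWUW{H}}$), I would associate to each $\BWHXbaAC{A} \in \poaESbLo{\nuoeaWUW{H}}$, regarded as a map into $\nuoeaWUW{X}$, the vector $\CARSeAIr{v_{\BWHXbaAC{A}}} \aJAxrBMB \sum_i \BWHXbaAC{A}\CARSeAIr{i} \UPMtlSUt \CARSeAIr{i} \in \nuoeaWUW{X} \UPMtlSUt \nuoeaWUW{H}$, where $\LtfmKMob{\CARSeAIr{i}}{i}$ is the fixed basis of $\nuoeaWUW{H}$. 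Three elementary identities drive the argument: the inner-product rule $\mlpiBNcl{v_{\BWHXbaAC{A}}}{v_{\BWHXbaAC{B}}} = \gyBwVFHt{\XhMHOPJK{\BWHXbaAC{A}}\BWHXbaAC{B}}$; the covariance $\GTIJeHWG{\BWHXbaAC{U} \UPMtlSUt \aOIUkcRb}\CARSeAIr{v_{\BWHXbaAC{A}}} = \CARSeAIr{v_{\BWHXbaAC{U}\BWHXbaAC{A}}}$ under a unitary acting on the $\nuoeaWUW{X}$ factor; and the reduction rule that $\FbNPrGDa{\nuoeaWUW{X}}{\yAIHQVQC{v_{\BWHXbaAC{A}}}}$ equals $\XhMHOPJK{\BWHXbaAC{A}}\BWHXbaAC{A}$ up to an entrywise complex conjugation that is immaterial below because it leaves all singular values, hence all trace norms, unchanged. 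With the conjugation convention chosen appropriately, $\CARSeAIr{v_{\sqrt{\rho}}}$ and $\CARSeAIr{v_{\sqrt{\sigma}}}$ are then fixed purifications of $\rho$ and $\sigma$.

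Next I would establish the unitary freedom of purifications within this picture: $\CARSeAIr{v_{\BWHXbaAC{A}}}$ and $\CARSeAIr{v_{\BWHXbaAC{B}}}$ have the same reduced state on $\nuoeaWUW{H}$ exactly when $\XhMHOPJK{\BWHXbaAC{A}}\BWHXbaAC{A} = \XhMHOPJK{\BWHXbaAC{B}}\BWHXbaAC{B}$, which is equivalent to $\eNNesbxu{\BWHXbaAC{A}\CARSeAIr{w}} = \eNNesbxu{\BWHXbaAC{B}\CARSeAIr{w}}$ for all $\CARSeAIr{w}$, and hence to $\BWHXbaAC{B} = \BWHXbaAC{U}\BWHXbaAC{A}$ for some unitary $\BWHXbaAC{U}$ on $\nuoeaWUW{X}$ (extend, by polarization, the isometry $\BWHXbaAC{A}\CARSeAIr{w} \mapsto \BWHXbaAC{B}\CARSeAIr{w}$ from the range of $\BWHXbaAC{A}$ to all of $\nuoeaWUW{X}$). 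Together with the covariance identity this shows that every purification of $\sigma$ has the form $\GTIJeHWG{\BWHXbaAC{U} \UPMtlSUt \aOIUkcRb}\CARSeAIr{v_{\sqrt{\sigma}}}$. Writing the given purification of $\rho$ as $\CARSeAIr{\varphi} = \GTIJeHWG{\BWHXbaAC{V} \UPMtlSUt \aOIUkcRb}\CARSeAIr{v_{\sqrt{\rho}}}$ and applying the inner-product rule, the objective becomes $\UylNyihM{\mlpiBNcl{\varphi}{\psi}} = \UylNyihM{\gyBwVFHt{\sqrt{\rho}\,\BWHXbaAC{Y}\,\sqrt{\sigma}}}$, where $\BWHXbaAC{Y} = \XhMHOPJK{\BWHXbaAC{V}}\BWHXbaAC{U}$ ranges over all unitaries on $\nuoeaWUW{X}$ as $\CARSeAIr{\psi}$ ranges over the purifications of $\sigma$ (the fixed $\BWHXbaAC{V}$ being irrelevant).

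It then remains to evaluate $\max_{\BWHXbaAC{Y}} \UylNyihM{\gyBwVFHt{\BWHXbaAC{Y}\,\sqrt{\sigma}\sqrt{\rho}}}$ over unitaries $\BWHXbaAC{Y}$, using cyclicity of the trace. On one hand, \cEgZOrfG{qtWTtRce} applied with $\XhMHOPJK{\BWHXbaAC{B}} = \BWHXbaAC{Y}$ (so $\bjhdjWcQ{\BWHXbaAC{B}} = 1$) and $\BWHXbaAC{A} = \sqrt{\sigma}\sqrt{\rho}$ gives $\UylNyihM{\gyBwVFHt{\BWHXbaAC{Y}\,\sqrt{\sigma}\sqrt{\rho}}} \leq \JjPvCTWX{\sqrt{\sigma}\sqrt{\rho}}$ for every unitary. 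On the other hand, taking the singular-value decomposition $\sqrt{\sigma}\sqrt{\rho} = \sum_k s_k \SCAQAFmU{a_k}{b_k}$ and choosing $\BWHXbaAC{Y} = \sum_k \SCAQAFmU{b_k}{a_k}$ attains $\gyBwVFHt{\BWHXbaAC{Y}\,\sqrt{\sigma}\sqrt{\rho}} = \sum_k s_k = \JjPvCTWX{\sqrt{\sigma}\sqrt{\rho}}$. Hence the maximum equals $\JjPvCTWX{\sqrt{\sigma}\sqrt{\rho}} = \JjPvCTWX{\sqrt{\rho}\sqrt{\sigma}} = \DGaDmHJe{\rho}{\sigma}$, where the middle equality uses $\JjPvCTWX{\BWHXbaAC{M}} = \JjPvCTWX{\XhMHOPJK{\BWHXbaAC{M}}}$. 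Enlarging $\nuoeaWUW{X}$ beyond $\mcpkmsYW{\dim}{\nuoeaWUW{H}}$ only adds further purifications of $\sigma$ but cannot beat the \cEgZOrfG{qtWTtRce} bound, so the value is the same for every admissible $\nuoeaWUW{X}$, completing the proof.

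The main obstacle I anticipate is not the optimization---one direction is literally \cEgZOrfG{qtWTtRce} and the other is the one-line decomposition above---but the careful setup of the operator--vector correspondence and of the unitary-freedom lemma, in particular extending the isometry $\BWHXbaAC{A}\CARSeAIr{w}\mapsto\BWHXbaAC{B}\CARSeAIr{w}$ to a genuine unitary on all of $\nuoeaWUW{X}$ when $\mcpkmsYW{\dim}{\nuoeaWUW{X}}$ strictly exceeds the rank, and verifying that the conjugation conventions hidden in the reduction rule never affect a trace norm. A secondary point is to justify that fixing the purification $\CARSeAIr{\varphi}$ rather than also optimizing over it loses nothing, which holds because the claimed value $\DGaDmHJe{\rho}{\sigma}$ depends only on $\rho$ and $\sigma$.
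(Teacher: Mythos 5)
Your proof is correct, but there is nothing in the paper to compare it against: the paper does not prove this statement at all. Uhlmann's theorem is imported as background, stated with the attribution ``see, \djIkZXTK{} \cite{Watrous2008} for a proof,'' and is then only \emph{used} (in the proof of \cEgZOrfG{CpyQcgif}). Your argument---the operator--vector correspondence, the unitary freedom of purifications, and the evaluation of $ \max_{\BWHXbaAC{Y}} \UylNyihM{\gyBwVFHt{\BWHXbaAC{Y} \sqrt{\sigma} \sqrt{\rho}}} $ over unitaries via the singular-value decomposition, with the matching upper bound supplied by \cEgZOrfG{qtWTtRce}---is precisely the standard proof found in the cited reference, so in that indirect sense you have reproduced ``the paper's'' proof. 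The two points you flag as delicate are indeed the right ones and are handled correctly: the entrywise conjugation hidden in the reduction rule is harmless because conjugation preserves singular values and hence every trace norm appearing in the argument, and the isometry $ \BWHXbaAC{A} \CARSeAIr{w} \mapsto \BWHXbaAC{B} \CARSeAIr{w} $ extends to a unitary on all of \nuoeaWUW{X} because $ \XhMHOPJK{\BWHXbaAC{A}} \BWHXbaAC{A} = \XhMHOPJK{\BWHXbaAC{B}} \BWHXbaAC{B} $ forces the two ranges, and therefore their orthogonal complements in the finite-dimensional space \nuoeaWUW{X}, to have equal dimension. One cosmetic remark: when you take $ \BWHXbaAC{Y} = \sum_k \SCAQAFmU{b_k}{a_k} $, make sure the singular-value decomposition is written with full orthonormal bases of \nuoeaWUW{X} (allowing zero singular values), so that $\BWHXbaAC{Y}$ is genuinely unitary rather than a partial isometry.
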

We now list some properties of the trace distance.
\begin{lemma}[triangle inequality]
\label{BhoQIhMu}
For any $ \BWHXbaAC{A}, \BWHXbaAC{B}, \BWHXbaAC{C}
\in \poaESbLo{\nuoeaWUW{H}} $,
it holds that
\[ \Vohxuxyh{\BWHXbaAC{A}}{\BWHXbaAC{B}}
\leq \Vohxuxyh{\BWHXbaAC{A}}{\BWHXbaAC{C}}
+ \Vohxuxyh{\BWHXbaAC{C}}{\BWHXbaAC{B}} . \]
\end{lemma}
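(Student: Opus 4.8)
The plan is to reduce the claim to the subadditivity of the trace norm. By the definition of the trace distance, $\Vohxuxyh{\BWHXbaAC{A}}{\BWHXbaAC{B}} = \JjPvCTWX{\BWHXbaAC{A} - \BWHXbaAC{B}}/2$, and writing $\BWHXbaAC{A} - \BWHXbaAC{B} = \GTIJeHWG{\BWHXbaAC{A} - \BWHXbaAC{C}} + \GTIJeHWG{\BWHXbaAC{C} - \BWHXbaAC{B}}$, the desired inequality is exactly the triangle inequality $\JjPvCTWX{\BWHXbaAC{X} + \BWHXbaAC{Y}} \leq \JjPvCTWX{\BWHXbaAC{X}} + \JjPvCTWX{\BWHXbaAC{Y}}$ for the trace norm, applied to $\BWHXbaAC{X} = \BWHXbaAC{A} - \BWHXbaAC{C}$ and $\BWHXbaAC{Y} = \BWHXbaAC{C} - \BWHXbaAC{B}$ and then divided by $2$. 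So it suffices to prove that the trace norm is subadditive.

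First I would establish the dual (variational) characterization of the trace norm, namely that $\JjPvCTWX{\BWHXbaAC{X}} = \BuwGPLRs{\max}{\UylNyihM{\gyBwVFHt{\XhMHOPJK{\BWHXbaAC{U}}\BWHXbaAC{X}}}}{\bjhdjWcQ{\BWHXbaAC{U}} \leq 1}$. The inequality ``$\geq$'' here is immediate from the H\"{o}lder inequality already recorded in \cEgZOrfG{qtWTtRce}, which gives $\UylNyihM{\gyBwVFHt{\XhMHOPJK{\BWHXbaAC{U}}\BWHXbaAC{X}}} \leq \JjPvCTWX{\BWHXbaAC{X}} \cdot \bjhdjWcQ{\BWHXbaAC{U}} \leq \JjPvCTWX{\BWHXbaAC{X}}$ whenever $\bjhdjWcQ{\BWHXbaAC{U}} \leq 1$. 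For the reverse inequality I would exhibit a maximizer using the polar decomposition $\BWHXbaAC{X} = \BWHXbaAC{W}\sqrt{\XhMHOPJK{\BWHXbaAC{X}}\BWHXbaAC{X}}$, where $\BWHXbaAC{W}$ is a partial isometry with $\bjhdjWcQ{\BWHXbaAC{W}} \leq 1$; taking $\BWHXbaAC{U} = \BWHXbaAC{W}$ yields $\gyBwVFHt{\XhMHOPJK{\BWHXbaAC{W}}\BWHXbaAC{X}} = \gyBwVFHt{\sqrt{\XhMHOPJK{\BWHXbaAC{X}}\BWHXbaAC{X}}} = \JjPvCTWX{\BWHXbaAC{X}}$, attaining the bound.

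With the variational formula in hand, subadditivity follows in one line. Choosing $\BWHXbaAC{U}$ with $\bjhdjWcQ{\BWHXbaAC{U}} \leq 1$ that attains the maximum for $\BWHXbaAC{X} + \BWHXbaAC{Y}$, linearity of the trace and the ordinary triangle inequality for the modulus on $\ynpHhbyi$ give $\JjPvCTWX{\BWHXbaAC{X} + \BWHXbaAC{Y}} = \UylNyihM{\gyBwVFHt{\XhMHOPJK{\BWHXbaAC{U}}\BWHXbaAC{X}} + \gyBwVFHt{\XhMHOPJK{\BWHXbaAC{U}}\BWHXbaAC{Y}}} \leq \UylNyihM{\gyBwVFHt{\XhMHOPJK{\BWHXbaAC{U}}\BWHXbaAC{X}}} + \UylNyihM{\gyBwVFHt{\XhMHOPJK{\BWHXbaAC{U}}\BWHXbaAC{Y}}} \leq \JjPvCTWX{\BWHXbaAC{X}} + \JjPvCTWX{\BWHXbaAC{Y}}$, where the last step again uses \cEgZOrfG{qtWTtRce}.

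The only genuinely nontrivial ingredient is the achievability half of the variational characterization, which rests on the existence of the polar decomposition; everything else is bookkeeping. I expect this to be the main (though entirely standard) obstacle, and it is presumably the reason the H\"{o}lder inequality of \cEgZOrfG{qtWTtRce} was stated beforehand. An alternative route that avoids duality altogether would diagonalize $\sqrt{\XhMHOPJK{\GTIJeHWG{\BWHXbaAC{X}+\BWHXbaAC{Y}}}\GTIJeHWG{\BWHXbaAC{X}+\BWHXbaAC{Y}}}$ and invoke Ky~Fan's maximum principle for sums of singular values, but the dual formulation above is cleaner given the lemma already available.
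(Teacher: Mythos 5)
Your proof is correct, but there is nothing in the paper to compare it against: the paper states \cEgZOrfG{BhoQIhMu} without any proof, listing it as one of several standard, citation-level properties of the trace distance. Your argument is a complete and valid substitute. The reduction to subadditivity of the trace norm via $\BWHXbaAC{A} - \BWHXbaAC{B} = \GTIJeHWG{\BWHXbaAC{A} - \BWHXbaAC{C}} + \GTIJeHWG{\BWHXbaAC{C} - \BWHXbaAC{B}}$ is exactly right, the ``$\geq$'' half of your variational formula is indeed immediate from \cEgZOrfG{qtWTtRce}, and the achievability half works as you say: with the polar decomposition $\BWHXbaAC{X} = \BWHXbaAC{W}\sqrt{\XhMHOPJK{\BWHXbaAC{X}}\BWHXbaAC{X}}$ one has $\XhMHOPJK{\BWHXbaAC{W}}\BWHXbaAC{X} = \XhMHOPJK{\BWHXbaAC{W}}\BWHXbaAC{W}\sqrt{\XhMHOPJK{\BWHXbaAC{X}}\BWHXbaAC{X}} = \sqrt{\XhMHOPJK{\BWHXbaAC{X}}\BWHXbaAC{X}}$, since $\XhMHOPJK{\BWHXbaAC{W}}\BWHXbaAC{W}$ is the projector onto the support of $\sqrt{\XhMHOPJK{\BWHXbaAC{X}}\BWHXbaAC{X}}$; in finite dimensions the maximum in the variational formula is attained by compactness of the operator-norm unit ball, so the one-line subadditivity argument goes through. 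One small correction to your closing remark: \cEgZOrfG{qtWTtRce} was not placed in the paper to support the triangle inequality; it is used later, in the proof of \cEgZOrfG{ijiuOsQF}, to bound $\gyBwVFHt{\sigma_i^2} \leq \JjPvCTWX{\sigma_i} \cdot \bjhdjWcQ{\sigma_i}$. So your proof is more self-contained than the paper requires, at the cost of invoking the polar decomposition, which the paper never needs explicitly.
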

\begin{theorem}[Theorem~9.2 from \cite{Nielsen2000}]
\label{NQlmRARP}
Let \rRGRugxs{\Phi}{\poaESbLo{\nuoeaWUW{H}}}{\poaESbLo{\nuoeaWUW{K}}}
be a quantum super-operator (a completely
positive and trace preserving linear map)
and let $ \rho, \sigma \in \sYllRVEw{\nuoeaWUW{H}} $.
Then
\[ \Vohxuxyh{\mcpkmsYW{\Phi}{\rho}}{\mcpkmsYW{\Phi}{\sigma}}
\leq \Vohxuxyh{\rho}{\sigma} . \]
\end{theorem}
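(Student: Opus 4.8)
My plan is to establish the statement---which is the monotonicity of the trace distance under quantum channels---through the variational (state-discrimination) characterization of the trace norm, which in turn is an immediate consequence of the Hölder inequality in \cEgZOrfG{qtWTtRce}. Concretely, I would first show that for every Hermitian $\BWHXbaAC{X} \in \poaESbLo{\nuoeaWUW{H}}$,
\[ \JjPvCTWX{\BWHXbaAC{X}} = \max \dVhhxuvI{ \gyBwVFHt{\BWHXbaAC{H} \BWHXbaAC{X}} \lyfuwjUX \BWHXbaAC{H} = \XhMHOPJK{\BWHXbaAC{H}}, \VslhEMHJ \bjhdjWcQ{\BWHXbaAC{H}} \leq 1 } . \]
The bound ``$\geq$'' follows directly from \cEgZOrfG{qtWTtRce}, while equality is witnessed by taking $\BWHXbaAC{H}$ to be the difference of the orthogonal projectors onto the positive and negative eigenspaces of $\BWHXbaAC{X}$, for which $\gyBwVFHt{\BWHXbaAC{H} \BWHXbaAC{X}}$ equals the sum of the absolute values of the eigenvalues of $\BWHXbaAC{X}$, \YRlgvUEY{} $\JjPvCTWX{\BWHXbaAC{X}}$.

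Setting $\BWHXbaAC{X} \aJAxrBMB \rho - \sigma$, which is Hermitian, I would apply the formula to the (Hermitian) difference $\mcpkmsYW{\Phi}{\BWHXbaAC{X}} = \mcpkmsYW{\Phi}{\rho} - \mcpkmsYW{\Phi}{\sigma} \in \poaESbLo{\nuoeaWUW{K}}$ and fix an optimal test operator $\BWHXbaAC{H}$ on \nuoeaWUW{K}, so that $\JjPvCTWX{\mcpkmsYW{\Phi}{\BWHXbaAC{X}}} = \gyBwVFHt{\BWHXbaAC{H} \mcpkmsYW{\Phi}{\BWHXbaAC{X}}}$. I would then pass to the Heisenberg picture via the adjoint super-operator $\XhMHOPJK{\Phi}$, characterized by $\gyBwVFHt{\BWHXbaAC{H} \mcpkmsYW{\Phi}{\BWHXbaAC{Y}}} = \gyBwVFHt{\mcpkmsYW{\XhMHOPJK{\Phi}}{\BWHXbaAC{H}} \BWHXbaAC{Y}}$ for all \BWHXbaAC{Y}, rewriting the right-hand side as $\gyBwVFHt{\mcpkmsYW{\XhMHOPJK{\Phi}}{\BWHXbaAC{H}} \BWHXbaAC{X}}$.

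The crux of the argument, and the step I expect to be the main obstacle, is verifying that $\mcpkmsYW{\XhMHOPJK{\Phi}}{\BWHXbaAC{H}}$ is itself an admissible test operator on the input space, \YRlgvUEY{} Hermitian with $\bjhdjWcQ{\mcpkmsYW{\XhMHOPJK{\Phi}}{\BWHXbaAC{H}}} \leq 1$. Since $\Phi$ is completely positive, its adjoint $\XhMHOPJK{\Phi}$ is positive (hence preserves Hermiticity), and since $\Phi$ is trace preserving, $\XhMHOPJK{\Phi}$ is unital, $\mcpkmsYW{\XhMHOPJK{\Phi}}{\aOIUkcRb_{\nuoeaWUW{K}}} = \aOIUkcRb_{\nuoeaWUW{H}}$; a positive unital map sends the operator interval $-\aOIUkcRb \leq \BWHXbaAC{H} \leq \aOIUkcRb$ into itself, giving the desired norm bound. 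Applying the variational formula a second time, now to $\BWHXbaAC{X}$ on the input space, yields $\gyBwVFHt{\mcpkmsYW{\XhMHOPJK{\Phi}}{\BWHXbaAC{H}} \BWHXbaAC{X}} \leq \JjPvCTWX{\BWHXbaAC{X}}$, and dividing by $2$ gives $\Vohxuxyh{\mcpkmsYW{\Phi}{\rho}}{\mcpkmsYW{\Phi}{\sigma}} \leq \Vohxuxyh{\rho}{\sigma}$, as claimed. As an alternative that sidesteps the adjoint computation, I could instead invoke the Stinespring form $\mcpkmsYW{\Phi}{\rho} = \FbNPrGDa{\nuoeaWUW{E}}{\BWHXbaAC{U} \GTIJeHWG{\rho \UPMtlSUt \yAIHQVQC{0}} \XhMHOPJK{\BWHXbaAC{U}}}$ and reduce to the unitary invariance of the trace norm together with the contractivity of the partial trace, the latter again following in one line from the same variational formula by lifting the optimal test $\BWHXbaAC{H}$ to $\BWHXbaAC{H} \UPMtlSUt \aOIUkcRb_{\nuoeaWUW{E}}$.
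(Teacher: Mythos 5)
Your proof is correct, but note that the paper gives no proof of \XsBWhkhM{NQlmRARP} at all: the statement is imported wholesale, with attribution, as Theorem~9.2 of \cite{Nielsen2000}, so the only meaningful comparison is with that textbook proof, and your route is genuinely different from it. Nielsen and Chuang argue entirely on the output side: they write $ \rho - \sigma = \BWHXbaAC{Q} - \BWHXbaAC{S} $ with $ \BWHXbaAC{Q}, \BWHXbaAC{S} \geq 0 $ of orthogonal support, observe that $ \gyBwVFHt{\BWHXbaAC{Q}} = \gyBwVFHt{\BWHXbaAC{S}} = \Vohxuxyh{\rho}{\sigma} $, take a projector $ \BWHXbaAC{P} $ attaining $ \Vohxuxyh{\mcpkmsYW{\Phi}{\rho}}{\mcpkmsYW{\Phi}{\sigma}} = \gyBwVFHt{\BWHXbaAC{P} \GTIJeHWG{\mcpkmsYW{\Phi}{\rho} - \mcpkmsYW{\Phi}{\sigma}}} $, and bound this by $ \gyBwVFHt{\BWHXbaAC{P} \mcpkmsYW{\Phi}{\BWHXbaAC{Q}}} \leq \gyBwVFHt{\mcpkmsYW{\Phi}{\BWHXbaAC{Q}}} = \gyBwVFHt{\BWHXbaAC{Q}} $, using positivity and trace preservation of $\Phi$ directly. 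You instead work in the Heisenberg picture: you extract the variational formula $ \JjPvCTWX{\BWHXbaAC{X}} = \max \gyBwVFHt{\BWHXbaAC{H} \BWHXbaAC{X}} $ over Hermitian contractions from \cEgZOrfG{qtWTtRce}, pull the optimal test back through the adjoint map \XhMHOPJK{\Phi}, and verify admissibility via positivity plus unitality of \XhMHOPJK{\Phi}. All your steps check out: the achievability witness $ \BWHXbaAC{H} = \BWHXbaAC{P}_{+} - \BWHXbaAC{P}_{-} $, Hermiticity preservation by the adjoint of a positive map, and the operator-interval argument that $ \aOIUkcRb \pm \BWHXbaAC{H} \geq 0 $ forces $ \aOIUkcRb \pm \mcpkmsYW{\XhMHOPJK{\Phi}}{\BWHXbaAC{H}} \geq 0 $; your Stinespring alternative is likewise sound (modulo the trivial remark that tensoring with a fixed pure state preserves the trace norm). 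What each approach buys: the textbook proof is more elementary, never introducing the adjoint map; yours is self-contained relative to this paper, since its only external ingredient, \cEgZOrfG{qtWTtRce}, is already stated here, and it isolates the reusable fact that the adjoint of a channel is a positive unital map and hence maps the interval $ \YnJXqMTX{-\aOIUkcRb, \aOIUkcRb} $ into itself. Both arguments, as a bonus, establish monotonicity for all positive (not necessarily completely positive) trace-preserving maps.
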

\begin{lemma}
\label{XViefLtt}
Let $ \BWHXbaAC{A}, \BWHXbaAC{B} \in \poaESbLo{\nuoeaWUW{H}} $.
If $ 0 \leq \BWHXbaAC{B} $ and
$ \gyBwVFHt{\BWHXbaAC{B}} \leq \varepsilon $,
for some $ 0 \leq \varepsilon $, then
\[ \Vohxuxyh{\BWHXbaAC{A} + \BWHXbaAC{B}}{\BWHXbaAC{A}}
\leq \frac{\varepsilon}{2} . \]
\end{lemma}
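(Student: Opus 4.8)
The plan is to reduce the statement directly to the definition of the trace distance and then exploit the fact that, for a positive semidefinite operator, the trace norm and the trace coincide. First I would unfold the definition of the trace distance. Since $\BWHXbaAC{A}$ cancels in the difference, we get
\[ \Vohxuxyh{\BWHXbaAC{A} + \BWHXbaAC{B}}{\BWHXbaAC{A}}
= \frac{\JjPvCTWX{\GTIJeHWG{\BWHXbaAC{A} + \BWHXbaAC{B}} - \BWHXbaAC{A}}}{2}
= \frac{\JjPvCTWX{\BWHXbaAC{B}}}{2} , \]
so the entire claim reduces to showing $\JjPvCTWX{\BWHXbaAC{B}} \leq \varepsilon$. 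Note that no hypothesis on $\BWHXbaAC{A}$ is needed, which is consistent with the fact that it does not appear in the bound.

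The key step is to observe that, since $0 \leq \BWHXbaAC{B}$, the operator $\BWHXbaAC{B}$ is self-adjoint, so $\XhMHOPJK{\BWHXbaAC{B}} \BWHXbaAC{B} = \BWHXbaAC{B}^2$, and the unique positive semidefinite square root of $\BWHXbaAC{B}^2$ is $\BWHXbaAC{B}$ itself. I would justify this through the spectral decomposition: writing $\BWHXbaAC{B} = \sum_i \lambda_i \yAIHQVQC{i}$ with all $\lambda_i \geq 0$, the positive square root of $\BWHXbaAC{B}^2 = \sum_i \lambda_i^2 \yAIHQVQC{i}$ is $\sum_i \lambda_i \yAIHQVQC{i} = \BWHXbaAC{B}$. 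Hence
\[ \JjPvCTWX{\BWHXbaAC{B}}
= \gyBwVFHt{\sqrt{\XhMHOPJK{\BWHXbaAC{B}} \BWHXbaAC{B}}}
= \gyBwVFHt{\BWHXbaAC{B}} . \]

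Combining the two displays with the hypothesis $\gyBwVFHt{\BWHXbaAC{B}} \leq \varepsilon$ immediately yields
\[ \Vohxuxyh{\BWHXbaAC{A} + \BWHXbaAC{B}}{\BWHXbaAC{A}}
= \frac{\gyBwVFHt{\BWHXbaAC{B}}}{2}
\leq \frac{\varepsilon}{2} , \]
as desired. There is essentially no obstacle in this argument; the only point requiring a moment of care is the identity $\JjPvCTWX{\BWHXbaAC{B}} = \gyBwVFHt{\BWHXbaAC{B}}$, which is exactly where the positivity assumption $0 \leq \BWHXbaAC{B}$ is used. Without it the trace could be strictly smaller than the trace norm (indefinite operators can have cancelling eigenvalues), and the bound would fail, so I would make sure to emphasize that this is the sole role played by the positivity hypothesis.
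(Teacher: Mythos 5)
Your proposal is correct and follows exactly the same route as the paper's proof: cancel $\BWHXbaAC{A}$ in the definition of the trace distance and use $\sqrt{\XhMHOPJK{\BWHXbaAC{B}} \BWHXbaAC{B}} = \BWHXbaAC{B}$ to identify $\JjPvCTWX{\BWHXbaAC{B}}$ with $\gyBwVFHt{\BWHXbaAC{B}} \leq \varepsilon$. The only difference is that you spell out the spectral-decomposition justification of that identity, which the paper takes for granted.
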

\begin{proof}
From the definition of the trace norm and the trace
distance, together with the fact that
$ \sqrt{\XhMHOPJK{\BWHXbaAC{B}} \BWHXbaAC{B}} = \BWHXbaAC{B} $,
we get that
\begin{align*}
\Vohxuxyh{\BWHXbaAC{A} + \BWHXbaAC{B}}{\BWHXbaAC{A}}
&= \frac{\JjPvCTWX{\BWHXbaAC{A} + \BWHXbaAC{B} - \BWHXbaAC{A}}}{2} \\
&= \frac{\JjPvCTWX{\BWHXbaAC{B}}}{2} \\
&= \frac{\gyBwVFHt{\BWHXbaAC{B}}}{2} \\
&\leq \frac{\varepsilon}{2} . \GzyBBfAC
\end{align*}
\end{proof}
\begin{lemma}
\label{eJTZpBKB}
Let $ \rho, \sigma \in \sYllRVEw{\nuoeaWUW{H}} $
and $ 0 \leq \varepsilon < 1 $.
It holds that
\[ \Vohxuxyh{\GTIJeHWG{1 - \varepsilon} \rho
+ \varepsilon \sigma}{\rho} \leq \varepsilon . \]
\end{lemma}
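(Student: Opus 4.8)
The plan is to reduce the statement to the absolute homogeneity of the trace norm together with the elementary fact that the trace distance between two density operators never exceeds $1$. First I would record the algebraic identity
\[ \GTIJeHWG{1 - \varepsilon} \rho + \varepsilon \sigma - \rho = \varepsilon \GTIJeHWG{\sigma - \rho} , \]
so that, unwinding the definition of the trace distance,
\[ \Vohxuxyh{\GTIJeHWG{1 - \varepsilon} \rho + \varepsilon \sigma}{\rho}
= \frac{\JjPvCTWX{\varepsilon \GTIJeHWG{\sigma - \rho}}}{2}
= \varepsilon \cdot \Vohxuxyh{\sigma}{\rho} , \]
where I have used that the trace norm is absolutely homogeneous and that $0 \leq \varepsilon$. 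This isolates the entire dependence on $\varepsilon$ into a single scalar factor and leaves only the task of bounding the remaining trace distance.

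Next I would argue that $\Vohxuxyh{\sigma}{\rho} \leq 1$. The cleanest way to obtain this from the results already available is to apply the triangle inequality (\cEgZOrfG{BhoQIhMu}) with the intermediate point taken to be the zero operator $0 \in \poaESbLo{\nuoeaWUW{H}}$, which gives
\[ \Vohxuxyh{\sigma}{\rho} \leq \Vohxuxyh{\sigma}{0} + \Vohxuxyh{0}{\rho} . \]
Since $\rho$ and $\sigma$ are density operators, they are positive with unit trace, so $\JjPvCTWX{\rho} = \gyBwVFHt{\rho} = 1$ and likewise for $\sigma$; hence each term on the right equals $1/2$ and their sum is $1$. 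Combining this with the displayed equality above yields $\Vohxuxyh{\GTIJeHWG{1 - \varepsilon} \rho + \varepsilon \sigma}{\rho} \leq \varepsilon$, as claimed.

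I expect no genuine obstacle here: once the factor of $\varepsilon$ is pulled out, the argument is essentially a one-line computation. The only point deserving a little care is the justification of $\Vohxuxyh{\sigma}{\rho} \leq 1$, where one must invoke a bound that is actually established earlier rather than silently assumed; routing it through the triangle inequality against the zero operator keeps the proof self-contained. Alternatively one could bound $\JjPvCTWX{\sigma - \rho} \leq \JjPvCTWX{\sigma} + \JjPvCTWX{\rho} = 2$ directly, which is perhaps shorter but leans on a triangle inequality for the trace norm that is not separately stated in the preliminaries.
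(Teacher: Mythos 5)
Your proof is correct, but it takes a genuinely different route from the paper's. The paper splits the distance via the triangle inequality through the intermediate operator $\GTIJeHWG{1-\varepsilon}\rho$, bounding the first term by \cEgZOrfG{XViefLtt} (adding the small positive operator $\varepsilon\sigma$ costs at most $\varepsilon/2$) and computing the second term $\Vohxuxyh{\rho}{\GTIJeHWG{1-\varepsilon}\rho} = \varepsilon/2$ directly, so the bound arrives as $\varepsilon/2 + \varepsilon/2$. You instead factor the scalar out exactly, obtaining the identity $\Vohxuxyh{\GTIJeHWG{1-\varepsilon}\rho + \varepsilon\sigma}{\rho} = \varepsilon \cdot \Vohxuxyh{\sigma}{\rho}$, and then bound $\Vohxuxyh{\sigma}{\rho} \leq 1$ via the triangle inequality against the zero operator (legitimate here, since \cEgZOrfG{BhoQIhMu} is stated for arbitrary operators in \poaESbLo{\nuoeaWUW{H}}, and $\JjPvCTWX{\rho} = \gyBwVFHt{\rho} = 1$ for density operators). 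Your argument is tighter and more informative: it shows the distance \emph{equals} $\varepsilon \cdot \Vohxuxyh{\sigma}{\rho}$, so the bound $\varepsilon$ is saturated only when $\rho$ and $\sigma$ are perfectly distinguishable, and it needs nothing beyond norm homogeneity and the triangle inequality. What the paper's route buys in exchange is economy within its own toolkit: it reuses \cEgZOrfG{XViefLtt}, which it has already proved and deploys again later (in the proofs of Lemmas \ref{oBbNiOJY} and \ref{ijiuOsQF}), so no additional fact about trace distances of density operators needs to be invoked.
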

\begin{proof}
Using the triangle inequality (\cEgZOrfG{BhoQIhMu})
and \cEgZOrfG{XViefLtt}, we get that
\begin{align*}
\Vohxuxyh{\GTIJeHWG{1 - \varepsilon} \rho
+ \varepsilon \sigma}{\rho}
&\leq \Vohxuxyh{\GTIJeHWG{1 - \varepsilon} \rho
+ \varepsilon \sigma}{\GTIJeHWG{1 - \varepsilon} \rho}
+ \Vohxuxyh{\rho}{\GTIJeHWG{1 - \varepsilon} \rho} \\
&\leq \frac{\varepsilon}{2} +
\frac{\JjPvCTWX{\rho - \GTIJeHWG{1 - \varepsilon} \rho}}{2} \\
&= \frac{\varepsilon}{2} +
\frac{\gyBwVFHt{\varepsilon \rho}}{2} \\
&= \varepsilon . \GzyBBfAC
\end{align*}
\end{proof}
The following lemma will be used to quantify
how much a projective measurement changes a state.
It is a variant of Winter's gentle measurement
lemma \cite{Winter1999}.
\begin{lemma}[Lemma~4 from \cite{Jain2012a}]
\label{kvwuafiV}
Let $ \rho \in \sYllRVEw{\nuoeaWUW{H}} $ be a density operator
and $ \Pi \in \poaESbLo{\nuoeaWUW{H}} $ be a projector
\GmcWftaL{} $ \gyBwVFHt{\rho \Pi} < 1 $.
Then
\[ 1 - \gyBwVFHt{\rho \Pi} \leq \DGaDmHJe{\rho}
{\frac{\GTIJeHWG{\aOIUkcRb - \Pi} \rho \GTIJeHWG{\aOIUkcRb - \Pi}}
{\gyBwVFHt{\rho \GTIJeHWG{\aOIUkcRb - \Pi}}}}^2 . \]
\end{lemma}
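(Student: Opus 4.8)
The plan is to prove the stated lower bound on the fidelity directly via Uhlmann's Theorem (\XsBWhkhM{pQJzAmLP}), by exhibiting one well-chosen purification of the post-measurement state and evaluating its overlap with a purification of $\rho$. Write $p = \gyBwVFHt{\rho \Pi}$, so the hypothesis $\gyBwVFHt{\rho \Pi} < 1$ gives $1 - p > 0$, and the post-measurement density operator
\[ \rho' \aJAxrBMB \frac{\GTIJeHWG{\aOIUkcRb - \Pi} \rho \GTIJeHWG{\aOIUkcRb - \Pi}}{\gyBwVFHt{\rho \GTIJeHWG{\aOIUkcRb - \Pi}}} = \frac{\GTIJeHWG{\aOIUkcRb - \Pi} \rho \GTIJeHWG{\aOIUkcRb - \Pi}}{1 - p} \]
is well defined. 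The goal is to establish $1 - p \leq \DGaDmHJe{\rho}{\rho'}^2$, where I use that $\aOIUkcRb - \Pi$ is again a projector (Hermitian and idempotent) since $\Pi$ is.

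First I would fix a Hilbert space $\nuoeaWUW{X}$ with $\mcpkmsYW{\dim}{\nuoeaWUW{X}} \geq \mcpkmsYW{\dim}{\nuoeaWUW{H}}$ and a purification $\CARSeAIr{\varphi} \in \nuoeaWUW{X} \UPMtlSUt \nuoeaWUW{H}$ of $\rho$, so that $\FbNPrGDa{\nuoeaWUW{X}}{\yAIHQVQC{\varphi}} = \rho$. The key step is the observation that applying the projector on the $\nuoeaWUW{H}$-side alone already yields a purification of the unnormalized post-measurement operator: the vector $\GTIJeHWG{\aOIUkcRb_{\nuoeaWUW{X}} \UPMtlSUt \GTIJeHWG{\aOIUkcRb - \Pi}} \CARSeAIr{\varphi}$ has partial trace $\FbNPrGDa{\nuoeaWUW{X}}{\,\cdot\,} = \GTIJeHWG{\aOIUkcRb - \Pi} \rho \GTIJeHWG{\aOIUkcRb - \Pi}$ over $\nuoeaWUW{X}$, and its squared norm equals $\bymhjSLm{\varphi} \GTIJeHWG{\aOIUkcRb_{\nuoeaWUW{X}} \UPMtlSUt \GTIJeHWG{\aOIUkcRb - \Pi}} \CARSeAIr{\varphi} = \gyBwVFHt{\rho \GTIJeHWG{\aOIUkcRb - \Pi}} = 1 - p$, using idempotence of $\aOIUkcRb - \Pi$. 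Hence the normalized vector
\[ \CARSeAIr{\psi} \aJAxrBMB \frac{\GTIJeHWG{\aOIUkcRb_{\nuoeaWUW{X}} \UPMtlSUt \GTIJeHWG{\aOIUkcRb - \Pi}} \CARSeAIr{\varphi}}{\sqrt{1 - p}} \]
is a legitimate purification of $\rho'$, \YRlgvUEY{} $\FbNPrGDa{\nuoeaWUW{X}}{\yAIHQVQC{\psi}} = \rho'$.

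Then I would simply compute the overlap of the two purifications, which by the same idempotence reduces to the numerator $1 - p$:
\[ \mlpiBNcl{\varphi}{\psi} = \frac{\bymhjSLm{\varphi} \GTIJeHWG{\aOIUkcRb_{\nuoeaWUW{X}} \UPMtlSUt \GTIJeHWG{\aOIUkcRb - \Pi}} \CARSeAIr{\varphi}}{\sqrt{1 - p}} = \frac{1 - p}{\sqrt{1 - p}} = \sqrt{1 - p} . \]
Since $\CARSeAIr{\psi}$ is one particular purification of $\rho'$, Uhlmann's Theorem (which takes a maximum over all such purifications) gives $\DGaDmHJe{\rho}{\rho'} \geq \UylNyihM{\mlpiBNcl{\varphi}{\psi}} = \sqrt{1 - p}$, and squaring yields $\DGaDmHJe{\rho}{\rho'}^2 \geq 1 - p = 1 - \gyBwVFHt{\rho \Pi}$, as claimed.

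There is essentially no analytic obstacle here—no inequality to estimate and no limiting argument. The one point that requires care is recognizing the correct object to feed into Uhlmann's Theorem: the candidate $\GTIJeHWG{\aOIUkcRb_{\nuoeaWUW{X}} \UPMtlSUt \GTIJeHWG{\aOIUkcRb - \Pi}} \CARSeAIr{\varphi}$ works precisely because $\aOIUkcRb - \Pi$ is a projector, so both its self-overlap and its contribution to the overlap with $\CARSeAIr{\varphi}$ collapse to the same value $1 - p$. Once that choice is identified, the bound is immediate.
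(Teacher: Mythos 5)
Your proof is correct. Note, however, that the paper itself offers no proof of this statement to compare against: it is imported verbatim as Lemma~4 from \cite{Jain2012a} (a variant of Winter's gentle measurement lemma), and the citation is the paper's entire justification. So the only question is whether your argument stands on its own, and it does. Each step checks out: $\aOIUkcRb - \Pi$ is a projector, so $\GTIJeHWG{\aOIUkcRb_{\nuoeaWUW{X}} \UPMtlSUt \GTIJeHWG{\aOIUkcRb - \Pi}} \CARSeAIr{\varphi}$ has partial trace $\GTIJeHWG{\aOIUkcRb - \Pi} \rho \GTIJeHWG{\aOIUkcRb - \Pi}$ over $\nuoeaWUW{X}$ and squared norm $\gyBwVFHt{\rho \GTIJeHWG{\aOIUkcRb - \Pi}} = 1 - \gyBwVFHt{\rho \Pi} > 0$, the overlap computation collapses to $\sqrt{1 - \gyBwVFHt{\rho \Pi}}$ by idempotence, and Uhlmann's Theorem (\XsBWhkhM{pQJzAmLP}), which fixes an arbitrary purification of $\rho$ and maximizes over purifications of the second argument, turns that single witness into the lower bound $\DGaDmHJe{\rho}{\rho'} \geq \sqrt{1 - \gyBwVFHt{\rho \Pi}}$. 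This is the standard proof of the gentle measurement lemma, and it has the pleasant feature of being entirely self-contained within the paper's stated toolkit, relying only on \XsBWhkhM{pQJzAmLP} rather than on an external reference.
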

The following theorem gives a relation
between trace distance and fidelity.
\begin{theorem}[Fuchs-van de Graaf Inequalities, see \djIkZXTK{} \cite{Watrous2008} for a proof]
\label{gYublpMS}
For any $ \rho, \sigma \in \sYllRVEw{\nuoeaWUW{H}} $,
it holds that
\[ 1 - \Vohxuxyh{\rho}{\sigma}
\leq \DGaDmHJe{\rho}{\sigma} \leq
\sqrt{1 - \Vohxuxyh{\rho}{\sigma}^2} . \]
\end{theorem}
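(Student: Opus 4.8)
The plan is to reduce both inequalities to two easy special cases, pure states and classical probability distributions, and to glue them together using Uhlmann's Theorem and the contractivity of the trace distance. First I would record the pure-state picture: if $\rho = \yAIHQVQC{\varphi}$ and $\sigma = \yAIHQVQC{\psi}$, then \eqref{ZIjFQVNH} gives $\DGaDmHJe{\rho}{\sigma} = \UylNyihM{\mlpiBNcl{\varphi}{\psi}}$, while \eqref{uzRxUJas} gives $\Vohxuxyh{\CARSeAIr{\varphi}}{\CARSeAIr{\psi}} = \sqrt{1 - \UylNyihM{\mlpiBNcl{\varphi}{\psi}}^2}$. Writing $t = \UylNyihM{\mlpiBNcl{\varphi}{\psi}} \in \YnJXqMTX{0,1}$, the upper bound then holds with equality, and the lower bound reduces to the scalar inequality $1 - t \leq \sqrt{1 - t^2}$, which is immediate since $\GTIJeHWG{1-t}^2 = 1 - 2t + t^2 \leq 1 - t^2$ for $t \in \YnJXqMTX{0,1}$.

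For the upper bound $\DGaDmHJe{\rho}{\sigma} \leq \sqrt{1 - \Vohxuxyh{\rho}{\sigma}^2}$ in the general case, I would invoke Uhlmann's Theorem (\XsBWhkhM{pQJzAmLP}) to choose purifications $\CARSeAIr{\varphi}, \CARSeAIr{\psi}$ of $\rho$ and $\sigma$ in a common space $\nuoeaWUW{X} \UPMtlSUt \nuoeaWUW{H}$ with $\UylNyihM{\mlpiBNcl{\varphi}{\psi}} = \DGaDmHJe{\rho}{\sigma}$. Since the partial trace over $\nuoeaWUW{X}$ is a quantum super-operator, contractivity of the trace distance (\XsBWhkhM{NQlmRARP}) gives $\Vohxuxyh{\rho}{\sigma} \leq \Vohxuxyh{\CARSeAIr{\varphi}}{\CARSeAIr{\psi}} = \sqrt{1 - \DGaDmHJe{\rho}{\sigma}^2}$, using the pure-state identities above. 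Squaring and rearranging yields the claim.

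The lower bound $1 - \Vohxuxyh{\rho}{\sigma} \leq \DGaDmHJe{\rho}{\sigma}$ is the step I expect to be the main obstacle, because Uhlmann pushes the wrong way here: contractivity only bounds $\Vohxuxyh{\rho}{\sigma}$ from above by the pure-state distance, which cannot be chained into a lower bound on the fidelity. Instead I would pass to classical distributions through a measurement. Concretely, I would take the measurement $\dVhhxuvI{E_m}$ that attains the fidelity, so that the outcome distributions $p_m = \gyBwVFHt{E_m \rho}$ and $q_m = \gyBwVFHt{E_m \sigma}$ satisfy $\sum_m \sqrt{p_m q_m} = \DGaDmHJe{\rho}{\sigma}$; establishing the existence of this fidelity-optimal measurement is the one genuinely nontrivial ingredient. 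Because a measurement is again a super-operator, \XsBWhkhM{NQlmRARP} gives $\frac{1}{2}\sum_m \UylNyihM{p_m - q_m} \leq \Vohxuxyh{\rho}{\sigma}$. Finally the purely classical estimate
\[ 1 - \frac{1}{2}\sum_m \UylNyihM{p_m - q_m} = \sum_m \min\dVhhxuvI{p_m, q_m} \leq \sum_m \sqrt{p_m q_m} \]
(using $\sum_m p_m = \sum_m q_m = 1$, the identity $\min\dVhhxuvI{a,b} = \frac{1}{2}\GTIJeHWG{a+b} - \frac{1}{2}\UylNyihM{a-b}$, and $\min\dVhhxuvI{a,b} \leq \sqrt{ab}$) closes the chain $1 - \Vohxuxyh{\rho}{\sigma} \leq \sum_m \min\dVhhxuvI{p_m,q_m} \leq \sum_m \sqrt{p_m q_m} = \DGaDmHJe{\rho}{\sigma}$.
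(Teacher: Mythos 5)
You should first note that the paper itself contains no proof of this theorem: it is imported from the literature (cited to \cite{Watrous2008}), so your attempt can only be compared against the standard arguments. Your route is exactly the classical one (essentially Theorem~9.3.1 of \cite{Nielsen2000} and the original Fuchs--van de Graaf argument): handle pure states by direct computation, get the upper bound from Uhlmann's theorem plus contractivity, and get the lower bound by passing to the outcome distributions of a measurement. The upper-bound half is complete and correct using only tools the paper states: purifications with $\UylNyihM{\mlpiBNcl{\varphi}{\psi}} = \DGaDmHJe{\rho}{\sigma}$ exist by \XsBWhkhM{pQJzAmLP}, the partial trace is a super-operator so \XsBWhkhM{NQlmRARP} applies, and the pure-state identities \eqref{uzRxUJas} and \eqref{ZIjFQVNH} finish the computation. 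The classical tail of your lower bound, namely $1 - \frac{1}{2}\sum_m \UylNyihM{p_m - q_m} = \sum_m \min\dVhhxuvI{p_m, q_m} \leq \sum_m \sqrt{p_m q_m}$, is also correct.

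The genuine gap is the one you flagged yourself but did not fill: the existence of a measurement $\dVhhxuvI{E_m}$ with $\sum_m \sqrt{\gyBwVFHt{E_m \rho} \gyBwVFHt{E_m \sigma}} = \DGaDmHJe{\rho}{\sigma}$ is the Fuchs--Caves characterization of fidelity, a theorem of essentially the same depth as the statement being proven, and it is neither established in your argument nor available among the paper's stated tools. Without it the chain breaks in the needed direction: monotonicity of fidelity only gives $\sum_m \sqrt{p_m q_m} \geq \DGaDmHJe{\rho}{\sigma}$ for \emph{every} measurement, which is useless here. To close the gap you could exhibit the optimal measurement explicitly: for invertible $\sigma$, set $\BWHXbaAC{M} \aJAxrBMB \sigma^{-1/2} \sqrt{\sqrt{\sigma} \rho \sqrt{\sigma}} \, \sigma^{-1/2}$, so that $\rho = \BWHXbaAC{M} \sigma \BWHXbaAC{M}$, and measure in the eigenbasis of \BWHXbaAC{M}; each eigenvector \CARSeAIr{m} with eigenvalue $\lambda_m$ satisfies $\bymhjSLm{m} \rho \CARSeAIr{m} = \lambda_m^2 \bymhjSLm{m} \sigma \CARSeAIr{m}$, whence $\sum_m \sqrt{p_m q_m} = \gyBwVFHt{\BWHXbaAC{M} \sigma} = \JjPvCTWX{\sqrt{\rho}\sqrt{\sigma}} = \DGaDmHJe{\rho}{\sigma}$, and non-invertible $\sigma$ follows by a continuity argument. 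Alternatively, the lower bound follows from the Powers--St{\o}rmer inequality $\eNNesbxu{\sqrt{\rho} - \sqrt{\sigma}}_2^2 \leq \JjPvCTWX{\rho - \sigma}$ together with $\gyBwVFHt{\sqrt{\rho}\sqrt{\sigma}} \leq \DGaDmHJe{\rho}{\sigma}$, but that inequality is likewise nontrivial. Either way, some result of this calibre must be proven or cited; as written, the hardest step of your lower bound is assumed rather than established.
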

The following argument has appeared before,
for example in \cite{Beigi2011}.
We present it here as a separate lemma and
include its proof for convenience.
\begin{lemma}
\label{CpyQcgif}
Let $ 0 \leq \varepsilon \leq 1 $,
$ \rho \in \sYllRVEw{\nuoeaWUW{A} \UPMtlSUt \nuoeaWUW{B}} $,
and $ \sigma \in \sYllRVEw{\nuoeaWUW{B}} $.
If \[ \Vohxuxyh{\FbNPrGDa{\nuoeaWUW{A}}{\rho}}
{\sigma} \leq \varepsilon \]
then there exists a
$ \tau \in \sYllRVEw{\nuoeaWUW{A} \UPMtlSUt \nuoeaWUW{B}} $
for which
\begin{align*}
\FbNPrGDa{\nuoeaWUW{A}}{\tau} = \sigma
\qquad \text{and} \qquad
\Vohxuxyh{\rho}{\tau} \leq \sqrt{2 \varepsilon} .
\end{align*}
\end{lemma}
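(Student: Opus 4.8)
The plan is to chain three standard facts already available in the paper: the Fuchs--van de Graaf inequalities (\XsBWhkhM{gYublpMS}), Uhlmann's Theorem (\XsBWhkhM{pQJzAmLP}), and the monotonicity of the trace distance under quantum super-operators (\XsBWhkhM{NQlmRARP}). Write $\rho_{\nuoeaWUW{B}} \aJAxrBMB \FbNPrGDa{\nuoeaWUW{A}}{\rho}$ for the reduced state. The hypothesis $\Vohxuxyh{\rho_{\nuoeaWUW{B}}}{\sigma} \leq \varepsilon$ fed into the lower Fuchs--van de Graaf bound immediately gives $\DGaDmHJe{\rho_{\nuoeaWUW{B}}}{\sigma} \geq 1 - \varepsilon$, converting the closeness hypothesis into a fidelity lower bound, which is the form Uhlmann's Theorem consumes.

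Next I would introduce an auxiliary \TbXXHkNG{} \nuoeaWUW{X}, chosen large enough, and fix a purification $\CARSeAIr{\varphi} \in \nuoeaWUW{X} \UPMtlSUt \nuoeaWUW{A} \UPMtlSUt \nuoeaWUW{B}$ of $\rho$, \YRlgvUEY{} $\FbNPrGDa{\nuoeaWUW{X}}{\yAIHQVQC{\varphi}} = \rho$. The key observation is that the \emph{same} vector $\CARSeAIr{\varphi}$, regarded inside $\GTIJeHWG{\nuoeaWUW{X} \UPMtlSUt \nuoeaWUW{A}} \UPMtlSUt \nuoeaWUW{B}$, is a purification of $\rho_{\nuoeaWUW{B}}$ with $\nuoeaWUW{X} \UPMtlSUt \nuoeaWUW{A}$ serving as the purifying register (taking $\nuoeaWUW{X}$ large makes the dimension hypothesis of \XsBWhkhM{pQJzAmLP} automatic). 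Applying Uhlmann's Theorem to $\rho_{\nuoeaWUW{B}}$, $\sigma$, and this purification yields a purification $\CARSeAIr{\psi} \in \GTIJeHWG{\nuoeaWUW{X} \UPMtlSUt \nuoeaWUW{A}} \UPMtlSUt \nuoeaWUW{B}$ of $\sigma$ \GmcWftaL{} $\UylNyihM{\mlpiBNcl{\varphi}{\psi}} = \DGaDmHJe{\rho_{\nuoeaWUW{B}}}{\sigma} \geq 1 - \varepsilon$.

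I would then set $\tau \aJAxrBMB \FbNPrGDa{\nuoeaWUW{X}}{\yAIHQVQC{\psi}}$. Tracing out $\nuoeaWUW{X}$ from the defining property of $\CARSeAIr{\psi}$ gives $\FbNPrGDa{\nuoeaWUW{A}}{\tau} = \FbNPrGDa{\nuoeaWUW{X} \UPMtlSUt \nuoeaWUW{A}}{\yAIHQVQC{\psi}} = \sigma$, establishing the first required property. For the distance, note that both $\rho = \FbNPrGDa{\nuoeaWUW{X}}{\yAIHQVQC{\varphi}}$ and $\tau$ arise from the pure states $\yAIHQVQC{\varphi}$ and $\yAIHQVQC{\psi}$ via the \emph{same} partial-trace super-operator, so \XsBWhkhM{NQlmRARP} gives $\Vohxuxyh{\rho}{\tau} \leq \Vohxuxyh{\CARSeAIr{\varphi}}{\CARSeAIr{\psi}}$. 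The pure-state identity \eqref{uzRxUJas} then bounds the right-hand side by $\sqrt{1 - \GTIJeHWG{1 - \varepsilon}^2} = \sqrt{2 \varepsilon - \varepsilon^2} \leq \sqrt{2 \varepsilon}$, which is exactly the claimed bound.

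Since every ingredient is quoted directly, there is no single hard technical step; the one place to be careful — and the real content of the argument — is the regrouping of tensor factors so that Uhlmann's Theorem is applied to the \emph{reduced} state $\rho_{\nuoeaWUW{B}}$ with $\nuoeaWUW{X} \UPMtlSUt \nuoeaWUW{A}$ as purifying system, while the \emph{same} purification $\CARSeAIr{\varphi}$ still reconstructs $\rho$ upon tracing out $\nuoeaWUW{X}$ alone. This dual role is what lets the optimal $\CARSeAIr{\psi}$ simultaneously pin the reduction on $\nuoeaWUW{B}$ to $\sigma$ exactly and remain $\sqrt{2\varepsilon}$-close to $\rho$ after $\nuoeaWUW{X}$ is discarded.
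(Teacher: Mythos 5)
Your proposal is correct and follows essentially the same route as the paper's own proof: Fuchs--van de Graaf to convert the trace-distance hypothesis into a fidelity bound, Uhlmann's Theorem with $\nuoeaWUW{X} \UPMtlSUt \nuoeaWUW{A}$ as the purifying system (the same tensor-factor regrouping trick), $\tau \aJAxrBMB \FbNPrGDa{\nuoeaWUW{X}}{\yAIHQVQC{\psi}}$, and monotonicity of the trace distance plus the pure-state identity \eqref{uzRxUJas} for the final bound. No gaps; the argument matches the paper's step for step.
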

\begin{proof}
Let us take an auxiliary Hilbert space
$ \nuoeaWUW{X} \cong \nuoeaWUW{A} \UPMtlSUt \nuoeaWUW{B} $
and let $ \CARSeAIr{\varphi} \in \nuoeaWUW{X}
\UPMtlSUt \nuoeaWUW{A} \UPMtlSUt \nuoeaWUW{B} $ be a
purification of $\rho$, \YRlgvUEY{}
$ \FbNPrGDa{\nuoeaWUW{X}}{\yAIHQVQC{\varphi}} = \rho $.
We have that
\begin{align}
1 - \varepsilon &\leq 1 - \Vohxuxyh{
\FbNPrGDa{\nuoeaWUW{A}}{\rho}}{\sigma} \nonumber \\
&\leq \DGaDmHJe{\FbNPrGDa{\nuoeaWUW{A}}{\rho}}{\sigma}
\label{SuQkAthp} \\
&= \BuwGPLRs{\max}{\UylNyihM{\mlpiBNcl{\varphi}{\psi}}}
{\CARSeAIr{\psi} \in \nuoeaWUW{X} \UPMtlSUt \nuoeaWUW{A}
\UPMtlSUt \nuoeaWUW{B}, \VslhEMHJ
\FbNPrGDa{\nuoeaWUW{X} \UPMtlSUt \nuoeaWUW{A}}{\yAIHQVQC{\psi}}
= \sigma} \text{,}
\label{yeMdeEvS}
\end{align}
where \eqref{SuQkAthp} follows from
\XsBWhkhM{gYublpMS} and
\eqref{yeMdeEvS} follows
from \XsBWhkhM{pQJzAmLP}.
This means that there exists a
$ \CARSeAIr{\psi} \in \nuoeaWUW{X} \UPMtlSUt \nuoeaWUW{A} \UPMtlSUt \nuoeaWUW{B} $,
\GmcWftaL{} $ 1 - \varepsilon \leq
\UylNyihM{\mlpiBNcl{\varphi}{\psi}} $ and
$ \FbNPrGDa{\nuoeaWUW{X} \UPMtlSUt \nuoeaWUW{A}}{\yAIHQVQC{\psi}} = \sigma $.
Let
\[ \tau \aJAxrBMB \FbNPrGDa{\nuoeaWUW{X}}{\yAIHQVQC{\psi}} . \]
We only need to bound the distance between
$\rho$ and $\tau$.
\begin{align}
\Vohxuxyh{\rho}{\tau}
&\leq \Vohxuxyh{\CARSeAIr{\varphi}}{\CARSeAIr{\psi}}
\label{iqHFwUMb} \\
&= \sqrt{1 - \UylNyihM{\mlpiBNcl{\varphi}{\psi}}^2}
\label{KBmZvyLH} \\
&\leq \sqrt{1 - \GTIJeHWG{1 - \varepsilon}^2} \nonumber \\
&\leq \sqrt{2 \varepsilon} \text{,} \nonumber
\end{align}
where \eqref{iqHFwUMb} follows from
\XsBWhkhM{NQlmRARP}
and \eqref{KBmZvyLH}
follows from \eqref{uzRxUJas}.
\end{proof}
Throughout the paper we denote the identity operator
on some \TbXXHkNG{} \nuoeaWUW{H} by $ \aOIUkcRb_{\nuoeaWUW{H}} $ and
we sometimes omit the subscript if it is clear from the context.
We also use some well-known unitary operators
(also called quantum gates), such as the controlled-NOT
(\GECLKEJq) gate, the Hadamard gate (\eSvaqyjm), and the
Pauli operators (\FqTFSrwh, \RUYzfkFo, \WsoGCMAU).
The definition of these operators can be found in any
standard quantum textbook, for example in \cite{Nielsen2000}.
A key to our main algorithm will be the following
operator which will be used to reduce the acceptance
probability of a \ihRQQChH{} verifier to $1/2$.
The details will be explained later, but it is
convenient to define the operator here.
Let $ q \in \YnJXqMTX{0,1} $, then
\begin{align*}
\IwFBcXWd{q} \aJAxrBMB
\begin{bmatrix}
\sqrt{1 - q} & - \XDGqcIBa \sqrt{q} \\
- \XDGqcIBa \sqrt{q} & \sqrt{1 - q}
\end{bmatrix} .
\end{align*}
Note that \IwFBcXWd{q} corresponds to a rotation about
the $\hat{x}$ axes in the \vHMHwcRb{} and it is very
similar to the corresponding operator in \cite{Kobayashi2013}.
\par
We will use the following quantum states often
so it is convenient to introduce notations for them.
Let
\begin{align*}
\CARSeAIr{\MEVIdGJb} \aJAxrBMB
\frac{\CARSeAIr{0} + \CARSeAIr{1}}{\sqrt{2}}, \qquad
\CARSeAIr{\VPzthEpq} \aJAxrBMB
\frac{\CARSeAIr{0} - \CARSeAIr{1}}{\sqrt{2}}, \qquad
\CARSeAIr{\MEVIdGJb}, \CARSeAIr{\VPzthEpq} \in \ynpHhbyi^2 .
\end{align*}
Note that \CARSeAIr{\MEVIdGJb} and \CARSeAIr{\VPzthEpq} can
be obtained by applying \eSvaqyjm{} on \CARSeAIr{0} and \CARSeAIr{1}.
We will also use the Bell basis.
\begin{definition}
The following states form a basis of
$\ynpHhbyi^4$ and are called the \emph{Bell basis}.
\begin{align*}
\CARSeAIr{\Phi^{+}} &\aJAxrBMB \FGGdfwUg,
&\CARSeAIr{\Phi^{-}} &\aJAxrBMB \frac{\CARSeAIr{00} - \CARSeAIr{11}}{\sqrt{2}}, \\
\CARSeAIr{\Psi^{+}} &\aJAxrBMB \frac{\CARSeAIr{01} + \CARSeAIr{10}}{\sqrt{2}},
&\CARSeAIr{\Psi^{-}} &\aJAxrBMB \frac{\CARSeAIr{01} - \CARSeAIr{10}}{\sqrt{2}}.
\end{align*}
\end{definition}
The following theorem is used to eliminate
the entanglement between registers.
\begin{theorem}[quantum \ExDgacVq{} theorem \cite{Christandl2007};
this form is from \cite{Watrous2008}\footnote{Note
that this is not the general form of the theorem,
but this simplified version will be sufficient
for our proof.}]
\label{cmMerJQG}
Let $ \FyfkKUDd{X}_1, \ldots, \FyfkKUDd{X}_{\gAuWdsUU} $
be identical quantum registers,
each having associated space $\ynpHhbyi^2$,
and let $ \rho \in \sYllRVEw{ \ynpHhbyi^{2 \gAuWdsUU} } $
be the state of these registers.
Suppose that $\rho$ is invariant under the permutation
of the registers.
Then there exist a number $ m \in \FLbfwYzZ $,
a probability distribution
\LtfmKMob{p_i}{i \in \hdzgZeCW{m}},
and a collection of density operators
$ \LtfmKMob{\xi_i}{i \in \hdzgZeCW{m}} \subset \sYllRVEw{\ynpHhbyi^2} $
such that
\[ \JjPvCTWX{\FbNPrGDa{\nuoeaWUW{X}_3, \ldots, \nuoeaWUW{X}_{\gAuWdsUU}}{\rho}
- \sum_{i=1}^{m} p_i \xi_i \UPMtlSUt \xi_i}
< \frac{32}{\gAuWdsUU} . \]
\end{theorem}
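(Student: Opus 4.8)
The plan is to prove this finite quantum de Finetti theorem by the standard route of purifying the symmetric state into the symmetric subspace of a doubled local space and then exploiting a coherent-state resolution of the identity, specializing at the end to local dimension $d=2$ and $k=2$ reduced registers. First I would reduce to a pure state: a permutation-invariant $\rho$ on $(\mathbb{C}^2)^{\otimes \gAuWdsUU}$ admits a purification $\CARSeAIr{\psi}$ that itself lies in the symmetric subspace $\mathrm{Sym}^{\gAuWdsUU}(\mathbb{C}^2 \otimes \mathbb{C}^2)$ of $\gAuWdsUU$ copies of the doubled space $\mathbb{C}^4$; this is the familiar fact that exchangeable states have symmetric purifications. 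Writing the $j$-th doubled register as $\mathsf{X}_j \mathsf{Y}_j$, with $\mathsf{X}_j$ the original qubit and $\mathsf{Y}_j$ the purifying qubit, we have $\FbNPrGDa{\mathsf{Y}_1 \cdots \mathsf{Y}_{\gAuWdsUU}}{\yAIHQVQC{\psi}} = \rho$, so it suffices to understand the two-register marginal of a pure symmetric vector.

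Next I would invoke the coherent-state identity on the symmetric subspace: for $D = \dim \mathbb{C}^4 = 4$,
\[ \binom{\gAuWdsUU + D - 1}{\gAuWdsUU} \int \yAIHQVQC{\phi}^{\otimes \gAuWdsUU} \, d\phi = \Pi_{\mathrm{sym}}, \]
where $d\phi$ is the uniform measure on unit vectors of $\mathbb{C}^4$ and $\Pi_{\mathrm{sym}}$ projects onto $\mathrm{Sym}^{\gAuWdsUU}(\mathbb{C}^4)$. This furnishes a continuous POVM whose elements are proportional to products $\yAIHQVQC{\phi}^{\otimes(\gAuWdsUU-2)}$. Measuring registers $3, \ldots, \gAuWdsUU$ of $\CARSeAIr{\psi}$ with this POVM leaves, conditioned on outcome $\phi$, the first two doubled registers near $\yAIHQVQC{\phi}^{\otimes 2}$; tracing out $\mathsf{Y}_1 \mathsf{Y}_2$ then produces $\xi_\phi \otimes \xi_\phi$ with $\xi_\phi = \FbNPrGDa{\mathsf{Y}}{\yAIHQVQC{\phi}} \in \sYllRVEw{\mathbb{C}^2}$. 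Integrating over the outcome distribution yields a state of exactly the target shape $\int \xi_\phi \otimes \xi_\phi \, dp(\phi)$, which a standard discretization of the measure replaces by a finite convex combination $\sum_i p_i \xi_i \otimes \xi_i$ without degrading the bound.

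The quantitative heart of the argument, and the step I expect to be the main obstacle, is showing that this measure-and-prepare output is within trace distance $32/\gAuWdsUU$ of $\FbNPrGDa{\mathsf{X}_3, \ldots, \mathsf{X}_{\gAuWdsUU}}{\rho}$. This reduces to comparing the genuine two-register reduced operator with the one obtained from the coherent-state POVM, and the discrepancy is governed by the ratio of symmetric-subspace dimensions $\binom{\gAuWdsUU + D - 1}{\gAuWdsUU}$ and $\binom{\gAuWdsUU - 2 + D - 1}{\gAuWdsUU - 2}$. The delicate part is the combinatorial estimate on these binomial coefficients, which turns the overlap mismatch into an $\WrVYYpex{D^2 k / \gAuWdsUU}$ error; with $D = 4$ and $k = 2$ this is where the constant $32$ appears. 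I would then pass from the full doubled state to its qubit marginal using monotonicity of the trace distance under the partial trace (\XsBWhkhM{NQlmRARP}) and assemble the estimates with the triangle inequality (\cEgZOrfG{BhoQIhMu}). Since the paper cites this theorem from \cite{Christandl2007,Watrous2008} rather than reproving it, I would ultimately defer to that reference for the sharp tracking of the constant.
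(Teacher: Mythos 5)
The paper does not actually prove this statement: the quantum \ExDgacVq{} theorem is imported as a known external result, cited from \cite{Christandl2007} and (in this quantitative form) from \cite{Watrous2008}, and is used purely as a black box in the soundness analysis. Your sketch reproduces the standard proof from exactly those references---purification of the permutation-invariant state into the symmetric subspace of the doubled local space $\mathbb{C}^4$, the coherent-state resolution of the identity on that subspace, and the binomial-coefficient estimates yielding the $32/\gAuWdsUU$ bound for local dimension $2$ and two surviving registers---so it is consistent with, and strictly more detailed than, what the paper itself provides.
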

Later we will use the \KlrOGnZA{} of \cite{Barenco1997,Buhrman2001}
and the following property of this test.
\begin{theorem}[\cite{Buhrman2001,Kobayashi2003}]
\label{BNEmInIr}
When the \KlrOGnZA{} is applied to
$ \rho \UPMtlSUt \sigma $, where
$ \rho, \sigma \in \sYllRVEw{\nuoeaWUW{H}} $,
it succeeds with probability
\[ \frac{1 + \gyBwVFHt{\rho \sigma}}{2} . \]
\end{theorem}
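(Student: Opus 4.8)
The plan is to follow the control (ancilla) qubit of the \KlrOGnZA{} through the circuit, extract the effective measurement operator it induces on the two input registers, and then evaluate that operator on the product state $\rho \UPMtlSUt \sigma$. First I would recall the circuit realizing the \KlrOGnZA: prepare a control qubit in the state \CARSeAIr{0}, apply a Hadamard gate to it, apply a controlled-swap (Fredkin) gate that exchanges the two registers conditioned on the control being \CARSeAIr{1}, apply the Hadamard gate to the control again, and finally measure the control in the standard basis, declaring \emph{success} on outcome \CARSeAIr{0}.

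Next I would compute the operator $\BWHXbaAC{M} \aJAxrBMB \bymhjSLm{0} \eSvaqyjm \, \mathrm{cSWAP} \, \eSvaqyjm \CARSeAIr{0}$, obtained by initialising the control in \CARSeAIr{0}, conjugating through the two Hadamards and the controlled-swap, and projecting the control onto \bymhjSLm{0}; this acts on $ \nuoeaWUW{H} \UPMtlSUt \nuoeaWUW{H} $. Using $\eSvaqyjm \CARSeAIr{0} = \GTIJeHWG{\CARSeAIr{0} + \CARSeAIr{1}} / \sqrt{2}$ together with the fact that the controlled-swap acts as the identity on the $\CARSeAIr{0}$-branch and as the swap on the $\CARSeAIr{1}$-branch, a short computation gives $\BWHXbaAC{M} = \GTIJeHWG{\aOIUkcRb + \mathrm{SWAP}} / 2$, where $\mathrm{SWAP}$ is the swap operator on $ \nuoeaWUW{H} \UPMtlSUt \nuoeaWUW{H} $. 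Since $\mathrm{SWAP}$ is Hermitian with $\mathrm{SWAP}^2 = \aOIUkcRb$, this \BWHXbaAC{M} is precisely the projector onto the symmetric subspace, so $\XhMHOPJK{\BWHXbaAC{M}} \BWHXbaAC{M} = \BWHXbaAC{M}$ and the success probability on any input $\tau$ equals $\gyBwVFHt{\BWHXbaAC{M} \tau} = \GTIJeHWG{\gyBwVFHt{\tau} + \gyBwVFHt{\mathrm{SWAP} \cdot \tau}} / 2$.

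Finally, specializing to $\tau = \rho \UPMtlSUt \sigma$, I would apply the elementary ``swap trick'' identity $\gyBwVFHt{\mathrm{SWAP} \GTIJeHWG{\rho \UPMtlSUt \sigma}} = \gyBwVFHt{\rho \sigma}$, which yields the claimed success probability $\GTIJeHWG{1 + \gyBwVFHt{\rho \sigma}} / 2$. The only real care needed is the bookkeeping in conjugating the Hadamards and the Fredkin gate through the circuit, together with the convention that \CARSeAIr{0} is the accepting outcome; I expect the one genuinely content-bearing step to be the swap trick, which is most transparent to verify first on pure product states $\rho = \yAIHQVQC{\psi}$, $\sigma = \yAIHQVQC{\phi}$, where $\gyBwVFHt{\mathrm{SWAP} \GTIJeHWG{\rho \UPMtlSUt \sigma}} = \UylNyihM{\mlpiBNcl{\psi}{\phi}}^2 = \gyBwVFHt{\rho \sigma}$, and then to extend to general mixed states by linearity.
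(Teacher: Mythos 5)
Your proof is correct, but note that the paper itself offers no proof to compare against: Theorem~\ref{BNEmInIr} is imported as a known property of the \KlrOGnZA{}, cited to \cite{Buhrman2001,Kobayashi2003}. Your derivation---conjugating the control qubit through the Hadamard/controlled-swap/Hadamard circuit to obtain the Kraus operator $\GTIJeHWG{\aOIUkcRb + \mathrm{SWAP}}/2$, observing that this is the projector onto the symmetric subspace, and finishing with the identity $\gyBwVFHt{\mathrm{SWAP} \GTIJeHWG{\rho \UPMtlSUt \sigma}} = \gyBwVFHt{\rho \sigma}$---is exactly the standard argument given in those references, and every step, including the bilinearity argument extending the swap trick from pure product states to mixed states, is sound.
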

In order to perform the \KlrOGnZA, we need two Hadamard gates,
\WrVYYpex{\mcpkmsYW{\log}{\mcpkmsYW{\dim}{\nuoeaWUW{H}}}}-number of
\GECLKEJq{} gates,
and we need to measure a qubit in the standard basis.
\par
The following lemma will be the basic building block
to prove perfect completeness, similarly to \cite{Kobayashi2013}.
\begin{lemma}
\label{bszdFjVr}
Let $ \Delta, \Pi \in \poaESbLo{\nuoeaWUW{H}} $
be projectors.
Suppose that one of the eigenvalues of
$ \Delta \Pi \Delta $ is $1/2$ with
corresponding eigenstate \CARSeAIr{\omega}.
Then
\[ \Delta \GTIJeHWG{\aOIUkcRb - 2 \Pi} \Delta \CARSeAIr{\omega} = 0 . \]
\end{lemma}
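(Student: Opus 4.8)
The plan is to expand $\Delta \GTIJeHWG{\aOIUkcRb - 2 \Pi} \Delta$ using idempotency of $\Delta$ and then substitute the eigenvalue hypothesis. First I would write $\Delta \GTIJeHWG{\aOIUkcRb - 2 \Pi} \Delta = \Delta^2 - 2 \Delta \Pi \Delta = \Delta - 2 \Delta \Pi \Delta$, where I used $\Delta^2 = \Delta$. Applying this operator to $\CARSeAIr{\omega}$ and invoking the assumption $\Delta \Pi \Delta \CARSeAIr{\omega} = \frac{1}{2} \CARSeAIr{\omega}$ makes the second term collapse, leaving $\Delta \CARSeAIr{\omega} - \CARSeAIr{\omega}$. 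Thus the whole claim reduces to the single assertion that $\Delta \CARSeAIr{\omega} = \CARSeAIr{\omega}$, i.e.\ that $\CARSeAIr{\omega}$ lies in the range of the projector $\Delta$.

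The one genuinely substantive step is showing $\CARSeAIr{\omega} \in \mathrm{range}(\Delta)$, and this is forced precisely because the eigenvalue $1/2$ is \emph{nonzero}. From $\Delta \Pi \Delta \CARSeAIr{\omega} = \frac{1}{2} \CARSeAIr{\omega}$ I get $\CARSeAIr{\omega} = 2 \Delta \Pi \Delta \CARSeAIr{\omega} = \Delta \GTIJeHWG{2 \Pi \Delta \CARSeAIr{\omega}}$, so $\CARSeAIr{\omega}$ is manifestly $\Delta$ applied to some vector, hence in the image of $\Delta$; since a projector acts as the identity on its own range, $\Delta \CARSeAIr{\omega} = \CARSeAIr{\omega}$. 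Equivalently, and perhaps more cleanly to write up, I would just apply $\Delta$ to both sides of the eigenvalue equation: the left side becomes $\Delta^2 \Pi \Delta \CARSeAIr{\omega} = \Delta \Pi \Delta \CARSeAIr{\omega} = \frac{1}{2} \CARSeAIr{\omega}$, while the right side becomes $\frac{1}{2} \Delta \CARSeAIr{\omega}$, and comparing gives $\Delta \CARSeAIr{\omega} = \CARSeAIr{\omega}$ directly.

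Combining the two steps yields $\Delta \GTIJeHWG{\aOIUkcRb - 2 \Pi} \Delta \CARSeAIr{\omega} = \Delta \CARSeAIr{\omega} - 2 \cdot \frac{1}{2} \CARSeAIr{\omega} = \CARSeAIr{\omega} - \CARSeAIr{\omega} = 0$, which is exactly the desired conclusion. I do not anticipate any real obstacle: the computation is elementary once one notices that the nonzero eigenvalue pins $\CARSeAIr{\omega}$ into the range of $\Delta$. The only point demanding a little care is the meaning of \emph{projector}; the argument uses only idempotency ($\Delta^2 = \Delta$) and the fact that an idempotent fixes its range, so it goes through whether or not $\Delta$ is assumed self-adjoint, and in particular for the orthogonal projectors that appear later in the paper.
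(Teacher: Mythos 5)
Your proposal is correct and takes essentially the same route as the paper: expand $\Delta \GTIJeHWG{\aOIUkcRb - 2 \Pi} \Delta \CARSeAIr{\omega} = \GTIJeHWG{\Delta - 2 \Delta \Pi \Delta} \CARSeAIr{\omega}$ and conclude using the eigenvalue equation together with $\Delta \CARSeAIr{\omega} = \CARSeAIr{\omega}$. The only difference is that the paper asserts $\Delta \CARSeAIr{\omega} = \CARSeAIr{\omega}$ without comment, whereas you explicitly derive it from the eigenvalue $1/2$ being nonzero---a worthwhile clarification, but the same argument.
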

\begin{proof}
Using the fact that
$ \Delta \CARSeAIr{\omega} = \CARSeAIr{\omega} $,
we get
\begin{align*}
\Delta \GTIJeHWG{\aOIUkcRb - 2 \Pi} \Delta \CARSeAIr{\omega}
&= \GTIJeHWG{\Delta - 2 \Delta \Pi \Delta} \CARSeAIr{\omega} \\
&= \CARSeAIr{\omega} - 2 \GTIJeHWG{\frac{1}{2} \CARSeAIr{\omega}} \\
&= 0 . \GzyBBfAC
\end{align*}
\end{proof}
In \cite{Kobayashi2013}, the procedure defined by
applying $ \Delta \GTIJeHWG{\aOIUkcRb - 2 \Pi} \Delta $ is called
`\gorjOgEG'.
The procedure is very similar to the quantum rewinding
technique of Watrous \cite{Watrous2009}, which
has been used before to achieve perfect completeness
for quantum multi-prover interactive proofs \cite{Kempe2008}.
Also note that the idea behind the quantum rewinding
technique dates back to the strong gap amplification
for \ihRQQChH{} \cite{Marriott2005}.
\par
It should be mentioned here that \cEgZOrfG{bszdFjVr}
will only be used in the honest case, while in the
dishonest case we will argue about the rejection
probability directly.
This is why we can have a much simpler lemma
compared to the description of the \gorjOgEG{}
in \cite{Kobayashi2013}.
\subsection{\CTcThSvb{} Representations and Post-Selection}
\label{XoAYconU}
Let \rRGRugxs{\Phi}{\poaESbLo{\ynpHhbyi^2}}{\poaESbLo{\ynpHhbyi^2}}
be a quantum super-operator (a completely
positive and trace preserving linear map).
The normalized \dFUQloAO{} of $ \Phi $ is defined as\footnote{The
\dFUQloAO{} is obviously defined for any dimension, but
in this paper we will only need it for qubits, so
we will be fine with this restricted definition.}
\[ \rho_{\Phi} \aJAxrBMB
\frac{1}{2} \sum_{ x,y \in \dVhhxuvI{0,1} }
\mcpkmsYW{\Phi}{\SCAQAFmU{x}{y}} \UPMtlSUt \SCAQAFmU{x}{y} \text{,}
\qquad \rho_{\Phi} \in \sYllRVEw{\ynpHhbyi^4} \text{.} \]
Suppose we have an EPR pair (\CARSeAIr{\Phi^{+}}) in registers
\GTIJeHWG{\FyfkKUDd{S}, \FyfkKUDd{S}'}.
Then $\rho_{\Phi}$ can be generated by applying $\Phi$
on register \FyfkKUDd{S}.
If $\Phi$ is unitary, \YRlgvUEY{}
$ \mcpkmsYW{\Phi}{\sigma} = \XhMHOPJK{\BWHXbaAC{U}}
\sigma \BWHXbaAC{U} $, for some unitary operator
\BWHXbaAC{U}, then $\rho_{\Phi}$ is pure, in which
case we use the notation \qIsFHsNm{\BWHXbaAC{U}},
where $ \yAIHQVQC{\mcpkmsYW{J}{\BWHXbaAC{U}}} = \rho_{\Phi} $.
Let $ q \in \YnJXqMTX{0,1} $.
By simple calculation, we get that
\begin{align*}
\qIsFHsNm{\IwFBcXWd{q}} &=
\GTIJeHWG{\IwFBcXWd{q} \UPMtlSUt \aOIUkcRb} \CARSeAIr{\Phi^{+}} =
\sqrt{1 - q} \CARSeAIr{\Phi^{+}} - \XDGqcIBa \sqrt{q} \CARSeAIr{\Psi^{+}}
\text{,} \\
\qIsFHsNm{\XhMHOPJK{\IwFBcXWd{q}}} &=
\GTIJeHWG{\XhMHOPJK{\IwFBcXWd{q}} \UPMtlSUt \aOIUkcRb} \CARSeAIr{\Phi^{+}} =
\sqrt{1 - q} \CARSeAIr{\Phi^{+}} + \XDGqcIBa \sqrt{q} \CARSeAIr{\Psi^{+}} .
\end{align*}
\par
In \nvBBVayi{alg:QMA_one verifier}, on
page~\pageref{alg:QMA_one verifier}, we will be given two
copies of \qIsFHsNm{\XhMHOPJK{\IwFBcXWd{q}}} and we will
have to create the state $ \IwFBcXWd{q} \CARSeAIr{0} $ with
the help of the first copy.
Using the second copy, we will need to apply
\XhMHOPJK{\IwFBcXWd{q}} on an arbitrary input state.
The way these can be done is as follows.
Suppose now that we are given \qIsFHsNm{\XhMHOPJK{\IwFBcXWd{q}}}
and we want to create $ \IwFBcXWd{q} \CARSeAIr{0} $.
This can easily be done by applying the following unitary
\begin{align}
\cNXPyGJX \aJAxrBMB \SCAQAFmU{00}{\Phi^{+}}
- \SCAQAFmU{10}{\Psi^{+}} + \SCAQAFmU{01}{\Phi^{-}}
- \SCAQAFmU{11}{\Psi^{-}} \text{,}
\label{yDshCAyj}
\end{align}
because $ \cNXPyGJX \qIsFHsNm{\XhMHOPJK{\IwFBcXWd{q}}} =
\GTIJeHWG{\IwFBcXWd{q} \CARSeAIr{0}} \UPMtlSUt \CARSeAIr{0} $.
Now assume that we want to apply \XhMHOPJK{\IwFBcXWd{q}}
on an arbitrary state \CARSeAIr{\varphi}, with
the help of \qIsFHsNm{\XhMHOPJK{\IwFBcXWd{q}}}.
This can be accomplished with probability $1/2$
by a procedure that we call post-selection.
The procedure is described in \nvBBVayi{alg:post selection}.
Note that \nvBBVayi{alg:post selection} is basically
teleportation, where we want to teleport the state
of \FyfkKUDd{X} (let's say it's \CARSeAIr{\varphi})
to register \FyfkKUDd{S}.
If we get output \CARSeAIr{\Phi^{+}} then no correction is
needed in the teleportation.
Since \XhMHOPJK{\IwFBcXWd{q}} was applied to \FyfkKUDd{S}
before, we get $ \XhMHOPJK{\IwFBcXWd{q}} \CARSeAIr{\varphi} $
in \FyfkKUDd{S}.
If the output is \CARSeAIr{\Psi^{+}} then there is a
`Pauli-\FqTFSrwh{} error' in the teleportation
so we get $ \XhMHOPJK{\IwFBcXWd{q}} \FqTFSrwh \CARSeAIr{\varphi} $,
which we can correct since \XhMHOPJK{\IwFBcXWd{q}} and
\FqTFSrwh{} commute.
In case of the other two outputs (\CARSeAIr{\Phi^{-}}
and \CARSeAIr{\Psi^{-}}), there is a \RUYzfkFo{} or
a \WsoGCMAU{} error that we can't correct, so we
declare failure.
This idea of simulating a quantum operator with
\dFUQloAO{}s has appeared before in the context of
quantum interactive proof and quantum
Merlin-Arthur proof systems, such as in
Refs.~\cite{Beigi2011,Kobayashi2013}.
We state a lemma here that we will use in the
honest case.
In the dishonest case, we will
argue about the success probability and the
output of \nvBBVayi{alg:post selection} in the
analysis of \nvBBVayi{alg:QMA_one verifier}.
\UDODcOFW[tb]{Post-Selection}{post selection}{
\rNokgjXa single qubit registers \FyfkKUDd{S},
$\FyfkKUDd{S}'$, \FyfkKUDd{X}
\EfCTdHOd{\GTIJeHWG{\FyfkKUDd{S}, \FyfkKUDd{S}'} are supposed
to contain the state \qIsFHsNm{\XhMHOPJK{\IwFBcXWd{q}}}.}
\BeqmjZec success and \FyfkKUDd{S}, or failure
\STATE Perform a measurement in the Bell basis on
\GTIJeHWG{\FyfkKUDd{S}', \FyfkKUDd{X}}.
\IF{the output is \CARSeAIr{\Phi^{+}}}
\RETURN success and \FyfkKUDd{S}
\ELSIF{the output is \CARSeAIr{\Psi^{+}}}
\STATE Apply \FqTFSrwh{} on \FyfkKUDd{S}.
\RETURN success and \FyfkKUDd{S}
\ELSE
\RETURN failure
\ENDIF
}
\begin{lemma}
\label{dUOsaZJF}
Suppose that the inputs to \nvBBVayi{alg:post selection}
are \qIsFHsNm{\XhMHOPJK{\IwFBcXWd{q}}} in
\GTIJeHWG{\FyfkKUDd{S}, \FyfkKUDd{S}'}, for some
$ q \in \YnJXqMTX{0,1} $, and an arbitrary \CARSeAIr{\varphi}
in \FyfkKUDd{X}.
Then the algorithm will succeed with probability
$1/2$ and in that case it will output
$ \XhMHOPJK{\IwFBcXWd{q}} \CARSeAIr{\varphi} $ in \FyfkKUDd{S}.
\end{lemma}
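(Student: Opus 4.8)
The plan is to recognize \nvBBVayi{alg:post selection} as ordinary quantum teleportation run on an entangled resource that has been pre-rotated by \XhMHOPJK{\IwFBcXWd{q}}. First I would write the joint input state of the three registers as
\[ \CARSeAIr{\Xi} \aJAxrBMB \GTIJeHWG{\XhMHOPJK{\IwFBcXWd{q}} \UPMtlSUt \aOIUkcRb} \CARSeAIr{\Phi^{+}} \UPMtlSUt \CARSeAIr{\varphi} \text{,} \]
where the first two tensor factors are the registers $\GTIJeHWG{\FyfkKUDd{S}, \FyfkKUDd{S}'}$ holding $\qIsFHsNm{\XhMHOPJK{\IwFBcXWd{q}}} = \GTIJeHWG{\XhMHOPJK{\IwFBcXWd{q}} \UPMtlSUt \aOIUkcRb} \CARSeAIr{\Phi^{+}}$ (so \XhMHOPJK{\IwFBcXWd{q}} sits on \FyfkKUDd{S} and \aOIUkcRb{} on \FyfkKUDd{S}'), and the last factor is \FyfkKUDd{X}. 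The Bell measurement performed by the algorithm acts only on $\GTIJeHWG{\FyfkKUDd{S}', \FyfkKUDd{X}}$, so it commutes with the operator \XhMHOPJK{\IwFBcXWd{q}} on \FyfkKUDd{S}. I can therefore first analyze the teleportation of \CARSeAIr{\varphi} through a bare EPR pair \CARSeAIr{\Phi^{+}}, and only afterwards re-apply \XhMHOPJK{\IwFBcXWd{q}} to the \FyfkKUDd{S}-register of each measurement branch.

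Next I would expand $\CARSeAIr{\Phi^{+}} \UPMtlSUt \CARSeAIr{\varphi}$ in the Bell basis of $\GTIJeHWG{\FyfkKUDd{S}', \FyfkKUDd{X}}$. Writing $\CARSeAIr{\varphi} = \alpha \CARSeAIr{0} + \beta \CARSeAIr{1}$ and substituting the standard-basis forms of the Bell states, a routine regrouping yields the familiar teleportation identity: the four outcomes \CARSeAIr{\Phi^{+}}, \CARSeAIr{\Phi^{-}}, \CARSeAIr{\Psi^{+}}, \CARSeAIr{\Psi^{-}} on $\GTIJeHWG{\FyfkKUDd{S}', \FyfkKUDd{X}}$ each occur with probability $1/4$ and leave \FyfkKUDd{S} in $\CARSeAIr{\varphi}$, $\RUYzfkFo \CARSeAIr{\varphi}$, $\FqTFSrwh \CARSeAIr{\varphi}$, and $\FqTFSrwh \RUYzfkFo \CARSeAIr{\varphi}$ (the last up to a global phase), respectively. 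Re-applying \XhMHOPJK{\IwFBcXWd{q}} as above, the \FyfkKUDd{S}-states of the four branches become $\XhMHOPJK{\IwFBcXWd{q}} \CARSeAIr{\varphi}$, $\XhMHOPJK{\IwFBcXWd{q}} \RUYzfkFo \CARSeAIr{\varphi}$, $\XhMHOPJK{\IwFBcXWd{q}} \FqTFSrwh \CARSeAIr{\varphi}$, and $\XhMHOPJK{\IwFBcXWd{q}} \FqTFSrwh \RUYzfkFo \CARSeAIr{\varphi}$.

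Then I would read off the behavior of the algorithm. Success is declared exactly for the two outcomes \CARSeAIr{\Phi^{+}} and \CARSeAIr{\Psi^{+}}, so the success probability is $1/4 + 1/4 = 1/2$. On \CARSeAIr{\Phi^{+}} no correction is applied and \FyfkKUDd{S} already carries $\XhMHOPJK{\IwFBcXWd{q}} \CARSeAIr{\varphi}$. On \CARSeAIr{\Psi^{+}} the algorithm applies \FqTFSrwh{} to \FyfkKUDd{S}, producing $\FqTFSrwh \XhMHOPJK{\IwFBcXWd{q}} \FqTFSrwh \CARSeAIr{\varphi}$. The crucial step is the commutation $\FqTFSrwh \XhMHOPJK{\IwFBcXWd{q}} = \XhMHOPJK{\IwFBcXWd{q}} \FqTFSrwh$, which holds because $\IwFBcXWd{q} = \sqrt{1 - q} \, \aOIUkcRb - \XDGqcIBa \sqrt{q} \, \FqTFSrwh$ is a linear combination of \aOIUkcRb{} and \FqTFSrwh{}, so the same is true of its adjoint \XhMHOPJK{\IwFBcXWd{q}}. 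Together with $\FqTFSrwh^2 = \aOIUkcRb$ this gives $\FqTFSrwh \XhMHOPJK{\IwFBcXWd{q}} \FqTFSrwh \CARSeAIr{\varphi} = \XhMHOPJK{\IwFBcXWd{q}} \CARSeAIr{\varphi}$. Hence both success branches output $\XhMHOPJK{\IwFBcXWd{q}} \CARSeAIr{\varphi}$ in \FyfkKUDd{S}, which is exactly the claim.

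There is no genuine difficulty in this argument; the one point that requires care is precisely the Pauli-\FqTFSrwh{} commutation above, since it is what makes the single correction in the \CARSeAIr{\Psi^{+}} branch legitimate, and the remainder is standard Bell-basis bookkeeping. As a sanity check, the two failure outcomes \CARSeAIr{\Phi^{-}} and \CARSeAIr{\Psi^{-}} carry a \RUYzfkFo{} error (equivalently a \WsoGCMAU{} error, since $\FqTFSrwh \RUYzfkFo = - \XDGqcIBa \WsoGCMAU$) that does not commute through \XhMHOPJK{\IwFBcXWd{q}} and cannot be undone by a Pauli-\FqTFSrwh{} correction, consistent with the algorithm declaring failure there.
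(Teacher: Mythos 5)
Your proposal is correct and takes essentially the same approach as the paper: the paper states this lemma without a formal proof, justifying it beforehand by exactly the same argument---that \nvBBVayi{alg:post selection} is teleportation through a resource state pre-rotated by \XhMHOPJK{\IwFBcXWd{q}}, where the \CARSeAIr{\Phi^{+}} outcome needs no correction and the \CARSeAIr{\Psi^{+}} outcome's Pauli-\FqTFSrwh{} error is correctable precisely because \FqTFSrwh{} commutes with \XhMHOPJK{\IwFBcXWd{q}}, while the \CARSeAIr{\Phi^{-}} and \CARSeAIr{\Psi^{-}} outcomes carry uncorrectable errors. Your explicit Bell-basis expansion merely fills in the bookkeeping that the paper leaves implicit in calling the procedure ``basically teleportation.''
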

\subsection{Quantum Merlin-Arthur Proof Systems}
\label{oTrRpwia}
Before we define the complexity class \ihRQQChH,
let us briefly describe what we mean by
polynomial-time quantum algorithms or quantum verifiers.
Quantum verifiers are polynomial-time uniformly
generated quantum circuits consisting of some
universal set of gates.
There are many different universal sets and we assume
that one of them has been chosen beforehand.
Usually it doesn't matter which set we choose when we
define quantum verifiers and classes like \DYmPyZgR{} or \ihRQQChH,
because it is known that each universal set can approximate
any other set with exponential precision.
However, in the paper we will have quantum proof
systems with one-sided error, in which case the
gate set may matter.
This is because simulating one set of gates with
another may ruin the one-sided error property.
In this paper, we only assume that the verifier
can perform or perfectly simulate the \GECLKEJq{}
and the \eSvaqyjm{} gate with his universal set,
besides being able to perform any
polynomial-time classical computation.
Note that with \GECLKEJq{} and \eSvaqyjm, one can perform
all Pauli operators, as well as operator \cNXPyGJX,
defined by Eq.~\eqref{yDshCAyj}.
The above assumption is enough for our result, so
we won't bother about the gate set in the rest of the paper.
\begin{definition}[\cite{Watrous2000,Aharonov2002}]
\label{IQpYyyBm}
For functions \rRGRugxs{c, s}{\FLbfwYzZ}{ \left( 0, 1 \right] },
a language $L$ is in \pbpWqYJX{c}{s} if there
exists a quantum verifier \jYpQwXZd{V} with
the following properties.
For all $ n \in \FLbfwYzZ $ and inputs $ x \in \dVhhxuvI{0,1}^n $,
the circuit of \jYpQwXZd{V} on input $x$,
denoted by \LZfegDul{}, is a polynomial-time uniformly
generated quantum circuit acting on two
polynomial-size registers \FyfkKUDd{P} and \FyfkKUDd{A}.
One output qubit of \LZfegDul{} is designated as the
acceptance qubit.
We say that \LZfegDul{} on input $ \CARSeAIr{\varphi}_{\nuoeaWUW{P}}
\UPMtlSUt \CARSeAIr{\LBGptwLj}_{\nuoeaWUW{A}} $ accepts if
the acceptance qubit of $ \LZfegDul \GTIJeHWG{\CARSeAIr{\varphi}_{\nuoeaWUW{P}}
\UPMtlSUt \CARSeAIr{\LBGptwLj}_{\nuoeaWUW{A}}} $ is
projected to \CARSeAIr{1}
and we say that \LZfegDul{} rejects if it's projected to \CARSeAIr{0}.
\LZfegDul{} must satisfy the following properties.
\begin{description}
\item[Completeness.]
If $x \in L$ then there exists a quantum state
$ \CARSeAIr{\varphi} \in \nuoeaWUW{P} $ \GmcWftaL{} the acceptance
probability of \LZfegDul, on input
$ \CARSeAIr{\varphi} \UPMtlSUt \CARSeAIr{\LBGptwLj}_{\nuoeaWUW{A}} $,
is at least \mcpkmsYW{c}{n}.
\item[Soundness.]
If $x \notin L$ then for all states
$ \CARSeAIr{\varphi} \in \nuoeaWUW{P} $, \LZfegDul{}
accepts with probability at most \mcpkmsYW{s}{n},
given $ \CARSeAIr{\varphi} \UPMtlSUt \CARSeAIr{\LBGptwLj}_{\nuoeaWUW{A}} $
as its input.
\end{description}
\end{definition}
Note that \FyfkKUDd{P} is the register
in which the verifier receives his proof
and \FyfkKUDd{A} is his private register,
which is, \LDYwFcAX, always initialized to
\CARSeAIr{\LBGptwLj}.
Without causing confusion, we will denote both
the circuit of the verifier and the unitary operator
it represents by \LZfegDul.
\begin{definition}
The class \ihRQQChH{} is defined as
$ \ihRQQChH \aJAxrBMB \pbpWqYJX{\frac{2}{3}}{\frac{1}{3}} $.
\end{definition}
The choice of the constants
in the above definition are arbitrary,
as shown by the following theorem.
\begin{theorem}[\cite{Kitaev2002,Aharonov2002,Marriott2005}]
\label{bmJxjwbd}
Let $ c \in \GTIJeHWG{0,1} $ be a constant and
\mcpkmsYW{p}{n} be a positive polynomial in $n$.
It holds that
\begin{align*}
\ihRQQChH &= \pbpWqYJX{c}{c - \frac{1}{\mcpkmsYW{p}{n}}}
= \pbpWqYJX{1 - 2^{- \mcpkmsYW{p}{n}}}{2^{- \mcpkmsYW{p}{n}}} .
\end{align*}
\end{theorem}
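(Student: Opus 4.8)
The plan is to collapse all three equalities onto a single \emph{gap-amplification} statement and handle the rest by monotonicity. Write $A = \ihRQQChH$, $B = \pbpWqYJX{c}{c - 1/\mcpkmsYW{p}{n}}$, and $C = \pbpWqYJX{1 - 2^{-\mcpkmsYW{p}{n}}}{2^{-\mcpkmsYW{p}{n}}}$. Two inclusions are immediate for all but finitely many input lengths (which can be hard-wired into the verifier): a protocol with completeness $1 - 2^{-\mcpkmsYW{p}{n}}$ and soundness $2^{-\mcpkmsYW{p}{n}}$ certainly has completeness at least $2/3$ and soundness at most $1/3$, giving $C \subseteq A$, and it also has completeness at least $c$ and soundness at most $c - 1/\mcpkmsYW{p}{n}$, giving $C \subseteq B$. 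So the entire content lies in the two reverse inclusions $A \subseteq C$ and $B \subseteq C$, and both follow from one lemma: any verifier with completeness $a$ and soundness $b$ with $a - b \geq 1/\mathrm{poly}$ can be converted into one with completeness $1 - 2^{-r}$ and soundness $2^{-r}$ for any desired polynomial $r$. Applying it with $(a,b) = (2/3, 1/3)$ yields $A \subseteq C$, and with $(a,b) = \GTIJeHWG{c,\, c - 1/\mcpkmsYW{p}{n}}$ yields $B \subseteq C$; combined with the trivial inclusions this closes the cycle and proves $A = B = C$.

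To prove the amplification lemma I would use parallel repetition with a threshold. The new verifier asks Merlin for $k$ proof registers (with $k = \Theta\GTIJeHWG{r \cdot \GTIJeHWG{a-b}^{-2}}$), runs an independent copy of $\LZfegDul$ on each, each with its own ancilla initialized to $\CARSeAIr{\LBGptwLj}$, and accepts if and only if the number of accepting copies exceeds the threshold $t \aJAxrBMB \frac{a+b}{2}\,k$. Completeness is easy: an honest Merlin sends $\CARSeAIr{\varphi}^{\UPMtlSUt k}$, where $\CARSeAIr{\varphi}$ is an optimal single-copy proof; since the copies act on disjoint registers the acceptance events are genuinely independent, each occurring with probability at least $a > \frac{a+b}{2}$, so the standard Chernoff bound gives acceptance probability at least $1 - 2^{-r}$.

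The soundness analysis is the crux. Let $\Pi_i$ be the pulled-back acceptance projector of the $i$-th copy; these act on disjoint registers and therefore commute, and the threshold-acceptance operator is the sum over all index sets $S$ with $\UylNyihM{S} > t$ of $\prod_{i \in S} \Pi_i \prod_{i \notin S} \GTIJeHWG{\aOIUkcRb - \Pi_i}$. Because each ancilla is fixed to $\CARSeAIr{\LBGptwLj}$ and the factors live on disjoint spaces, projecting the ancillas turns this into the operator $\sum_{\UylNyihM{S} > t} \bigotimes_{i \in S} Q_i \bigotimes_{i \notin S} \GTIJeHWG{\aOIUkcRb - Q_i}$ on the proof registers, where $Q_i$ is the single-copy POVM element obtained by sandwiching $\Pi_i$ between $\bymhjSLm{\LBGptwLj}$ and $\CARSeAIr{\LBGptwLj}$. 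By soundness of $\LZfegDul$ we have $\bymhjSLm{\psi} Q_i \CARSeAIr{\psi} \leq b$ for every unit $\CARSeAIr{\psi}$, and since $Q_i \geq 0$ this means its largest eigenvalue is at most $b$. As the $Q_i$ commute, I would diagonalize them simultaneously; on any joint eigenvector the eigenvalue of the threshold operator equals $\Pr\YnJXqMTX{\sum_i Y_i > t}$ for \emph{independent} Bernoulli variables $Y_i$ with success probabilities equal to the corresponding eigenvalues, each at most $b$. A Chernoff bound makes this at most $2^{-r}$ for the chosen $k$, so the maximum acceptance probability, being the largest eigenvalue, is at most $2^{-r}$.

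The main obstacle, and the only place where quantumness bites, is that a dishonest Merlin may send a state \emph{entangled} across the $k$ proof registers, so the naive appeal to independence is not justified a priori. The resolution is exactly the diagonalization step above: commutativity of the $\Pi_i$ together with the fixed ancillas forces the relevant operator to be a tensor-product threshold function of the single-copy operators $Q_i$, whence its maximum is attained on a product eigenstate on which the per-copy outcomes really are independent trials of bias at most $b$. Thus entanglement confers no advantage and the classical Chernoff bound applies unchanged. Two routine loose ends remain: tuning the completeness and soundness of the weak-gap class to the exact values $c$ and $c - 1/\mcpkmsYW{p}{n}$, which is achieved by a convex combination with a biased coin flip that accepts outright with a fixed probability, and absorbing the finitely many small input lengths into the verifier's description. (If one also wished to avoid enlarging the proof, one could instead invoke the in-place amplification of \cite{Marriott2005}, but that refinement is not needed here.)
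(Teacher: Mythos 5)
The paper does not prove \XsBWhkhM{bmJxjwbd} at all: it is imported as a known result, with the proof living in the cited references \cite{Kitaev2002,Aharonov2002,Marriott2005}. So the only question is whether your reconstruction is sound, and it is. The two ``trivial'' inclusions (with finitely many bad input lengths absorbed into the verifier's finite description) are handled correctly, and your amplification lemma is exactly the standard threshold parallel-repetition argument from the cited works: the pulled-back acceptance projectors of the $k$ copies act on disjoint registers and hence commute; sandwiching by the product ancilla state $\CARSeAIr{\LBGptwLj}^{\UPMtlSUt k}$ factorizes the threshold operator into a sum of tensor products of the single-copy POVM elements $Q_i$ and $\aOIUkcRb - Q_i$; and simultaneous diagonalization reduces the largest eigenvalue to a tail probability of independent Bernoulli trials with bias at most $b$, so a Chernoff--Hoeffding bound applies and entanglement across proof registers gives Merlin no advantage. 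This is precisely the resolution of the one genuinely quantum difficulty, and you identified it as such. Your closing remark is also accurate: the alternative route in \cite{Marriott2005} amplifies in place (reusing a single proof), which matters for witness length but is irrelevant for the class equalities asserted here. In short: correct, and essentially the same proof as in the literature the paper points to.
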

\begin{definition}
The class \mjIKlHwf{c}{s} is defined the same way as
\pbpWqYJX{c}{s} in \ozsaYFnP{IQpYyyBm},
except that before the prover sends the proof to the
verifier, they can share a constant number of EPR pairs
(the two-qubit state \CARSeAIr{\Phi^{+}}).
\end{definition}
\begin{definition}
The class \hxlhxXVG{} is defined as
$ \hxlhxXVG \aJAxrBMB \mjIKlHwf{1}{1/2} $.
\end{definition}
Similarly as before, the choice of $1/2$ is arbitrary.
This is because a \hxlhxXVG{} proof system is a special
case of a two-message \qOrOiFzp{} proof system
and perfect parallel repetition holds even for
three-message \qOrOiFzp{} \cite{Kitaev2000}.
So we have the following lemma.
\begin{lemma}
\label{PpcvCrEf}
Let \mcpkmsYW{p}{n} be a positive polynomial in $n$.
It holds that
\[ \hxlhxXVG = \mjIKlHwf{1}{1 - \frac{1}{\mcpkmsYW{p}{n}}}
= \mjIKlHwf{1}{2^{- \mcpkmsYW{p}{n}}} . \]
\end{lemma}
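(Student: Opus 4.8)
The plan is to peel off the easy containments and isolate the one real step. Since demanding a smaller soundness error only shrinks the class, the chain $\mjIKlHwf{1}{2^{-\mcpkmsYW{p}{n}}} \subseteq \hxlhxXVG \subseteq \mjIKlHwf{1}{1 - 1/\mcpkmsYW{p}{n}}$ is immediate: a verifier achieving a given completeness/soundness pair also achieves every pair with the same completeness and a larger soundness error, using the very same shared EPR pairs and the same honest proof. Thus the whole lemma follows once the reverse inclusion $\mjIKlHwf{1}{1 - 1/\mcpkmsYW{p}{n}} \subseteq \mjIKlHwf{1}{2^{-\mcpkmsYW{p}{n}}}$ is established, \YRlgvUEY{} once perfect completeness together with a merely inverse-polynomial soundness gap is amplified to inverse-exponential soundness error while keeping completeness perfect.

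For the amplification I would regard any protocol of the shared-EPR type as a two-message \qOrOiFzp{} system: the verifier prepares its constantly many copies of \CARSeAIr{\Phi^{+}}, sends one half of each to Merlin as the first message, receives Merlin's proof register as the second message, and then runs the circuit \LZfegDul{} of \ozsaYFnP{IQpYyyBm} and measures the acceptance qubit. A two-message system is a special case of a three-message one, so the perfect parallel repetition theorem for three-message \qOrOiFzp{} of Kitaev and Watrous \cite{Kitaev2000} applies. I would run $k$ independent copies of the protocol in parallel and accept exactly when all $k$ copies accept. Perfect parallel repetition then guarantees that the soundness error of the composed protocol equals the $k$-th power of the original soundness error, while perfect completeness is retained: in the yes-case the honest Merlin plays the accepting proof in each coordinate, so every copy, and hence their conjunction, accepts with certainty.

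Choosing $k$ to be a large enough polynomial multiple of $\mcpkmsYW{p}{n}$ drives $\GTIJeHWG{1 - 1/\mcpkmsYW{p}{n}}^{k}$ below $2^{-\mcpkmsYW{p}{n}}$, which is the soundness error we want. It then remains to check that the repeated game still fits the template of \ozsaYFnP{IQpYyyBm}: its first message is $k$ blocks of EPR halves, that is, shared EPR pairs distributed before Merlin's single proof message, and its decision is still a polynomial-time verifier assembled from the $k$ copies of \LZfegDul{}. Hence the composed protocol witnesses membership in the smaller class, as required.

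I expect the soundness half of the repetition to be the crux. Because Merlin may entangle his $k$ answers, the per-copy soundness errors cannot be multiplied by an elementary independence argument; it is precisely the perfect parallel repetition theorem for three-message \qOrOiFzp{} \cite{Kitaev2000} that delivers the clean $s \mapsto s^{k}$ behaviour, so the argument hinges on being entitled to invoke it. By contrast, the reduction into the two-message \qOrOiFzp{} picture, the preservation of perfect completeness, and the verification that the shared-EPR structure survives are all routine.
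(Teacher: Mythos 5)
Your proposal is precisely the paper's own argument: the paper justifies this lemma in a single remark immediately preceding it, namely that a \hxlhxXVG{} protocol is a special case of a two-message \qOrOiFzp{} proof system and that perfect parallel repetition for three-message \qOrOiFzp{} \cite{Kitaev2000} sends soundness $s \mapsto s^{k}$ while keeping completeness perfect, which is exactly your reduction and repetition step. The one point you wave through as ``routine''---and which the paper glosses over identically---is in fact the only delicate spot: the repeated verifier shares $k$ times the original constant number of EPR pairs, so when \mcpkmsYW{p}{n} genuinely grows with $n$ (forcing $k$ to grow polynomially) the composed protocol is no longer literally a constant-EPR protocol as the definition of \mjIKlHwf{1}{2^{-\mcpkmsYW{p}{n}}} requires; note that the paper only ever applies the lemma with constant parameters ($\mjIKlHwf{1}{1 - 2^{-52}} = \hxlhxXVG$), where the number of repetitions, and hence of shared EPR pairs, stays constant.
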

\section{Proof of \XsBWhkhM{GtTrXBeA}}
\label{xpIEIUYC}
Before we give the detailed proof of
\XsBWhkhM{GtTrXBeA}, let us
briefly describe the intuition behind our proof.
We also point out the similarities and the differences
between our proof and the proof in \cite{Kobayashi2013}.
\subsection{The Idea Behind the Proof}
\label{RAEGhdYj}
The basic idea to achieve perfect completeness
is very similar to Ref.~\cite{Kobayashi2013}.
For any input $x$, let us define
\[ \lmkGSgYf \aJAxrBMB \GTIJeHWG{\IbAozRcX} \XhMHOPJK{\LZfegDul}
\ulbwWGDE \LZfegDul \GTIJeHWG{\IbAozRcX} \text{,} \]
where \LZfegDul{} is the same as in \BeRtWGMB{oTrRpwia}
and \ulbwWGDE{} is the projector that corresponds to projecting
the acceptance qubit of \LZfegDul{} to \CARSeAIr{1}.
Note that $ 0 \leq \lmkGSgYf \leq \aOIUkcRb_{\nuoeaWUW{P} \UPMtlSUt \nuoeaWUW{A}} $.
As was observed in \cite{Marriott2005}, the maximum
acceptance probability of \LZfegDul{} is \bjhdjWcQ{\lmkGSgYf},
or in other words, the maximum eigenvalue of \lmkGSgYf.
We will use \cEgZOrfG{bszdFjVr}
to construct a test that succeeds with probability $1$
in case $ x \in L $.
In order to achieve this, we need that for all $ x \in L $,
$ \bjhdjWcQ{\lmkGSgYf} = 1/2 $.
Unfortunately, this is not true in general.
Instead, we have that if $ x \in L $ then $ \bjhdjWcQ{\lmkGSgYf} \geq 1/2 $.
Our first objective is to modify \lmkGSgYf{} \GmcWftaL{} its maximum
eigenvalue is exactly $1/2$.
We do this by using an auxiliary qubit
(stored in register \FyfkKUDd{S}) and defining
\begin{align*}
\lmkGSgYf' &\aJAxrBMB \lmkGSgYf \UPMtlSUt \GTIJeHWG{\yAIHQVQC{0}_{\nuoeaWUW{S}}
\XhMHOPJK{\IwFBcXWd{q}} \yAIHQVQC{1}_{\nuoeaWUW{S}} \IwFBcXWd{q}
\yAIHQVQC{0}_{\nuoeaWUW{S}}} \\
&= \GTIJeHWG{\aOIUkcRb_{\nuoeaWUW{P}} \UPMtlSUt
\yAIHQVQC{\LBGptwLj}_{\nuoeaWUW{A} \UPMtlSUt \nuoeaWUW{S}}}
\XhMHOPJK{\GTIJeHWG{\LZfegDul \UPMtlSUt \IwFBcXWd{q}}}
\GTIJeHWG{\ulbwWGDE \UPMtlSUt \yAIHQVQC{1}_{\nuoeaWUW{S}}}
\GTIJeHWG{\LZfegDul \UPMtlSUt \IwFBcXWd{q}}
\GTIJeHWG{\aOIUkcRb_{\nuoeaWUW{P}} \UPMtlSUt
\yAIHQVQC{\LBGptwLj}_{\nuoeaWUW{A} \UPMtlSUt \nuoeaWUW{S}}}
\text{,}
\end{align*}
where $ q \aJAxrBMB \frac{1}{2p} \in \YnJXqMTX{\frac{1}{2}, 1} $.
It is now easy to see that $ \bjhdjWcQ{\lmkGSgYf'} = 1/2 $
and we can also write $\lmkGSgYf'$ as $ \lmkGSgYf' = \Delta \Pi \Delta $, for
\begin{align*}
\Delta \aJAxrBMB \aOIUkcRb_{\nuoeaWUW{P}} \UPMtlSUt
\yAIHQVQC{\LBGptwLj}_{\nuoeaWUW{A} \UPMtlSUt \nuoeaWUW{S}}
\qquad \text{and} \qquad
\Pi \aJAxrBMB \XhMHOPJK{\GTIJeHWG{\LZfegDul \UPMtlSUt \IwFBcXWd{q}}}
\GTIJeHWG{\ulbwWGDE \UPMtlSUt \yAIHQVQC{1}_{\nuoeaWUW{S}}}
\GTIJeHWG{\LZfegDul \UPMtlSUt \IwFBcXWd{q}} .
\end{align*}
Now, we can use \cEgZOrfG{bszdFjVr}
and obtain the following test.
Let the principal eigenvector of $\lmkGSgYf'$ (that corresponds to eigenvalue
$1/2$) be denoted by $ \CARSeAIr{\omega}_{\nuoeaWUW{P}}
\UPMtlSUt \CARSeAIr{\LBGptwLj}_{\nuoeaWUW{A} \UPMtlSUt \nuoeaWUW{S}}$.
The test receives this eigenstate as the input,
applies the unitary operator $ \aOIUkcRb - 2 \Pi $,
and performs a measurement defined by operators \dVhhxuvI{\Delta, \aOIUkcRb - \Delta}.
If the state is projected to $\Delta$ the test rejects and
otherwise it accepts.
\cEgZOrfG{bszdFjVr} guarantees that
we never project to $\Delta$.
\par
However, a polynomial-time verifier may not be able to perform
this test, because it is possible that \IwFBcXWd{q}
can't be expressed by a polynomial-size quantum circuit and
the verifier may not even know the exact value of $q$.
To overcome this difficulty, the verifier expects
the prover to give several copies of the
normalized \dFUQloAO{}s of \XhMHOPJK{\IwFBcXWd{q}},
besides $ \CARSeAIr{\omega}_{\nuoeaWUW{P}} $.
As explained in \BeRtWGMB{XoAYconU},
these can be used to perform \IwFBcXWd{q} and \XhMHOPJK{\IwFBcXWd{q}},
by using unitary \cNXPyGJX{} to do \IwFBcXWd{q}, and
\nvBBVayi{alg:post selection} to do \XhMHOPJK{\IwFBcXWd{q}}.
Note that \nvBBVayi{alg:post selection} may fail, in which
case we have to accept in order to maintain perfect completeness.
This is the main idea to prove perfect completeness,
and it is basically the same as in \cite{Kobayashi2013}.
\par
The harder part is to prove the soundness and this is where
our proof differs from the one in \cite{Kobayashi2013}.
Let us first give a high-level overview of the soundness
proof of Kobayashi \hFGJtSId\ \cite{Kobayashi2013}.
The main idea in their proof is to perform a
sequence of tests (\YRlgvUEY{} quantum algorithms with
measurements at the end), which together ensure
that the registers that are supposed to contain the
\dFUQloAO{}s of the desired operator, actually
contain the \dFUQloAO{}s of \emph{some} operator.
Then they show that doing the so-called
`\LkIoSNVy', the one just described above,
with these states in the registers, will
cause rejection with some constant probability.
The tests they use to ensure that the states are
close to \dFUQloAO{}s are the `\hhqeGUPm' (which is used to
remove the entanglement between the register of the original
proof and the registers of the \dFUQloAO{}s), the `\mCDuGkWD'
(which tests that the states are in a certain subspace),
and the \KlrOGnZA.
In their analysis they also use the \ExDgacVq{} theorem.
We don't describe these tests here, as the interested
reader can find them in \cite{Kobayashi2013}.
We just list them in order to compare them to
the tools we use.
\par
Our main idea behind the soundness proof is conceptually
different.
We don't argue that the states are close to \dFUQloAO{}s, but
we analyze our version of the \LkIoSNVy{} directly.
As we described this test above, there are two
measurements in it.
The first measurement is in \nvBBVayi{alg:post selection}
and the second is given by \dVhhxuvI{\Delta, \aOIUkcRb - \Delta}.
So, roughly speaking, we have to prove two things.
First, we have to show that \nvBBVayi{alg:post selection}
can't always fail, as otherwise we would end up
always accepting without reaching the end of the procedure.
This will be formalized later in \cEgZOrfG{oBbNiOJY}.
In order to prove \cEgZOrfG{oBbNiOJY},
we only need two assumptions.
The first assumption is that the state being
measured in \nvBBVayi{alg:post selection}
is separable, which is guaranteed by the \ExDgacVq{} theorem
(\XsBWhkhM{cmMerJQG}).
The second assumption is that the state of some registers
is close to being completely mixed, which is obviously
true because these registers hold parts of EPR pairs.
\par
The second part of the soundness proof is to show
that conditioned on \nvBBVayi{alg:post selection} being successful,
we get a state that projects to $\Delta$ with constant
probability.
To prove this, we first argue that the private register
of the verifier (register \FyfkKUDd{A}) projects
to \yAIHQVQC{\LBGptwLj}.
This follows from simple properties of the trace distance.
We then show that the state of register \FyfkKUDd{S} projects to \yAIHQVQC{0}.
To prove this, we use the \KlrOGnZA{} on the registers
that are supposed to contain the \dFUQloAO{}s.
This ensures that the state of these registers are close
to the same pure state.
This property is formalized in \cEgZOrfG{ijiuOsQF}.
We also use a simplified version of the \mCDuGkWD,
which is not really a test but an application of
a super-operator on the above mentioned registers.
This super-operator will be defined later in
Eq.~\eqref{VpYNOSVe}.
We can think of it as performing a projective
measurement that corresponds to the \mCDuGkWD{}
and forgetting the outcome.
Using the above tools, it will follow by direct
calculation that the state of \FyfkKUDd{S} projects
to \yAIHQVQC{0}.
\par
Note that we don't use the \hhqeGUPm{} of \cite{Kobayashi2013}
and we use a simpler form of the \mCDuGkWD.
Besides that, it's worth mentioning that the tools we use
can be grouped into two sets based on whether we use them
in the analysis of the first or the second measurement.
For the analysis of the first measurement, we need that
some state is close to being maximally mixed, while in the
analysis of the second, we use the \KlrOGnZA{} and
the above mentioned super-operator.
One exception is the \ExDgacVq{} theorem, as we need that
the states are separable in both parts.
This property of the proof may be useful for simplifying
it further, because for example, to omit the \KlrOGnZA, one
would only need to re-prove that the state of \FyfkKUDd{S} projects
to \yAIHQVQC{0} in the last measurement.
\subsection{The Detailed Proof}
\label{TJORLhCK}
This section presents the detailed proof of
\XsBWhkhM{GtTrXBeA}.
Let $ L \in \ihRQQChH $ and \jYpQwXZd{V} be the corresponding
verifier.
Let $x$ be an input to language $L$ and let us denote
its length by $n$.
We denote the circuit of \jYpQwXZd{V} on input $x$
(and also the unitary transformation it represents)
by \LZfegDul.
Let the private register of \LZfegDul{} be denoted by
\FyfkKUDd{A} and the register in which the proof is
received by \FyfkKUDd{P}.
As in the previous section, let $ \ulbwWGDE \in
\poaESbLo{\nuoeaWUW{P} \UPMtlSUt \nuoeaWUW{A}} $ be the
projector that corresponds to projecting
the acceptance qubit of \LZfegDul{} to \CARSeAIr{1}.
By \XsBWhkhM{bmJxjwbd}, we assume
that the completeness of \jYpQwXZd{V} is
at least $ 1/2 $ and his soundness is at most
$ 4^{-n} $.
Let $ \gAuWdsUU \aJAxrBMB 2^{107} $.
We construct a verifier \jYpQwXZd{W} which recognizes
the same language $L$ with completeness $1$,
constant soundness, and with the additional property
that \jYpQwXZd{W} possesses \gAuWdsUU{} halves of
EPR pairs in registers $\FyfkKUDd{S}_1',
\ldots, \FyfkKUDd{S}_{\gAuWdsUU}'$ before the protocol begins.
The other halves of the EPR pairs are held by the prover.
\jYpQwXZd{W} gets his proof in registers
$\FyfkKUDd{P}, \FyfkKUDd{S}_1, \ldots, \FyfkKUDd{S}_{\gAuWdsUU}$,
where the $ \FyfkKUDd{S}_i $'s are single qubit registers,
which had contained the other halves of the EPR pairs
before the prover performed some transformation on them.
\jYpQwXZd{W} expects to get the original proof of \jYpQwXZd{V}
in \FyfkKUDd{P} and the state of each
\GTIJeHWG{\FyfkKUDd{S}_i, \FyfkKUDd{S}_i'} is supposed to be
\qIsFHsNm{\XhMHOPJK{\IwFBcXWd{q}}}, for some $ q \in \YnJXqMTX{0,1} $.
In the description of \jYpQwXZd{W} we will use
the following notations.
Let \NxqVzdIP{} be the subspace of $\ynpHhbyi^4$ spanned by
\CARSeAIr{\Phi^{+}} and \CARSeAIr{\Psi^{+}}, and \NticoQpX{}
be the subspace spanned by \CARSeAIr{\Phi^{-}} and \CARSeAIr{\Psi^{-}}.
Let
\begin{align*}
\KlqDGJMb \aJAxrBMB \yAIHQVQC{\Phi^{+}} + \yAIHQVQC{\Psi^{+}}
\qquad \text{and} \qquad
\ybAbpHUd \aJAxrBMB \yAIHQVQC{\Phi^{-}} + \yAIHQVQC{\Psi^{-}}
\text{,}
\end{align*}
\YRlgvUEY{} the projections to subspaces \NxqVzdIP{} and
\NticoQpX.
Let \rRGRugxs{\hqLKZmNM}{\poaESbLo{\ynpHhbyi^4}}{\poaESbLo{\ynpHhbyi^4}}
be a quantum super-operator defined as
\begin{align}
\mpAYNUjM{\BWHXbaAC{A}} \aJAxrBMB
\KlqDGJMb \BWHXbaAC{A} \KlqDGJMb +
\ybAbpHUd \BWHXbaAC{A} \ybAbpHUd .
\label{VpYNOSVe}
\end{align}
\cNXPyGJX{} still denotes the operator
defined by Eq.~\eqref{yDshCAyj}.
With these notations, the procedure of \jYpQwXZd{W} is described in
\nvBBVayi{alg:QMA_one verifier}.
\par
\UDODcOFW[tb]{Description of verifier \jYpQwXZd{W} in the proof of
\XsBWhkhM{GtTrXBeA}.}
{QMA_one verifier}
{
\rNokgjXa description of a circuit \LZfegDul,
polynomial-size register \FyfkKUDd{P}
compatible with \LZfegDul, and
single qubit registers $ \FyfkKUDd{S}_1,
\ldots, \FyfkKUDd{S}_{\gAuWdsUU}, \FyfkKUDd{S}_1',
\ldots, \FyfkKUDd{S}_{\gAuWdsUU}' $, where the
state of \GTIJeHWG{\FyfkKUDd{S}_1', \ldots, \FyfkKUDd{S}_{\gAuWdsUU}'}
is guaranteed to be $ \aOIUkcRb / 2^{\gAuWdsUU} $.
\EfCTdHOd{For all $i$, \GTIJeHWG{\FyfkKUDd{S}_i, \FyfkKUDd{S}_i'}
are supposed to contain \qIsFHsNm{\XhMHOPJK{\IwFBcXWd{q}}}.}
\BeqmjZec accept or reject
\STATE Permute registers $ \GTIJeHWG{\FyfkKUDd{S}_1, \FyfkKUDd{S}_1'},
\ldots, \GTIJeHWG{\FyfkKUDd{S}_{\gAuWdsUU}, \FyfkKUDd{S}_{\gAuWdsUU}'} $
uniformly at random and discard all but
\GTIJeHWG{\FyfkKUDd{S}_1, \FyfkKUDd{S}_1'} and
\GTIJeHWG{\FyfkKUDd{S}_2, \FyfkKUDd{S}_2'}.
\label{GGViMyyh}
\STATE Apply \hqLKZmNM{} on both \GTIJeHWG{\FyfkKUDd{S}_1, \FyfkKUDd{S}_1'}
and \GTIJeHWG{\FyfkKUDd{S}_2, \FyfkKUDd{S}_2'}.
\label{yGAFpbUS}
\STATE Choose $ b \zmKbSHOE \dVhhxuvI{0,1} $ uniformly at random.
\label{yYchijnL}
\IF{$ b = 0 $}
\STATE Apply \cNXPyGJX{} on \GTIJeHWG{\FyfkKUDd{S}_1, \FyfkKUDd{S}_1'}.
\EfCTdHOd{This creates $ \IwFBcXWd{q} \CARSeAIr{0} $
in $\FyfkKUDd{S}_1$. $\FyfkKUDd{S}_1'$ can be discarded.}
\label{UPnMbFeQ}
\STATE Create register \FyfkKUDd{A}, compatible with \LZfegDul,
and initialize its state to \CARSeAIr{\LBGptwLj}.
\STATE Apply \LZfegDul{} on \GTIJeHWG{\FyfkKUDd{P}, \FyfkKUDd{A}}.
\label{ULNGrhTq}
\STATE Apply a phase-flip if both the acceptance qubit
and register $\FyfkKUDd{S}_1$ are $1$.
\EfCTdHOd{This is done by applying the unitary
$ \aOIUkcRb_{\nuoeaWUW{P} \UPMtlSUt \nuoeaWUW{A} \UPMtlSUt \nuoeaWUW{S}_1}
- 2 \ulbwWGDE \UPMtlSUt \fTLcMlGS{1}{1} $ on
\GTIJeHWG{\FyfkKUDd{P}, \FyfkKUDd{A}, \FyfkKUDd{S}_1}.}
\label{WGoOQVbL}
\STATE Apply \XhMHOPJK{\LZfegDul} on \GTIJeHWG{\FyfkKUDd{P}, \FyfkKUDd{A}}.
\label{BSGZETZK}
\STATE Execute \nvBBVayi{alg:post selection} with input
\GTIJeHWG{\FyfkKUDd{S}_2, \FyfkKUDd{S}_2', \FyfkKUDd{S}_1}.
\label{OjQtcmkp}
\IF{\nvBBVayi{alg:post selection} fails}
\RETURN accept
\ENDIF
\STATE Measure \GTIJeHWG{\FyfkKUDd{A}, \FyfkKUDd{S}_2} in
the standard basis.
\label{JXZpLgvW}
\IF{the output of the measurement is \LBGptwLj}
\RETURN reject
\label{KQubmnwO}
\ELSE
\RETURN accept
\ENDIF
\ELSE
\STATE Apply the \KlrOGnZA{} on \GTIJeHWG{\FyfkKUDd{S}_1, \FyfkKUDd{S}_1'}
and \GTIJeHWG{\FyfkKUDd{S}_2, \FyfkKUDd{S}_2'}.
\label{ZrlHrHOA}
\IF{the \KlrOGnZA{} succeeds}
\RETURN accept
\ELSE
\RETURN reject
\ENDIF
\ENDIF
}
Note that \nvBBVayi{alg:QMA_one verifier} runs in polynomial time
and besides performing the circuit \LZfegDul{} and its inverse, it only uses
\eSvaqyjm, \GECLKEJq, \cNXPyGJX, Pauli gates, and classical logical gates.
(This justifies our assumption we made about the gate set
in \BeRtWGMB{oTrRpwia}.)
We have to prove completeness and soundness
in order to prove \XsBWhkhM{GtTrXBeA}.
\cEgZOrfG{UxKSlSFM} proves that in the
honest case \jYpQwXZd{W} always accepts, while
\cEgZOrfG{cNoXknwb} proves that in the dishonest
case \jYpQwXZd{W} rejects with probability at least $ 2^{-52} $.
This shows that $ L \in \mjIKlHwf{1}{1 - 2^{-52}} $.
By \cEgZOrfG{PpcvCrEf},
$ \mjIKlHwf{1}{1 - 2^{-52}} = \hxlhxXVG $
so \XsBWhkhM{GtTrXBeA} follows.
\begin{lemma}[Completeness]
\label{UxKSlSFM}
If $ x \in L $ then the prover can prepare registers
$ \FyfkKUDd{P}, \FyfkKUDd{S}_1, \ldots, \FyfkKUDd{S}_{\gAuWdsUU} $
in such a way that verifier \jYpQwXZd{W} of
\nvBBVayi{alg:QMA_one verifier} accepts with probability $1$.
\end{lemma}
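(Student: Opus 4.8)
The plan is to exhibit an explicit honest strategy and then push the state through \nvBBVayi{alg:QMA_one verifier}, checking that every branch accepts with certainty. Set $ p \aJAxrBMB \bjhdjWcQ{\lmkGSgYf} $, the maximum acceptance probability of \LZfegDul; since $x \in L$ and (by \XsBWhkhM{bmJxjwbd}) the completeness is at least $1/2$, we have $ p \geq 1/2 $, so $ q = \frac{1}{2p} \in \YnJXqMTX{1/2,1} $ and $ \bjhdjWcQ{\lmkGSgYf'} = p \cdot q = 1/2 $, exactly as in \BeRtWGMB{RAEGhdYj}. The honest prover places the principal eigenvector \CARSeAIr{\omega} of $ \lmkGSgYf' = \Delta \Pi \Delta $ (eigenvalue $1/2$) into \FyfkKUDd{P}, and turns each shared pair \GTIJeHWG{\FyfkKUDd{S}_i, \FyfkKUDd{S}_i'} into \qIsFHsNm{\XhMHOPJK{\IwFBcXWd{q}}} by applying \XhMHOPJK{\IwFBcXWd{q}} to his half \FyfkKUDd{S}_i. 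Being unbounded, the prover can compute $p$, hence $q$, and prepare these states exactly.

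First I would dispose of the cheap preliminary steps. Because all \gAuWdsUU{} copies are identical, the random permutation and discarding in line~\ref{GGViMyyh} leave \GTIJeHWG{\FyfkKUDd{S}_1, \FyfkKUDd{S}_1'} and \GTIJeHWG{\FyfkKUDd{S}_2, \FyfkKUDd{S}_2'} each holding \qIsFHsNm{\XhMHOPJK{\IwFBcXWd{q}}}. Since $ \qIsFHsNm{\XhMHOPJK{\IwFBcXWd{q}}} = \sqrt{1-q}\,\CARSeAIr{\Phi^{+}} + \XDGqcIBa\sqrt{q}\,\CARSeAIr{\Psi^{+}} $ lies entirely in \NxqVzdIP, the super-operator \hqLKZmNM{} of line~\ref{yGAFpbUS} acts as the identity on it, so both copies are unchanged. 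This settles the $b=1$ branch at once: the \KlrOGnZA{} of line~\ref{ZrlHrHOA} is applied to two copies of the same pure state, so $ \gyBwVFHt{\rho\sigma} = 1 $ and by \XsBWhkhM{BNEmInIr} it succeeds with probability $ \frac{1+1}{2} = 1 $, whence the verifier accepts.

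The substance is the $b=0$ branch, and the key step is an operator identity. With $ \Pi = \XhMHOPJK{\GTIJeHWG{\LZfegDul \UPMtlSUt \IwFBcXWd{q}}} \GTIJeHWG{\ulbwWGDE \UPMtlSUt \yAIHQVQC{1}} \GTIJeHWG{\LZfegDul \UPMtlSUt \IwFBcXWd{q}} $, I would verify that loading $ \IwFBcXWd{q}\CARSeAIr{0} $ into \FyfkKUDd{S}_1 via \cNXPyGJX{} (line~\ref{UPnMbFeQ}) and then conjugating the phase-flip $ \aOIUkcRb - 2\,\ulbwWGDE \UPMtlSUt \yAIHQVQC{1} $ by \LZfegDul{} (lines~\ref{ULNGrhTq}--\ref{BSGZETZK}) realizes exactly $ \GTIJeHWG{\aOIUkcRb \UPMtlSUt \IwFBcXWd{q}} \GTIJeHWG{\aOIUkcRb - 2\Pi} $ acting on $ \CARSeAIr{\omega} \UPMtlSUt \CARSeAIr{\LBGptwLj} $. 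The point is that the composite equals $ \GTIJeHWG{\aOIUkcRb \UPMtlSUt \IwFBcXWd{q}}\GTIJeHWG{\aOIUkcRb - 2\Pi}\GTIJeHWG{\aOIUkcRb \UPMtlSUt \XhMHOPJK{\IwFBcXWd{q}}} $, the loaded \IwFBcXWd{q} supplies the inner right factor of $\Pi$, and the leftover outer \IwFBcXWd{q} is then removed by the \XhMHOPJK{\IwFBcXWd{q}} that \nvBBVayi{alg:post selection} applies on success (\cEgZOrfG{dUOsaZJF}, extended to the entangled \FyfkKUDd{S}_1 by linearity) while relabelling \FyfkKUDd{S}_1 as \FyfkKUDd{S}_2. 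The state on \GTIJeHWG{\FyfkKUDd{P}, \FyfkKUDd{A}, \FyfkKUDd{S}_2} just before the final measurement is therefore exactly $ \GTIJeHWG{\aOIUkcRb - 2\Pi}\GTIJeHWG{\CARSeAIr{\omega} \UPMtlSUt \CARSeAIr{\LBGptwLj}} $.

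Now \cEgZOrfG{bszdFjVr}, applied with these $\Delta,\Pi$ and the eigenstate \CARSeAIr{\omega} for eigenvalue $1/2$, gives $ \Delta\GTIJeHWG{\aOIUkcRb - 2\Pi}\GTIJeHWG{\CARSeAIr{\omega} \UPMtlSUt \CARSeAIr{\LBGptwLj}} = 0 $; that is, the state has no component in which \GTIJeHWG{\FyfkKUDd{A}, \FyfkKUDd{S}_2} equals \CARSeAIr{\LBGptwLj}. Hence the standard-basis measurement of line~\ref{JXZpLgvW} outputs \LBGptwLj{} with probability $0$, the reject of line~\ref{KQubmnwO} never fires, and the verifier accepts; and if instead \nvBBVayi{alg:post selection} fails, the algorithm accepts by construction. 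Thus both branches accept with probability $1$, which proves the lemma. I expect the main obstacle to be purely the bookkeeping of the $b=0$ branch—tracking how \IwFBcXWd{q} is introduced by \cNXPyGJX, commuted through the \LZfegDul-sandwich, and cancelled by post-selection—so that \cEgZOrfG{bszdFjVr} applies to precisely the state present before the last measurement.
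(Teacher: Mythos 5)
Your proposal is correct and takes essentially the same approach as the paper: the identical honest strategy (the principal eigenvector in \FyfkKUDd{P}, with \XhMHOPJK{\IwFBcXWd{q}} applied to each EPR half), the same observations that steps~\ref{GGViMyyh}--\ref{yGAFpbUS} act trivially and that the $b=1$ \KlrOGnZA{} branch accepts with certainty, and the same application of \cEgZOrfG{bszdFjVr} to $\Delta \GTIJeHWG{\aOIUkcRb - 2\Pi} \Delta$ on the eigenstate to show the measurement of step~\ref{JXZpLgvW} never yields \LBGptwLj. Your operator-identity bookkeeping for the $b=0$ branch is merely a compact rewriting of the paper's explicit state-tracking through steps~\ref{UPnMbFeQ}--\ref{OjQtcmkp}, so there is no substantive difference.
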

\begin{proof}
Let $ p_x \in \YnJXqMTX{1/2, 1} $ be the maximum probability with which \jYpQwXZd{V}
can be made to accept $x$, where the maximum is taken over
all states in \FyfkKUDd{P}.
Let \[ q \aJAxrBMB \frac{1}{2p} \]
and note that $ q \in \YnJXqMTX{1/2, 1} $.
The honest Merlin prepares \CARSeAIr{\omega_x} in \FyfkKUDd{P},
where \CARSeAIr{\omega_x} is the original witness of
\jYpQwXZd{V} that makes it accept with probability exactly $p_x$.
Furthermore, for all $ i \in \hdzgZeCW{\gAuWdsUU} $, Merlin
applies \XhMHOPJK{\IwFBcXWd{q}} to $\FyfkKUDd{S}_i$.
This creates \qIsFHsNm{\XhMHOPJK{\IwFBcXWd{q}}} in
all \GTIJeHWG{\FyfkKUDd{S}_i, \FyfkKUDd{S}_i'}.
Then Merlin sends registers
$ \FyfkKUDd{P}, \FyfkKUDd{S}_1, \ldots, \FyfkKUDd{S}_{\gAuWdsUU} $
to \jYpQwXZd{W}.
\par
Note that steps~\ref{GGViMyyh}
and \ref{yGAFpbUS} of
\nvBBVayi{alg:QMA_one verifier} don't change
the state because
\[ \qIsFHsNm{\XhMHOPJK{\IwFBcXWd{q}}}
= \sqrt{1 - q} \CARSeAIr{\Phi^{+}} + \XDGqcIBa \sqrt{q} \CARSeAIr{\Psi^{+}}
\in \NxqVzdIP .\]
If, in step~\ref{yYchijnL}, $b$ is chosen to be $1$
then the \KlrOGnZA{} in step~\ref{ZrlHrHOA}
succeeds with certainty, by
\XsBWhkhM{BNEmInIr}.
So, from now on, suppose that $b$ is chosen to be $0$,
in which case we continue to step~\ref{UPnMbFeQ}.
From the arguments of \BeRtWGMB{XoAYconU},
we have that the state of $\FyfkKUDd{S}_1$ after
step~\ref{UPnMbFeQ} is $ \IwFBcXWd{q} \CARSeAIr{0} $.
So the state of \GTIJeHWG{\FyfkKUDd{P}, \FyfkKUDd{A}, \FyfkKUDd{S}_1}
before entering step~\ref{OjQtcmkp} is
\[ \GTIJeHWG{\XhMHOPJK{\LZfegDul} \UPMtlSUt \aOIUkcRb_{\nuoeaWUW{S}_1}}
\GTIJeHWG{\aOIUkcRb - 2 \ulbwWGDE \UPMtlSUt \fTLcMlGS{1}{1}}
\GTIJeHWG{\LZfegDul \UPMtlSUt \IwFBcXWd{q}}
\GTIJeHWG{\CARSeAIr{\omega_x}_{\nuoeaWUW{P}} \UPMtlSUt
\CARSeAIr{\LBGptwLj}_{\nuoeaWUW{A}} \UPMtlSUt \CARSeAIr{0}_{\nuoeaWUW{S}_1}} . \]
We assume that \nvBBVayi{alg:post selection} in
step~\ref{OjQtcmkp} succeeds,
as otherwise we accept.
In this case, by \cEgZOrfG{dUOsaZJF},
the state of \GTIJeHWG{\FyfkKUDd{P}, \FyfkKUDd{A}, \FyfkKUDd{S}_2}
after step~\ref{OjQtcmkp} will be
\[ \GTIJeHWG{\XhMHOPJK{\LZfegDul} \UPMtlSUt \XhMHOPJK{\IwFBcXWd{q}}}
\GTIJeHWG{\aOIUkcRb - 2 \ulbwWGDE \UPMtlSUt \fTLcMlGS{1}{2}}
\GTIJeHWG{\LZfegDul \UPMtlSUt \IwFBcXWd{q}}
\GTIJeHWG{\CARSeAIr{\omega_x}_{\nuoeaWUW{P}} \UPMtlSUt
\CARSeAIr{\LBGptwLj}_{\nuoeaWUW{A}} \UPMtlSUt \CARSeAIr{0}_{\nuoeaWUW{S}_2}} . \]
Let
\begin{align*}
\Delta \aJAxrBMB \aOIUkcRb_{\nuoeaWUW{P}} \UPMtlSUt \KMFbyGRX
\UPMtlSUt \fTLcMlGS{0}{2}
\qquad \text{and} \qquad
\Pi \aJAxrBMB \GTIJeHWG{\XhMHOPJK{\LZfegDul} \ulbwWGDE \LZfegDul} \UPMtlSUt
\GTIJeHWG{\XhMHOPJK{\IwFBcXWd{q}} \fTLcMlGS{1}{2} \IwFBcXWd{q}} .
\end{align*}
Note that the maximum eigenvalue of operator $ \Delta \Pi \Delta $
is $1/2$, with corresponding eigenstate
$ \CARSeAIr{\omega_x}_{\nuoeaWUW{P}} \UPMtlSUt
\CARSeAIr{\LBGptwLj}_{\nuoeaWUW{A} \UPMtlSUt \nuoeaWUW{S}_2} $.
From \cEgZOrfG{bszdFjVr},
\begin{align*}
0 &= \Delta \GTIJeHWG{\aOIUkcRb - 2 \Pi}
\Delta \GTIJeHWG{\CARSeAIr{\omega_x}_{\nuoeaWUW{P}} \UPMtlSUt
\CARSeAIr{\LBGptwLj}_{\nuoeaWUW{A} \UPMtlSUt \nuoeaWUW{S}_2}} \\
&= \GTIJeHWG{\aOIUkcRb_{\nuoeaWUW{P}} \UPMtlSUt
\yAIHQVQC{\LBGptwLj}_{\nuoeaWUW{A} \UPMtlSUt \nuoeaWUW{S}_2}}
\GTIJeHWG{\XhMHOPJK{\LZfegDul} \UPMtlSUt \XhMHOPJK{\IwFBcXWd{q}}}
\GTIJeHWG{\aOIUkcRb - 2 \ulbwWGDE \UPMtlSUt \fTLcMlGS{1}{2}}
\GTIJeHWG{\LZfegDul \UPMtlSUt \IwFBcXWd{q}}
\GTIJeHWG{\CARSeAIr{\omega_x}_{\nuoeaWUW{P}} \UPMtlSUt
\CARSeAIr{\LBGptwLj}_{\nuoeaWUW{A} \UPMtlSUt \nuoeaWUW{S}_2}} .
\end{align*}
It means that the measurement of
step~\ref{JXZpLgvW} will never
output \LBGptwLj.
This finishes the proof of the lemma.
\end{proof}
\newcommand{\wFVdrMmD}{\ensuremath{\rho_1}}
\newcommand{\UttxLjvk}{\ensuremath{\rho_2}}
\newcommand{\RfCClzga}{\ensuremath{\rho_5}}
\newcommand{\VmdniaiO}{\ensuremath{\rho_7}}
\newcommand{\rYGwPCyr}{\ensuremath{\rho_9}}
\begin{lemma}[Soundness]
\label{cNoXknwb}
Let $ x \notin L $ and $n$ sufficiently large.
Suppose that the input to \nvBBVayi{alg:QMA_one verifier}
is \GmcWftaL{} the reduced state on
\GTIJeHWG{\FyfkKUDd{S}_1', \ldots, \FyfkKUDd{S}_{\gAuWdsUU}'}
is $ \aOIUkcRb / 2^{\gAuWdsUU} $.
Then \nvBBVayi{alg:QMA_one verifier} rejects
with probability at least $ 2^{-52} $.
\end{lemma}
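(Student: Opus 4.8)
The plan is to lower-bound the rejection probability of verifier \jYpQwXZd{W} \emph{directly}, without ever claiming that the prover's registers contain genuine \dFUQloAO{}s. Fix an arbitrary cheating prover and let the state it produces on $\FyfkKUDd{P}, \FyfkKUDd{S}_1, \ldots, \FyfkKUDd{S}_{\gAuWdsUU}$ together with the verifier's halves $\FyfkKUDd{S}_1', \ldots, \FyfkKUDd{S}_{\gAuWdsUU}'$ be $\rho$, recalling the standing hypothesis that the reduced state on the primed registers is $\aOIUkcRb / 2^{\gAuWdsUU}$. After the random permutation in step~\ref{GGViMyyh}, the reduced state on the \gAuWdsUU{} pairs $\GTIJeHWG{\FyfkKUDd{S}_i, \FyfkKUDd{S}_i'}$ is permutation invariant, so by the quantum \ExDgacVq{} theorem (\XsBWhkhM{cmMerJQG}, applied to the pairs) the state of the two surviving pairs is within trace distance $\WrVYYpex{1/\gAuWdsUU}$ of a \ExDgacVq{} state $\sum_i p_i \sigma_i \UPMtlSUt \sigma_i$. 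Since every later step of \nvBBVayi{alg:QMA_one verifier} is a super-operator, \XsBWhkhM{NQlmRARP} lets me substitute this separable, i.i.d.-looking state for the true one throughout, paying only the \ExDgacVq{} error in each acceptance probability. I then treat the two coin branches $b = 1$ and $b = 0$ and combine them.

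For the branch $b = 1$ I use the \KlrOGnZA{} (\XsBWhkhM{BNEmInIr}): on the \ExDgacVq{} state it rejects with probability $\frac{1}{2}\GTIJeHWG{1 - \sum_i p_i \gyBwVFHt{\sigma_i^2}}$, so \emph{unless} almost all of the weight sits on nearly pure $\sigma_i$, this branch alone rejects with a constant probability and I am done. Otherwise passing the \KlrOGnZA{} forces each relevant pair close to a common pure state $\CARSeAIr{\psi_i}$; this is the content of \cEgZOrfG{ijiuOsQF}. Here the maximal-mixedness hypothesis is crucial: because \hqLKZmNM{} has been applied (step~\ref{yGAFpbUS}) each $\CARSeAIr{\psi_i}$ lies in \NxqVzdIP{} or \NticoQpX, and a pure state of a single pair whose primed-qubit marginal is $\aOIUkcRb/2$ is \emph{maximally entangled}. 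A short calculation shows that the maximally entangled states of \NxqVzdIP{} are exactly \LtfmKMob{\qIsFHsNm{\XhMHOPJK{\IwFBcXWd{q}}}}{q \in \YnJXqMTX{0,1}} up to a global phase, and analogously for \NticoQpX. Thus, after paying the \KlrOGnZA{} rejection probability, I may assume each surviving pair is (close to) \qIsFHsNm{\XhMHOPJK{\IwFBcXWd{q_i}}}.

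For the branch $b = 0$ I must show that the protocol reaches step~\ref{JXZpLgvW} and there projects onto $\Delta = \aOIUkcRb_{\nuoeaWUW{P}} \UPMtlSUt \KMFbyGRX \UPMtlSUt \fTLcMlGS{0}{2}$ with constant probability. First, \nvBBVayi{alg:post selection} cannot always fail: the qubit it Bell-measures together with $\FyfkKUDd{S}_1$ is $\FyfkKUDd{S}_2'$, which is maximally mixed and, by the \ExDgacVq{} step, separable from $\FyfkKUDd{S}_1$; computing the Bell statistics of $\frac{\aOIUkcRb}{2} \UPMtlSUt \eta$ gives success probability exactly $1/2$ up to the accumulated errors (this is \cEgZOrfG{oBbNiOJY}). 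Conditioned on success I factor $\Delta$ into its $\FyfkKUDd{A}$- and $\FyfkKUDd{S}_2$-components. Since $x \notin L$ forces $\bjhdjWcQ{\lmkGSgYf} \leq 4^{-n}$, the controlled phase flip of step~\ref{WGoOQVbL} conjugated by \LZfegDul{} acts as the identity up to an error of norm $\WrVYYpex{2^{-n}}$ on the relevant input, so $\FyfkKUDd{A}$ returns to within $\WrVYYpex{2^{-n}}$ of $\CARSeAIr{\LBGptwLj}$; \cEgZOrfG{XViefLtt} and \cEgZOrfG{eJTZpBKB} make this quantitative. Meanwhile \cNXPyGJX{} (step~\ref{UPnMbFeQ}) turns the assumed \qIsFHsNm{\XhMHOPJK{\IwFBcXWd{q_i}}} in pair~$1$ into $\IwFBcXWd{q_i} \CARSeAIr{0}$ in $\FyfkKUDd{S}_1$, the near-trivial phase flip leaves it essentially unchanged, and \cEgZOrfG{dUOsaZJF} applied to pair~$2$ teleports $\XhMHOPJK{\IwFBcXWd{q_i}} \IwFBcXWd{q_i} \CARSeAIr{0} = \CARSeAIr{0}$ into $\FyfkKUDd{S}_2$. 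Hence $\GTIJeHWG{\FyfkKUDd{A}, \FyfkKUDd{S}_2}$ is close to $\CARSeAIr{\LBGptwLj}$, step~\ref{JXZpLgvW} outputs \LBGptwLj{} with constant probability, and \jYpQwXZd{W} rejects.

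Combining the branches, either the \KlrOGnZA{} rejects with a constant probability, or it does not and then the $b = 0$ branch rejects with a constant probability; in both cases the factor $1/2$ from the coin survives, giving a constant lower bound. The main obstacle is not any single step but the \emph{propagation of errors}: every transition above is only approximate---the \ExDgacVq{} error $32/\gAuWdsUU$, the gentle-measurement and Uhlmann conversions (\cEgZOrfG{kvwuafiV}, \XsBWhkhM{pQJzAmLP}, \cEgZOrfG{CpyQcgif}), and the Fuchs--van de Graaf inequalities (\XsBWhkhM{gYublpMS}), several of which cost a square root---and all of these must be tracked through \emph{both} branches and through the proof register $\FyfkKUDd{P}$, which is entangled with the pairs and only disentangled \emph{approximately} by the near-trivial reflection, so the separability and maximal-mixedness facts hold only up to controlled error. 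Choosing $\gAuWdsUU = 2^{107}$ is precisely what makes the \ExDgacVq{} error small enough that, after all the square roots and the union over error events, the surviving constant is at least $2^{-52}$.
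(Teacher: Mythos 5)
Your proposal follows the paper's skeleton---\ExDgacVq{} reduction after the permutation, a dichotomy on the \KlrOGnZA, then a lower bound on the success of \nvBBVayi{alg:post selection} followed by forced rejection at step~\ref{JXZpLgvW}---but the two claims that carry your $b=0$ branch are both false, and the gap sits exactly where the real difficulty of the soundness proof lies. First, you claim that each \ExDgacVq{} component $\CARSeAIr{\psi_i}$ has primed-qubit marginal $\aOIUkcRb/2$, hence is maximally entangled, hence close to some \qIsFHsNm{\XhMHOPJK{\IwFBcXWd{q_i}}}. The hypothesis of the lemma constrains only the \emph{ensemble average} (and, via Eq.~\eqref{NZNouxaY}, the \emph{joint} marginal of the two primed qubits); an individual \ExDgacVq{} component can have an arbitrary marginal. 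Second, you claim that since $\FyfkKUDd{S}_2'$ is maximally mixed and separable from $\FyfkKUDd{S}_1$, the Bell measurement behaves as on the product state $\frac{\aOIUkcRb}{2} \UPMtlSUt \eta$ and succeeds with probability about $\frac{1}{2}$. Separable is not product: on a correlated mixture $\sum_i p_i \, \alpha_i \UPMtlSUt \beta_i$ the success probability is $\sum_i p_i \gyBwVFHt{\GTIJeHWG{\alpha_i \UPMtlSUt \beta_i} \KlqDGJMb}$, whose terms can individually vanish. (This is also not what \cEgZOrfG{oBbNiOJY} asserts: its conclusion is a bound of $2^{-20}$, not $\frac{1}{2}$, precisely because only separability is available.)

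Both failures are witnessed by a single ensemble, which is the genuinely adversarial case. Take $m = 2$, $p_1 = p_2 = \frac{1}{2}$, and
\[ \sigma_1 \aJAxrBMB \yAIHQVQC{\MEVIdGJb} \UPMtlSUt \yAIHQVQC{\MEVIdGJb} , \qquad \sigma_2 \aJAxrBMB \yAIHQVQC{\VPzthEpq} \UPMtlSUt \yAIHQVQC{\VPzthEpq} \]
on each pair \GTIJeHWG{\FyfkKUDd{S}_i, \FyfkKUDd{S}_i'}, noting that $\CARSeAIr{\MEVIdGJb} \UPMtlSUt \CARSeAIr{\MEVIdGJb} = \GTIJeHWG{\CARSeAIr{\Phi^{+}} + \CARSeAIr{\Psi^{+}}}/\sqrt{2}$ and $\CARSeAIr{\VPzthEpq} \UPMtlSUt \CARSeAIr{\VPzthEpq} = \GTIJeHWG{\CARSeAIr{\Phi^{+}} - \CARSeAIr{\Psi^{+}}}/\sqrt{2}$ both lie in \NxqVzdIP, so each $\sigma_i$ is pure and \hqLKZmNM-invariant. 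The \KlrOGnZA{} accepts this ensemble with certainty, every component lies in \NxqVzdIP, and the marginal of the average on each single primed qubit is exactly $\aOIUkcRb/2$---so it is consistent with everything you establish before your two claims. Yet both components are \emph{product} states, nowhere near any Choi state, and \nvBBVayi{alg:post selection} \emph{fails with certainty}: \cNXPyGJX{} maps $\CARSeAIr{\MEVIdGJb} \UPMtlSUt \CARSeAIr{\MEVIdGJb}$ to $\CARSeAIr{\VPzthEpq} \UPMtlSUt \CARSeAIr{0}$, so the Bell measurement acts on $\CARSeAIr{\MEVIdGJb}_{\nuoeaWUW{S}_2'} \UPMtlSUt \CARSeAIr{\VPzthEpq}_{\nuoeaWUW{S}_1}$, which is orthogonal to $ \NxqVzdIP = \mathrm{span} \GTIJeHWG{\CARSeAIr{\MEVIdGJb} \UPMtlSUt \CARSeAIr{\MEVIdGJb}, \CARSeAIr{\VPzthEpq} \UPMtlSUt \CARSeAIr{\VPzthEpq}}$ (symmetrically for $\sigma_2$). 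On this input your argument would certify acceptance with probability $1$ in both coin branches. What excludes the ensemble is only the \emph{joint} marginal on \GTIJeHWG{\FyfkKUDd{S}_1', \FyfkKUDd{S}_2'}, namely $\frac{1}{2} \yAIHQVQC{\MEVIdGJb}^{\UPMtlSUt 2} + \frac{1}{2} \yAIHQVQC{\VPzthEpq}^{\UPMtlSUt 2}$, which is at trace distance $\frac{1}{2}$ from \MrWNkHpB; exploiting that is the missing idea, and it is the contradiction argument inside the paper's proof of \cEgZOrfG{oBbNiOJY}: if too much weight sits on components whose $\nuoeaWUW{S}_2'$-marginal is $\varepsilon_1$-close to $\yAIHQVQC{\MEVIdGJb}$ or $\yAIHQVQC{\VPzthEpq}$, the i.i.d.\ structure $\sigma_i \UPMtlSUt \sigma_i$ across the two pairs forces the joint primed marginal to distance at least $\frac{3}{8}$ from \MrWNkHpB, contradicting \eqref{IMZsmswK}; hence weight at least $\varepsilon_2$ lies on components far from both, and each of those succeeds with probability at least $\varepsilon_1^2$, giving $\varepsilon_1^2 \varepsilon_2 = 2^{-20}$. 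Note finally that the paper's endgame needs no Choi-state identification at all: conditioned on \nvBBVayi{alg:post selection} succeeding, $\FyfkKUDd{S}_2$ collapses to \CARSeAIr{0} for an \emph{arbitrary} pure state in \NxqVzdIP{} or \NticoQpX, so the repair is to replace both of your claims by (the proof of) \cEgZOrfG{oBbNiOJY}, not to strengthen the entanglement assertion.
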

\begin{proof}
Let's denote the state of
\GTIJeHWG{\FyfkKUDd{P}, \FyfkKUDd{S}_1, \FyfkKUDd{S}_1',
\FyfkKUDd{S}_2, \FyfkKUDd{S}_2'}, after
step~\ref{GGViMyyh},
by \wFVdrMmD.
\XsBWhkhM{cmMerJQG} implies that
\[ \Vohxuxyh{\FbNPrGDa{\nuoeaWUW{P}}{\wFVdrMmD}}
{\sum_{i=1}^{m} p_i \xi_i \UPMtlSUt \xi_i}
\leq \frac{16}{\gAuWdsUU} . \]
Let's denote the state of the same registers,
after step~\ref{yGAFpbUS},
by \UttxLjvk.
It can be checked by direct calculation that
\begin{align}
\FbNPrGDa{\nuoeaWUW{P} \UPMtlSUt \nuoeaWUW{S}_1 \UPMtlSUt \nuoeaWUW{S}_2}
{\UttxLjvk} = \MrWNkHpB .
\label{NZNouxaY}
\end{align}
From \XsBWhkhM{NQlmRARP}, it holds that
\[ \Vohxuxyh{\FbNPrGDa{\nuoeaWUW{P}}{\UttxLjvk}}
{\sum_{i=1}^{m} p_i \sigma_i \UPMtlSUt \sigma_i}
\leq \frac{16}{\gAuWdsUU} \text{,} \]
where $ \sigma_i \aJAxrBMB \mpAYNUjM{\xi_i} $.
By \cEgZOrfG{CpyQcgif},
there exists a $\UttxLjvk'$ \GmcWftaL{}
\begin{align}
\FbNPrGDa{\nuoeaWUW{P}}{\UttxLjvk'}
&= \sum_{i=1}^{m} p_i \sigma_i \UPMtlSUt \sigma_i \nonumber
\intertext{and}
\Vohxuxyh{\UttxLjvk}{\UttxLjvk'}
&\leq \sqrt{\frac{32}{\gAuWdsUU}} .
\label{RcJveNBK}
\end{align}
Let us suppose, from now on, that before entering
step~\ref{yYchijnL} the state of the
system is $\UttxLjvk'$.
This will result in a bias of at most
$ \sqrt{32 / \gAuWdsUU} $ in the trace distance in
the rest of the states that we calculate.
Throughout the rest of the proof, we will assume that
the \KlrOGnZA{} on input \FbNPrGDa{\nuoeaWUW{P}}{\UttxLjvk'}
rejects with probability at most
$ \varepsilon \aJAxrBMB 2 \cdot 2^{-52} + \sqrt{32 / \gAuWdsUU} = 2^{-50}$,
as otherwise we are done with the proof.
With this in mind, the rest of the proof will
only deal with the case when $b$ is chosen to be $0$ in
step~\ref{yYchijnL}.
In this case we continue to
step~\ref{UPnMbFeQ}.
With these assumptions, the state of the system
after step~\ref{UPnMbFeQ} is
\[ \RfCClzga \aJAxrBMB \GTIJeHWG{\cNXPyGJX \UPMtlSUt
\aOIUkcRb_{\nuoeaWUW{P} \UPMtlSUt \nuoeaWUW{S}_2 \UPMtlSUt \nuoeaWUW{S}_2'}}
\UttxLjvk' \GTIJeHWG{\XhMHOPJK{\cNXPyGJX} \UPMtlSUt \aOIUkcRb_{\nuoeaWUW{P}
\UPMtlSUt \nuoeaWUW{S}_2 \UPMtlSUt \nuoeaWUW{S}_2'}} . \]
Let's denote the state of the whole system after
step~\ref{ULNGrhTq} by
\[ \VmdniaiO \aJAxrBMB \GTIJeHWG{\LZfegDul \UPMtlSUt
\aOIUkcRb_{\nuoeaWUW{S}_1 \UPMtlSUt \nuoeaWUW{S}_1'
\UPMtlSUt \nuoeaWUW{S}_2 \UPMtlSUt \nuoeaWUW{S}_2'}}
\GTIJeHWG{\KMFbyGRX \UPMtlSUt \RfCClzga}
\GTIJeHWG{\XhMHOPJK{\LZfegDul} \UPMtlSUt
\aOIUkcRb_{\nuoeaWUW{S}_1 \UPMtlSUt \nuoeaWUW{S}_1'
\UPMtlSUt \nuoeaWUW{S}_2 \UPMtlSUt \nuoeaWUW{S}_2'}}. \]
Since the acceptance probability
of \LZfegDul{} is at most $ 4^{-n} $, we have that
\[ \gyBwVFHt{\VmdniaiO \CGvPMhCL} \leq \frac{1}{4^n} \text{,} \]
where $ \CGvPMhCL \aJAxrBMB \ulbwWGDE \UPMtlSUt
\aOIUkcRb_{\nuoeaWUW{S}_1 \UPMtlSUt \nuoeaWUW{S}_1' \UPMtlSUt
\nuoeaWUW{S}_2 \UPMtlSUt \nuoeaWUW{S}_2'}$.
Let $\VmdniaiO'$ be the projection
of $\VmdniaiO$ to the rejection subspace, \YRlgvUEY{}
\[ \VmdniaiO' \aJAxrBMB
\frac{\dIKLUvxq \VmdniaiO \dIKLUvxq}
{\gyBwVFHt{\VmdniaiO \dIKLUvxq}} . \]
From \cEgZOrfG{kvwuafiV} and
\XsBWhkhM{gYublpMS}, we have that
\[ 1 - \frac{1}{4^n} \leq \DGaDmHJe{\VmdniaiO}{\VmdniaiO'}^2
\leq 1 - \Vohxuxyh{\VmdniaiO}{\VmdniaiO'}^2 \]
from which it follows that
\[ \Vohxuxyh{\VmdniaiO}{\VmdniaiO'} \leq \frac{1}{2^n} . \]
Now suppose that before entering
step~\ref{WGoOQVbL} the state of the
system is $\VmdniaiO'$ instead of $\VmdniaiO$.
This will result in an additional bias of at most
$ 2^{-n} $ in the trace distance in
the rest of the states that we calculate.
Since $\VmdniaiO'$ lies in the rejection subspace,
\[ \GTIJeHWG{\GTIJeHWG{\aOIUkcRb - 2 \ulbwWGDE \UPMtlSUt \fTLcMlGS{1}{1}} \UPMtlSUt
\aOIUkcRb_{\nuoeaWUW{S}_1' \UPMtlSUt \nuoeaWUW{S}_2 \UPMtlSUt \nuoeaWUW{S}_2'}}
\VmdniaiO'
\GTIJeHWG{\GTIJeHWG{\aOIUkcRb - 2 \ulbwWGDE \UPMtlSUt \fTLcMlGS{1}{1}} \UPMtlSUt
\aOIUkcRb_{\nuoeaWUW{S}_1' \UPMtlSUt \nuoeaWUW{S}_2 \UPMtlSUt \nuoeaWUW{S}_2'}}
= \VmdniaiO' \text{,} \]
which means that step~\ref{WGoOQVbL}
doesn't change the state.
So the state of the system before entering
step~\ref{BSGZETZK}
is $\VmdniaiO'$.
Let us change the state again, at this time
from $\VmdniaiO'$ back to $\VmdniaiO$.
This will result in another bias of
at most $ 2^{-n} $.
If the state of the system is $\VmdniaiO$
before entering step~\ref{BSGZETZK}
then the state after step~\ref{BSGZETZK}
will be
\[ \GTIJeHWG{\XhMHOPJK{\LZfegDul} \UPMtlSUt
\aOIUkcRb_{\nuoeaWUW{S}_1 \UPMtlSUt \nuoeaWUW{S}_1'
\UPMtlSUt \nuoeaWUW{S}_2 \UPMtlSUt \nuoeaWUW{S}_2'}}
\VmdniaiO
\GTIJeHWG{\LZfegDul \UPMtlSUt
\aOIUkcRb_{\nuoeaWUW{S}_1 \UPMtlSUt \nuoeaWUW{S}_1'
\UPMtlSUt \nuoeaWUW{S}_2 \UPMtlSUt \nuoeaWUW{S}_2'}}
= \KMFbyGRX \UPMtlSUt \RfCClzga . \]
\par
From \cEgZOrfG{ijiuOsQF},
together with the assumption we made about the
success probability of the \KlrOGnZA, we get that
there exists a set of states
\LtfmKMob{\CARSeAIr{\varphi_i}}{\CARSeAIr{\varphi_i} \in \NxqVzdIP
\text{ or } \CARSeAIr{\varphi_i} \in \NticoQpX}
\GmcWftaL{}
\begin{align}
\Vohxuxyh{\FbNPrGDa{\nuoeaWUW{P}}{\UttxLjvk'}}
{\sum_{i=1}^m p_i \GTIJeHWG{\yAIHQVQC{\varphi_i}}^{\UPMtlSUt 2}}
\leq 6 \sqrt{\varepsilon} .
\label{kLpPiynP}
\end{align}
This implies that
\[ \Vohxuxyh{\FbNPrGDa{\nuoeaWUW{P}}{\RfCClzga}}
{\rYGwPCyr} \leq 6 \sqrt{\varepsilon} \text{,} \]
where
\[ \rYGwPCyr \aJAxrBMB \sum_{i=1}^m p_i
\GTIJeHWG{\cNXPyGJX \yAIHQVQC{\varphi_i} \XhMHOPJK{\cNXPyGJX}}
\UPMtlSUt \yAIHQVQC{\varphi_i}
\text{, \qquad}
\rYGwPCyr \in \sYllRVEw{\nuoeaWUW{S}_1 \UPMtlSUt \nuoeaWUW{S}_1'
\UPMtlSUt \nuoeaWUW{S}_2 \UPMtlSUt \nuoeaWUW{S}_2'} . \]
Now let us change the state of
\GTIJeHWG{\FyfkKUDd{S}_1, \FyfkKUDd{S}_1', \FyfkKUDd{S}_2, \FyfkKUDd{S}_2'}
from \FbNPrGDa{\nuoeaWUW{P}}{\RfCClzga}
to \rYGwPCyr.
This will result in another bias of at most $ 6 \sqrt{\varepsilon} $.
(Note that \FyfkKUDd{P} is not touched by the
algorithm after step~\ref{BSGZETZK},
so we don't keep track of its state.)
From Eqs.~\eqref{NZNouxaY},
\eqref{RcJveNBK},
and \eqref{kLpPiynP}, it follows that
\[ \Vohxuxyh{\FbNPrGDa{\nuoeaWUW{S}_1 \UPMtlSUt \nuoeaWUW{S}_2}
{\sum_{i=1}^m p_i \GTIJeHWG{\yAIHQVQC{\varphi_i}}^{\UPMtlSUt 2}}}
{\MrWNkHpB}
\leq \sqrt{\frac{32}{\gAuWdsUU}} + 6 \sqrt{\varepsilon}
< \frac{1}{8} . \]
So \rYGwPCyr{} satisfies the requirements of
\cEgZOrfG{oBbNiOJY} below.
This means that \nvBBVayi{alg:post selection} in
step~\ref{OjQtcmkp} succeeds
with probability at least $2^{-20}$, in which case we continue
to step~\ref{JXZpLgvW}.
\par
We now argue that, conditioned on \nvBBVayi{alg:post selection}
being successful, the measurement in
step~\ref{JXZpLgvW} outputs \LBGptwLj{}
with certainty.
This will finish the proof.
Note that \nvBBVayi{alg:post selection} can't change the
state of \FyfkKUDd{A} as it was independent of
\GTIJeHWG{\FyfkKUDd{S}_2, \FyfkKUDd{S}_2', \FyfkKUDd{S}_1}
before executing \nvBBVayi{alg:post selection}.
So before entering step~\ref{JXZpLgvW},
the state of \FyfkKUDd{A} is still \CARSeAIr{\LBGptwLj}.
Now we argue that after successfully executing
\nvBBVayi{alg:post selection}, the state of
$\FyfkKUDd{S}_2$ will be \CARSeAIr{0}.
Let us take some $ \CARSeAIr{\varphi} \in
\nuoeaWUW{S}_1 \UPMtlSUt \nuoeaWUW{S}_1' $ that
belongs to either \NxqVzdIP{} or \NticoQpX.
Here we only argue about the case when
$ \CARSeAIr{\varphi} \in \NxqVzdIP $ as
the other case can be proven by exactly the same way.
We can write \CARSeAIr{\varphi} as
\[ \CARSeAIr{\varphi} = a \CARSeAIr{\Phi^{+}}
+ b \CARSeAIr{\Psi^{+}}, \qquad
a, b \in \ynpHhbyi, \qquad
\UylNyihM{a}^2 + \UylNyihM{b}^2 = 1 . \]
It is easy to see that after applying \cNXPyGJX{}
to \CARSeAIr{\varphi}, the resulting state on $\FyfkKUDd{S}_1$ will be
$ a \CARSeAIr{0} - b \CARSeAIr{1} $.
Suppose that the state of \GTIJeHWG{\FyfkKUDd{S}_2, \FyfkKUDd{S}_2'}
is \CARSeAIr{\varphi} and the state of $\FyfkKUDd{S}_1$
is $ a \CARSeAIr{0} - b \CARSeAIr{1} $.
It can be shown by direct calculation that
\[ \GTIJeHWG{\fTLcMlGS{1}{2} \UPMtlSUt
\yAIHQVQC{\Phi^{+}}_{\nuoeaWUW{S}_2' \UPMtlSUt \nuoeaWUW{S}_1}}
\CARSeAIr{\varphi} \UPMtlSUt \GTIJeHWG{a \CARSeAIr{0} - b \CARSeAIr{1}}
= 0 . \]
This means that if \nvBBVayi{alg:post selection} is executed
with the above input and the measurement in the
algorithm results in \CARSeAIr{\Phi^{+}}, then
the state of $\FyfkKUDd{S}_2$ will be \CARSeAIr{0}.
Similarly to the above, it can also be shown that
\[ \GTIJeHWG{\fTLcMlGS{0}{2} \UPMtlSUt
\yAIHQVQC{\Psi^{+}}_{\nuoeaWUW{S}_2' \UPMtlSUt \nuoeaWUW{S}_1}}
\CARSeAIr{\varphi} \UPMtlSUt \GTIJeHWG{a \CARSeAIr{0} - b \CARSeAIr{1}}
= 0 . \]
This means that if the measurement in \nvBBVayi{alg:post selection}
results in \CARSeAIr{\Psi^{+}} then the state of
$\FyfkKUDd{S}_2$ will be \CARSeAIr{1}.
In this case, \nvBBVayi{alg:post selection}
applies \FqTFSrwh{} on $\FyfkKUDd{S}_2$ so
the state of this register, after the algorithm,
will be \CARSeAIr{0}.
Since \rYGwPCyr{} is a convex combination of states of the
above form, we got that if the state of
\GTIJeHWG{\FyfkKUDd{S}_1, \FyfkKUDd{S}_1', \FyfkKUDd{S}_2, \FyfkKUDd{S}_2'}
is \rYGwPCyr, before entering step~\ref{OjQtcmkp},
then \nvBBVayi{alg:post selection}
succeeds with probability at least $2^{-20}$
and, conditioned on success, \nvBBVayi{alg:QMA_one verifier}
rejects in step~\ref{KQubmnwO}
with certainty.
\par
However, we did modify the state during our
analysis four times, so we have to account for the bias
they caused, which is at most
\[ \frac{1}{2^{n-1}} + \sqrt{\frac{32}{\gAuWdsUU}} + 6 \sqrt{\varepsilon} . \]
So the real rejection probability, with the
original input, is at least
\[ \frac{1}{2^{20}} - \GTIJeHWG{\frac{1}{2^{n-1}} + \sqrt{\frac{32}{\gAuWdsUU}}
+ 6 \sqrt{\varepsilon}}
= \frac{1}{2^{21}} - \frac{1}{2^{n-1}}
\geq \frac{1}{2^{22}} \text{,} \]
where the last inequality is true for
$ n \geq 23 $.
\end{proof}
\begin{lemma}
\label{oBbNiOJY}
Suppose that before entering step~\ref{OjQtcmkp}
of \nvBBVayi{alg:QMA_one verifier}, the state of
\GTIJeHWG{\FyfkKUDd{S}_1, \FyfkKUDd{S}_1', \FyfkKUDd{S}_2, \FyfkKUDd{S}_2'} is
\begin{align*}
\rho \aJAxrBMB \sum_{i=1}^m p_i \GTIJeHWG{\FZGQkLzf}
\UPMtlSUt \sigma_i \text{,}
\end{align*}
for some $ m \in \FLbfwYzZ $, probability distribution
\LtfmKMob{p_i}{i = 1, \ldots, m}, and states
$ \sigma_i \in \sYllRVEw{\nuoeaWUW{S}_2 \UPMtlSUt \nuoeaWUW{S}_2'}
\cong \sYllRVEw{\nuoeaWUW{S}_1 \UPMtlSUt \nuoeaWUW{S}_1'} $.
Further assume that
\begin{align}
\Vohxuxyh{\FbNPrGDa{\nuoeaWUW{S}_1 \UPMtlSUt \nuoeaWUW{S}_2}
{\sum_{i=1}^m p_i \sigma_i \UPMtlSUt \sigma_i}}
{\MrWNkHpB} \leq \frac{1}{8} .
\label{IMZsmswK}
\end{align}
Then \nvBBVayi{alg:post selection}, in
step~\ref{OjQtcmkp}, will succeed
with probability at least $2^{-20}$.
\end{lemma}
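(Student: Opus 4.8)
The plan is to compute the success probability of \nvBBVayi{alg:post selection} directly, as the expectation of a single two-qubit observable, and then to bound that expectation using the near-maximal-mixedness hypothesis~\eqref{IMZsmswK}. When \nvBBVayi{alg:post selection} is run on input $\GTIJeHWG{\FyfkKUDd{S}_2, \FyfkKUDd{S}_2', \FyfkKUDd{S}_1}$, the event ``success'' is precisely that the Bell measurement performed on $\GTIJeHWG{\FyfkKUDd{S}_2', \FyfkKUDd{S}_1}$ returns $\CARSeAIr{\Phi^{+}}$ or $\CARSeAIr{\Psi^{+}}$, \YRlgvUEY{} that this pair is projected onto $\NxqVzdIP$. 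Hence the success probability equals $\gyBwVFHt{\KlqDGJMb \rho}$, where $\KlqDGJMb$ acts on $\GTIJeHWG{\FyfkKUDd{S}_2', \FyfkKUDd{S}_1}$ and as the identity on the remaining registers. A one-line computation in the Bell basis gives the convenient form $\KlqDGJMb = \frac{\aOIUkcRb + \FqTFSrwh \UPMtlSUt \FqTFSrwh}{2}$, since $\NxqVzdIP$ is exactly the $+1$ eigenspace of $\FqTFSrwh \UPMtlSUt \FqTFSrwh$. Because the two qubits carrying $\KlqDGJMb$ lie in different tensor factors of $\rho = \sum_i p_i \GTIJeHWG{\FZGQkLzf} \UPMtlSUt \sigma_i$, tracing out the registers untouched by the measurement, namely $\FyfkKUDd{S}_1'$ and $\FyfkKUDd{S}_2$, makes every term factor, and
\[ \gyBwVFHt{\KlqDGJMb \rho} = \frac{1}{2} + \frac{1}{2} \sum_{i=1}^m p_i \,
\gyBwVFHt{\FqTFSrwh \FbNPrGDa{\nuoeaWUW{S}_1'}{\FZGQkLzf}} \cdot
\gyBwVFHt{\FqTFSrwh \CQiPcAHs} . \]

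The crux is to relate the two single-qubit expectations in each term. I would prove the operator identity $\XhMHOPJK{\cNXPyGJX} \GTIJeHWG{\FqTFSrwh \UPMtlSUt \aOIUkcRb} \cNXPyGJX = - \GTIJeHWG{\aOIUkcRb \UPMtlSUt \FqTFSrwh}$ by evaluating both sides on the Bell basis using the definition~\eqref{yDshCAyj} of \cNXPyGJX; that is, conjugating a bit-flip on the first qubit back through \cNXPyGJX{} turns it into a signed bit-flip on the second. Pulling \FqTFSrwh{} through \cNXPyGJX{} then converts $\gyBwVFHt{\FqTFSrwh \FbNPrGDa{\nuoeaWUW{S}_1'}{\FZGQkLzf}}$ into $- \gyBwVFHt{\FqTFSrwh \FbNPrGDa{\nuoeaWUW{S}_1}{\sigma_i}}$, which under the identification $\nuoeaWUW{S}_1 \UPMtlSUt \nuoeaWUW{S}_1' \cong \nuoeaWUW{S}_2 \UPMtlSUt \nuoeaWUW{S}_2'$ equals $- \gyBwVFHt{\FqTFSrwh \CQiPcAHs}$. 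Each term therefore collapses to $-\GTIJeHWG{\gyBwVFHt{\FqTFSrwh \CQiPcAHs}}^2$, and
\[ \gyBwVFHt{\KlqDGJMb \rho} = \frac{1}{2} - \frac{1}{2}
\gyBwVFHt{\GTIJeHWG{\FqTFSrwh \UPMtlSUt \FqTFSrwh} \sum_{i=1}^m p_i \, \CQiPcAHs \UPMtlSUt \CQiPcAHs} . \]

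Finally I would invoke the hypothesis. The operator $\sum_{i=1}^m p_i\, \CQiPcAHs \UPMtlSUt \CQiPcAHs$ is exactly $\FbNPrGDa{\nuoeaWUW{S}_1 \UPMtlSUt \nuoeaWUW{S}_2}{\sum_{i=1}^m p_i \sigma_i \UPMtlSUt \sigma_i}$, which by~\eqref{IMZsmswK} lies within trace distance $1/8$ of $\MrWNkHpB$. Because $\gyBwVFHt{\GTIJeHWG{\FqTFSrwh \UPMtlSUt \FqTFSrwh} \MrWNkHpB} = 0$, the trace in the previous display is unchanged if we subtract $\MrWNkHpB$ from the averaged state; Hölder's inequality (\cEgZOrfG{qtWTtRce}), together with $\bjhdjWcQ{\FqTFSrwh \UPMtlSUt \FqTFSrwh} = 1$, then bounds it in absolute value by the trace norm of the difference, \YRlgvUEY{} by $2 \cdot \frac{1}{8} = \frac{1}{4}$. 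Hence $\gyBwVFHt{\KlqDGJMb \rho} \geq \frac{1}{2} - \frac{1}{2} \cdot \frac{1}{4} = \frac{3}{8}$, comfortably above the claimed $2^{-20}$.

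The step I expect to be the main obstacle is the sign bookkeeping in the operator identity for $\XhMHOPJK{\cNXPyGJX} \FqTFSrwh \cNXPyGJX$ together with keeping the tensor-factor and register identifications straight. The genuine content, however, is that hypothesis~\eqref{IMZsmswK} prevents the reduced states $\CQiPcAHs$ from concentrating near eigenstates of \FqTFSrwh, which is the only mechanism that could drive the success probability to $0$: a single term with $\gyBwVFHt{\FqTFSrwh \CQiPcAHs} = \pm 1$ contributes success probability $0$. That $\rho$ is separable, ultimately supplied by the quantum \ExDgacVq{} theorem upstream, is precisely what lets the entire computation reduce to these single-qubit expectations.
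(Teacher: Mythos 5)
Your proposal is correct, and it takes a genuinely different route from the paper's proof --- in fact it proves a stronger statement. The paper never computes the acceptance probability of \nvBBVayi{alg:post selection} exactly: it partitions the ensemble into indices whose marginal \CQiPcAHs{} is $2^{-9}$-close in trace distance to \yAIHQVQC{\MEVIdGJb}, those close to \yAIHQVQC{\VPzthEpq}, and a remaining set; for each index in the remaining set a fidelity estimate (via \XsBWhkhM{gYublpMS}) gives success probability at least $2^{-18}$ \emph{regardless} of the state carried by $\FyfkKUDd{S}_1$, so weight at least $1/4$ on that set yields $2^{-20}$, while weight below $1/4$ is shown to contradict \eqref{IMZsmswK} by substituting product states via \cEgZOrfG{CpyQcgif} and computing eigenvalues of the deviation from \MrWNkHpB. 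Your argument keeps exactly the correlation this case analysis discards, namely that the \FqTFSrwh-expectation of the $\FyfkKUDd{S}_1$ marginal of $\FZGQkLzf$ equals $-\gyBwVFHt{\FqTFSrwh \CQiPcAHs}$: with $\KlqDGJMb = \GTIJeHWG{\aOIUkcRb + \FqTFSrwh \UPMtlSUt \FqTFSrwh}/2$ and the conjugation identity $\XhMHOPJK{\cNXPyGJX} \GTIJeHWG{\FqTFSrwh \UPMtlSUt \aOIUkcRb} \cNXPyGJX = - \aOIUkcRb \UPMtlSUt \FqTFSrwh$ (which is correct; it checks out on all four Bell vectors with the signs in Eq.~\eqref{yDshCAyj}), the success probability is exactly $\frac{1}{2} - \frac{1}{2} \sum_{i} p_i \gyBwVFHt{\FqTFSrwh \CQiPcAHs}^2$, and one application of \cEgZOrfG{qtWTtRce} against \eqref{IMZsmswK}, using $\gyBwVFHt{\GTIJeHWG{\FqTFSrwh \UPMtlSUt \FqTFSrwh} \MrWNkHpB} = 0$, bounds the subtracted term by $1/8$. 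What your route buys: a bound of $3/8$ in place of $2^{-20}$ (which, propagated through \cEgZOrfG{cNoXknwb}, would improve the final soundness constant), and the elimination of the partition, the proof by contradiction, and the Uhlmann/fidelity machinery (\cEgZOrfG{CpyQcgif}, \XsBWhkhM{gYublpMS}) from this lemma. What the paper's route buys in exchange is locality: each term is handled only through its single-qubit marginal on $\FyfkKUDd{S}_2'$, so the argument is insensitive to the precise definition of \cNXPyGJX{} and to the sign bookkeeping you rightly flag as the delicate step. Both proofs rest on the same two pillars you identify at the end: the separable product form of $\rho$ supplied upstream by the \ExDgacVq{} theorem, and the near-maximal mixedness hypothesis \eqref{IMZsmswK}.
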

The idea behind the proof of \cEgZOrfG{oBbNiOJY}
is very simple.
We show that if the measurement in \nvBBVayi{alg:post selection}
fails with high probability on a state of the form
$ \CQiPcAHs \UPMtlSUt \zeta $, where
$ \zeta \in \sYllRVEw{\nuoeaWUW{S}_1} $ is an arbitrary state,
then \CQiPcAHs{} must be close to either
\CARSeAIr{\MEVIdGJb} or \CARSeAIr{\VPzthEpq}.
But then the convex combination of the states
$ \FbNPrGDa{\nuoeaWUW{S}_1}{\sigma_i} \UPMtlSUt \CQiPcAHs $
won't be close to the maximally mixed state.
\begin{proof}[Proof of \cEgZOrfG{oBbNiOJY}]
Let us group the states in ensemble $\rho$ with respect to
their reduced state on $\nuoeaWUW{S}_2'$ being close to
\CARSeAIr{\MEVIdGJb}, or to \CARSeAIr{\VPzthEpq}, or being far
from both.
Formally, let $ \varepsilon_1 \aJAxrBMB 2^{-9} $,
\begin{align*}
\rkQwwcQs &\aJAxrBMB \LtfmKMob{i}{1 \leq i \leq m, \VslhEMHJ
\Vohxuxyh{\CQiPcAHs}{\yAIHQVQC{\MEVIdGJb}}
\leq \varepsilon_1} \text{,} \\
\tTkICuaj &\aJAxrBMB \LtfmKMob{i}{1 \leq i \leq m, \VslhEMHJ
\Vohxuxyh{\CQiPcAHs}{\yAIHQVQC{\VPzthEpq}}
\leq \varepsilon_1} \text{,} \\
\fTIaZVxF &\aJAxrBMB \hdzgZeCW{m} \setminus \GTIJeHWG{\rkQwwcQs \cup \tTkICuaj} .
\end{align*}
Since $ \Vohxuxyh{\CARSeAIr{\MEVIdGJb}}{\CARSeAIr{\VPzthEpq}} = 1 $
and $ \varepsilon_1 < 1/2 $, from the triangle
inequality we have that $ \rkQwwcQs \cap \tTkICuaj = \emptyset $.
\par
We first show that if the probability of \fTIaZVxF{}
is at least $ \varepsilon_2 \aJAxrBMB 1/4 $
then we are done.
So assume for now that
$ \varepsilon_2 \leq \sum_{i \in \fTIaZVxF} p_i $.
For all $ i \in \fTIaZVxF $ we have that
\begin{align}
\sqrt{\bymhjSLm{\MEVIdGJb} \CQiPcAHs \CARSeAIr{\MEVIdGJb}}
&= \DGaDmHJe{\CQiPcAHs}{\yAIHQVQC{\MEVIdGJb}}
\label{QikgnKvj} \\
&\leq \sqrt{1 - \Vohxuxyh{\CQiPcAHs}
{\yAIHQVQC{\MEVIdGJb}}^2}
\label{qTeweEEg} \\
&< \sqrt{1 - \varepsilon_1^2} \text{,}
\label{pEqSxNzP}
\end{align}
where \eqref{QikgnKvj} follows from
\eqref{ZIjFQVNH}, \eqref{qTeweEEg}
follows from \XsBWhkhM{gYublpMS},
and \eqref{pEqSxNzP} is from the definition of \fTIaZVxF.
The above implies that
\begin{align*}
\bymhjSLm{\MEVIdGJb} \CQiPcAHs \CARSeAIr{\MEVIdGJb}
< 1 - \varepsilon_1^2
\qquad \text{and similarly} \qquad
\bymhjSLm{\VPzthEpq} \CQiPcAHs \CARSeAIr{\VPzthEpq}
< 1 - \varepsilon_1^2 .
\end{align*}
From the above and using the fact that
\[ \bymhjSLm{\MEVIdGJb} \CQiPcAHs \CARSeAIr{\MEVIdGJb}
+ \bymhjSLm{\VPzthEpq} \CQiPcAHs \CARSeAIr{\VPzthEpq}
= \gyBwVFHt{\CQiPcAHs}
= 1 \text{,} \]
we get that
\begin{align*}
\varepsilon_1^2 < \bymhjSLm{\MEVIdGJb} \CQiPcAHs \CARSeAIr{\MEVIdGJb}
\qquad \text{and} \qquad
\varepsilon_1^2 < \bymhjSLm{\VPzthEpq} \CQiPcAHs \CARSeAIr{\VPzthEpq} .
\end{align*}
Let us take an arbitrary state
\[ \CARSeAIr{\psi} \aJAxrBMB a \CARSeAIr{\MEVIdGJb}
+ b \CARSeAIr{\VPzthEpq} \in \nuoeaWUW{S}_1,
\qquad a, b \in \ynpHhbyi, \qquad
\UylNyihM{a}^2 + \UylNyihM{b}^2 = 1 . \]
If the state of \GTIJeHWG{\FyfkKUDd{S}_2', \FyfkKUDd{S}_1},
in the input to \nvBBVayi{alg:post selection}, is
$ \CQiPcAHs \UPMtlSUt \yAIHQVQC{\psi} $ then
the algorithm will succeed with probability
\begin{align*}
\gyBwVFHt{\GTIJeHWG{\CQiPcAHs \UPMtlSUt \yAIHQVQC{\psi}}
\KlqDGJMb}
&= \UylNyihM{a}^2 \cdot \bymhjSLm{\MEVIdGJb} \CQiPcAHs \CARSeAIr{\MEVIdGJb}
+ \UylNyihM{b}^2 \cdot \bymhjSLm{\VPzthEpq} \CQiPcAHs \CARSeAIr{\VPzthEpq} \\
&> \varepsilon_1^2 \GTIJeHWG{\UylNyihM{a}^2 + \UylNyihM{b}^2} \\
&= \varepsilon_1^2 \text{,}
\end{align*}
where the first equality follows from direct calculation using
\begin{align*}
\CARSeAIr{\Phi^{+}} = \frac{\CARSeAIr{\MEVIdGJb} \UPMtlSUt \CARSeAIr{\MEVIdGJb}
+ \CARSeAIr{\VPzthEpq} \UPMtlSUt \CARSeAIr{\VPzthEpq}}{\sqrt{2}}
\qquad \text{and} \qquad
\CARSeAIr{\Psi^{+}} = \frac{\CARSeAIr{\MEVIdGJb} \UPMtlSUt \CARSeAIr{\MEVIdGJb}
- \CARSeAIr{\VPzthEpq} \UPMtlSUt \CARSeAIr{\VPzthEpq}}{\sqrt{2}} .
\end{align*}
This implies that if the state of
\GTIJeHWG{\FyfkKUDd{S}_2', \FyfkKUDd{S}_1} is
$ \CQiPcAHs \UPMtlSUt \zeta $, for any
$ \zeta \in \sYllRVEw{\nuoeaWUW{S}_1} $, then the
probability that \nvBBVayi{alg:post selection} succeeds
is at least $\varepsilon_1^2$.
We got that if $ \varepsilon_2 \leq \sum_{i \in \fTIaZVxF} p_i $
then \nvBBVayi{alg:post selection} succeeds with probability
at least $ \varepsilon_1^2 \varepsilon_2 = 2^{-20} $,
in which case we are done.
\par
So, from now on, assume that
$ \sum_{i \in \fTIaZVxF} p_i < \varepsilon_2 $.
We will show that this assumption leads to a contradiction,
which will finish the proof.
\cEgZOrfG{CpyQcgif} implies that
\begin{align*}
\forall i \in \rkQwwcQs, \VslhEMHJ
\exists \tau_i \in \sYllRVEw{\nuoeaWUW{S}_2}
&\lyfuwjUX \Vohxuxyh{\sigma_i}
{\tau_i \UPMtlSUt \yAIHQVQC{\MEVIdGJb}}
\leq \sqrt{2 \varepsilon_1} \text{,} \\
\forall i \in \tTkICuaj, \VslhEMHJ
\exists \tau_i \in \sYllRVEw{\nuoeaWUW{S}_2}
&\lyfuwjUX \Vohxuxyh{\sigma_i}
{\tau_i \UPMtlSUt \yAIHQVQC{\VPzthEpq}}
\leq \sqrt{2 \varepsilon_1} .
\end{align*}
We now replace $\sigma_i$ with
$ \tau_i \UPMtlSUt \yAIHQVQC{\MEVIdGJb} $ or
$ \tau_i \UPMtlSUt \yAIHQVQC{\VPzthEpq} $ in $\rho$.
Formally, let us define
\begin{align*}
\tVRtKpaI &\aJAxrBMB \sum_{i \in \fTIaZVxF} p_i
\GTIJeHWG{\FZGQkLzf} \UPMtlSUt \sigma_i \text{,} \\
\rho' &\aJAxrBMB \sum_{i \in \rkQwwcQs} p_i
\GTIJeHWG{\cNXPyGJX \GTIJeHWG{\tau_i \UPMtlSUt \yAIHQVQC{\MEVIdGJb}}
\XhMHOPJK{\cNXPyGJX}} \UPMtlSUt \tau_i \UPMtlSUt \yAIHQVQC{\MEVIdGJb} \\
&\qquad {} + \sum_{i \in \tTkICuaj} p_i
\GTIJeHWG{\cNXPyGJX \GTIJeHWG{\tau_i \UPMtlSUt \yAIHQVQC{\VPzthEpq}}
\XhMHOPJK{\cNXPyGJX}} \UPMtlSUt \tau_i \UPMtlSUt \yAIHQVQC{\VPzthEpq} \\
&\qquad {} + \tVRtKpaI \text{,}
\end{align*}
where $ \gyBwVFHt{\tVRtKpaI} < \varepsilon_2 $.
Note that $ \Vohxuxyh{\rho}{\rho'}
< 2 \sqrt{2 \varepsilon_1} $, which, together
with \eqref{IMZsmswK}, implies that
\begin{align}
\Vohxuxyh{\LvsCZiIr}{\MrWNkHpB}
\leq 2 \sqrt{2 \varepsilon_1} + \frac{1}{8}
= \frac{1}{4} \text{,}
\label{HAaUNije}
\end{align}
where
\[ \LvsCZiIr \aJAxrBMB
\FbNPrGDa{\nuoeaWUW{S}_1 \UPMtlSUt \nuoeaWUW{S}_2}
{\GTIJeHWG{\XhMHOPJK{\cNXPyGJX} \UPMtlSUt \aOIUkcRb_{\nuoeaWUW{S}_2 \UPMtlSUt \nuoeaWUW{S}_2'}}
\rho' \GTIJeHWG{\cNXPyGJX \UPMtlSUt
\aOIUkcRb_{\nuoeaWUW{S}_2 \UPMtlSUt \nuoeaWUW{S}_2'}}} . \]
On the other hand, we have that
\begin{align*}
\LvsCZiIr = \hxCscLfe + \RWgRdDui \text{,}
\end{align*}
for some \RWgRdDui, where we used the shorthand
$ p_{+} \aJAxrBMB \sum_{i \in \rkQwwcQs} p_i $ and
$ p_{-} \aJAxrBMB \sum_{i \in \tTkICuaj} p_i $.
Note that $ \gyBwVFHt{\RWgRdDui} < \varepsilon_2 $, so
\cEgZOrfG{XViefLtt} implies that
\begin{align}
\Vohxuxyh{\LvsCZiIr}{\hxCscLfe}
\leq \frac{\varepsilon_2}{2} .
\label{FoAMkNPk}
\end{align}
The following calculation will lead us to a contradiction.
\begin{align}
\frac{1}{2} &\leq \frac{1}{2}
\GTIJeHWG{\UylNyihM{\frac{1}{4} - p_{+}} + \UylNyihM{\frac{1}{4} - p_{-}}
+ \frac{1}{2}} \nonumber \\
&= \frac{1}{2} \JjPvCTWX{\MrWNkHpB - \GTIJeHWG{\hxCscLfe}}
\label{qEIQMTWs} \\
&= \Vohxuxyh{\MrWNkHpB}{\hxCscLfe} \nonumber \\
&\leq \Vohxuxyh{\LvsCZiIr}{\MrWNkHpB}
+ \Vohxuxyh{\LvsCZiIr}{\hxCscLfe}
\label{shbenJPg} \\
&\leq \Vohxuxyh{\LvsCZiIr}{\MrWNkHpB}
+ \frac{\varepsilon_2}{2} \text{,}
\label{jdtzJeZs}
\end{align}
where \eqref{qEIQMTWs} is because
the eigenvalues of $ \MrWNkHpB - \GTIJeHWG{\hxCscLfe} $ are
$ \frac{1}{4} - p_{+} $, $ \frac{1}{4} - p_{-} $,
and $ \frac{1}{4} $ with multiplicity $2$.
Eq.~\eqref{shbenJPg} follows from
the triangle inequality and at \eqref{jdtzJeZs}
we used \eqref{FoAMkNPk}.
Eq.~\eqref{jdtzJeZs} implies that
\[ \Vohxuxyh{\LvsCZiIr}{\MrWNkHpB}
\geq \frac{1}{2} - \frac{\varepsilon_2}{2}
= \frac{3}{8} \text{,} \]
which contradicts to \eqref{HAaUNije}.
So we conclude that it must be that
$ \varepsilon_2 \leq \sum_{i \in \fTIaZVxF} p_i $,
in which case \nvBBVayi{alg:post selection} succeeds with
the desired probability, as argued above.
\end{proof}
The following lemma is similar to Proposition~24
of \cite{Kobayashi2013}.
\begin{lemma}
\label{ijiuOsQF}
Let $\FyfkKUDd{S}_1$, $\FyfkKUDd{S}_1'$, $\FyfkKUDd{S}_2$,
$\FyfkKUDd{S}_2'$ be single-qubit registers and
let the state of \GTIJeHWG{\FyfkKUDd{S}_1, \FyfkKUDd{S}_1',
\FyfkKUDd{S}_2, \FyfkKUDd{S}_2'} be
\[ \rho \aJAxrBMB \sum_{i=1}^m p_i \sigma_i
\UPMtlSUt \sigma_i \text{,} \]
where $ m \in \FLbfwYzZ $, \LtfmKMob{p_i}{i = 1, \ldots, m}
is a probability distribution, and
$ \sigma_i = \mpAYNUjM{\xi_i} $, for some
$ \xi_i \in \sYllRVEw{\nuoeaWUW{S}_1 \UPMtlSUt \nuoeaWUW{S}_1'}
\cong \sYllRVEw{\nuoeaWUW{S}_2 \UPMtlSUt \nuoeaWUW{S}_2'} $.
Let $ 0 \leq \varepsilon < 1 $.
If the \KlrOGnZA{}, applied between \GTIJeHWG{\FyfkKUDd{S}_1, \FyfkKUDd{S}_1'}
and \GTIJeHWG{\FyfkKUDd{S}_2, \FyfkKUDd{S}_2'}, succeeds
with probability at least $ 1 - \varepsilon $ then
there exist a set of states
\[ \LtfmKMob{\CARSeAIr{\varphi_i}}{1 \leq i \leq m, \VslhEMHJ
\CARSeAIr{\varphi_i} \in \NxqVzdIP \text{ or }
\CARSeAIr{\varphi_i} \in \NticoQpX} \]
\GmcWftaL
\begin{align*}
\Vohxuxyh{\rho}{\sum_{i=1}^m p_i
\yAIHQVQC{\varphi_i} \UPMtlSUt \yAIHQVQC{\varphi_i}}
\leq 6 \sqrt{\varepsilon} .
\end{align*}
\end{lemma}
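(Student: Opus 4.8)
The plan is to extract, from the high \KlrOGnZA{} success probability, a bound on the \emph{average purity} of the states $\sigma_i$, and then exploit the block-diagonal form of each $\sigma_i$ to approximate it by a pure state lying in $\NxqVzdIP$ or $\NticoQpX$. First I would combine \XsBWhkhM{BNEmInIr} with linearity of the test's acceptance probability in its input: since the state presented to the \KlrOGnZA{} is the mixture $\rho = \sum_{i=1}^m p_i \sigma_i \UPMtlSUt \sigma_i$, the success probability equals $\sum_i p_i \frac{1 + \gyBwVFHt{\sigma_i^2}}{2} = \frac{1}{2}\GTIJeHWG{1 + \sum_i p_i \gyBwVFHt{\sigma_i^2}}$. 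Hence the hypothesis that it is at least $1 - \varepsilon$ yields $\sum_i p_i \gyBwVFHt{\sigma_i^2} \geq 1 - 2\varepsilon$.

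Next I would examine a single $\sigma_i = \mpAYNUjM{\xi_i}$. By its definition in Eq.~\eqref{VpYNOSVe}, $\sigma_i = \KlqDGJMb \xi_i \KlqDGJMb + \ybAbpHUd \xi_i \ybAbpHUd$ is block-diagonal with respect to the orthogonal decomposition $\ynpHhbyi^4 = \NxqVzdIP \oplus \NticoQpX$. Consequently its eigenvectors may be chosen to lie entirely inside $\NxqVzdIP$ or entirely inside $\NticoQpX$; in particular the eigenvector $\CARSeAIr{\varphi_i}$ belonging to the largest eigenvalue $\lambda_i$ of $\sigma_i$ lies in one of these two subspaces, which is exactly the type of state the lemma demands. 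The elementary purity bound $\gyBwVFHt{\sigma_i^2} \leq \lambda_i$ (the squares of the eigenvalues sum to at most $\lambda_i$ times their sum) then gives $\Vohxuxyh{\sigma_i}{\yAIHQVQC{\varphi_i}} = 1 - \lambda_i \leq 1 - \gyBwVFHt{\sigma_i^2}$.

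Finally I would assemble the two pieces. For each $i$, inserting the hybrid state $\yAIHQVQC{\varphi_i} \UPMtlSUt \sigma_i$ and using $\JjPvCTWX{\GTIJeHWG{\sigma_i - \yAIHQVQC{\varphi_i}} \UPMtlSUt \eta} = \JjPvCTWX{\sigma_i - \yAIHQVQC{\varphi_i}}$ for any state $\eta$, the triangle inequality (\cEgZOrfG{BhoQIhMu}) gives $\Vohxuxyh{\sigma_i \UPMtlSUt \sigma_i}{\GTIJeHWG{\yAIHQVQC{\varphi_i}}^{\UPMtlSUt 2}} \leq 2 \Vohxuxyh{\sigma_i}{\yAIHQVQC{\varphi_i}}$. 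Summing against $p_i$ and using joint convexity of the trace distance (again the triangle inequality applied termwise),
\[
\Vohxuxyh{\rho}{\sum_{i=1}^m p_i \GTIJeHWG{\yAIHQVQC{\varphi_i}}^{\UPMtlSUt 2}}
\leq 2 \sum_{i=1}^m p_i \GTIJeHWG{1 - \gyBwVFHt{\sigma_i^2}}
\leq 4 \varepsilon \leq 6 \sqrt{\varepsilon} ,
\]
the last inequality holding for every $\varepsilon \in \YnJXqMTX{0,1}$.

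The one genuinely delicate point — and the only place where the specific structure of the problem enters — is the claim that the dominant eigenvector can be taken inside a single subspace $\NxqVzdIP$ or $\NticoQpX$. This is precisely what the block form produced by the super-operator $\hqLKZmNM$ buys us, and it is what guarantees the approximating pure states have the required type; without it the purity bound alone would not place $\CARSeAIr{\varphi_i}$ in one of the two subspaces. Everything else is a routine chain of trace-distance manipulations. I would note that this argument in fact delivers the stronger bound $4\varepsilon$; the weaker $6\sqrt{\varepsilon}$ stated here is presumably retained only because that is the form in which the lemma is consumed in the soundness proof (cf.\ Eq.~\eqref{kLpPiynP}).
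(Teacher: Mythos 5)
Your proposal is correct, and it takes a genuinely different---and in fact quantitatively stronger---route than the paper's own proof. The paper runs a Markov-type argument: it thresholds on purity, splitting the indices into a set $\UuexCuFw$ with $\gyBwVFHt{\sigma_i^2} > 1 - 2\sqrt{\varepsilon}$ and a complement $\fTIaZVxF$, shows by contradiction that $\fTIaZVxF$ carries probability mass below $2\sqrt{\varepsilon}$, relates purity to the top eigenvalue via the H\"{o}lder inequality (\cEgZOrfG{qtWTtRce}), approximates each good $\sigma_i$ by $\yAIHQVQC{\varphi_i}$ via \cEgZOrfG{eJTZpBKB}, and finally pays three triangle-inequality terms (two uses of \cEgZOrfG{XViefLtt} to discard and reinsert the bad indices), totalling $6\sqrt{\varepsilon}$; the case $\varepsilon = 0$ is handled separately. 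You avoid the good/bad split entirely: your per-index estimate $\Vohxuxyh{\sigma_i}{\yAIHQVQC{\varphi_i}} = 1 - \lambda_i \leq 1 - \gyBwVFHt{\sigma_i^2}$ is \emph{linear} in the purity deficit, so joint convexity of the trace distance plus the averaged hypothesis $\sum_i p_i \gyBwVFHt{\sigma_i^2} \geq 1 - 2\varepsilon$ gives $4\varepsilon$ in one pass, uniformly in $\varepsilon \in \YnJXqMTX{0,1}$ (including $\varepsilon = 0$), and $4\varepsilon \leq 6\sqrt{\varepsilon}$ there. Your elementary bound $\gyBwVFHt{\sigma_i^2} \leq \lambda_i$ is exactly the content of the paper's H\"{o}lder step, and your block-diagonality argument---that $\sigma_i = \KlqDGJMb \xi_i \KlqDGJMb + \ybAbpHUd \xi_i \ybAbpHUd$ lets one choose the top eigenvector inside $\NxqVzdIP$ or $\NticoQpX$---correctly fills in a point the paper asserts without proof; you are also right that this is the only place the structure of \hqLKZmNM{} is needed. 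What your approach buys is a shorter proof with no case distinction, no need for \cEgZOrfG{eJTZpBKB} or \cEgZOrfG{XViefLtt}, and a bound linear rather than square-root in $\varepsilon$; and your closing remark is accurate: the soundness proof (Eq.~\eqref{kLpPiynP}) only consumes the $6\sqrt{\varepsilon}$ form, so the weaker statement suffices downstream.
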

\begin{proof}
On input $ \sigma_i \UPMtlSUt \sigma_i $ the
\KlrOGnZA{} succeeds with probability
$ \GTIJeHWG{1 + \gyBwVFHt{\sigma_i^2}} / 2 $, by
\XsBWhkhM{BNEmInIr}.
So with input $\rho$ the \KlrOGnZA{} succeeds with
probability
\[ \sum_{i=1}^m p_i \frac{1 + \gyBwVFHt{\sigma_i^2}}{2}
\geq 1 - \varepsilon . \]
If $ \varepsilon = 0 $ it implies that
all $\sigma_i$'s are pure and the statement of the
lemma follows.
So, from now on, assume that $ 0 < \varepsilon $.
Then the above inequality intuitively means that
for most of the $i$'s, \gyBwVFHt{\sigma_i^2} must be
close to $1$.
Formally, let
\begin{align*}
\fTIaZVxF &\aJAxrBMB \LtfmKMob{i}{1 \leq i \leq m, \VslhEMHJ
\gyBwVFHt{\sigma_i^2} \leq 1 - 2 \sqrt{\varepsilon}}
\text{,} \\
\UuexCuFw &\aJAxrBMB \hdzgZeCW{m} \setminus \fTIaZVxF .
\end{align*}
Suppose towards contradiction that
$ 2 \sqrt{\varepsilon} \leq \sum_{i \in \fTIaZVxF} p_i $.
Then the probability that the \KlrOGnZA{} fails is
\begin{align*}
\sum_{i=1}^m p_i \frac{1 - \gyBwVFHt{\sigma_i^2}}{2}
&\geq \sum_{i \in \fTIaZVxF} p_i \frac{1 - \gyBwVFHt{\sigma_i^2}}{2} \\
&\geq \sum_{i \in \fTIaZVxF} p_i \frac{1 -
\GTIJeHWG{1 - 2 \sqrt{\varepsilon}}}{2} \\
&\geq \sqrt{\varepsilon} \cdot \sum_{i \in \fTIaZVxF} p_i \\
&\geq 2 \varepsilon \text{,}
\end{align*}
which is a contradiction.
This implies that
$ \sum_{i \in \fTIaZVxF} p_i < 2 \sqrt{\varepsilon} $.
For all $ i \in \UuexCuFw $, let $\lambda_i$ be the maximum
eigenvalue of $\sigma_i$ and \CARSeAIr{\varphi_i} be the
corresponding eigenstate.
Note that either $ \CARSeAIr{\varphi_i} \in \NxqVzdIP $
or $ \CARSeAIr{\varphi_i} \in \NticoQpX $.
From the definition of \UuexCuFw, we have that
\begin{align*}
1 - 2 \sqrt{\varepsilon}
< \gyBwVFHt{\sigma_i^2}
\leq \JjPvCTWX{\sigma_i} \cdot \bjhdjWcQ{\sigma_i}
= \bjhdjWcQ{\sigma_i}
= \lambda_i \text{,}
\end{align*}
where the second inequality follows
from \cEgZOrfG{qtWTtRce}.
The above calculation, together with
\cEgZOrfG{eJTZpBKB},
imply that
\begin{align}
\forall i \in \UuexCuFw \lyfuwjUX
\Vohxuxyh{\sigma_i}{\yAIHQVQC{\varphi_i}}
\leq 2 \sqrt{\varepsilon} .
\label{ZgYnKQXO}
\end{align}
We can now bound the required trace distance.
\begin{align}
\Vohxuxyh{\rho}{\sum_{i=1}^m p_i
\GTIJeHWG{\yAIHQVQC{\varphi_i}}^{\UPMtlSUt 2}}
&\leq \Vohxuxyh{\sum_{i=1}^m p_i
\sigma_i^{\UPMtlSUt 2}}
{\sum_{i \in \UuexCuFw} p_i \sigma_i^{\UPMtlSUt 2}}
+ \Vohxuxyh{\sum_{i \in \UuexCuFw} p_i
\sigma_i^{\UPMtlSUt 2}}
{\sum_{i \in \UuexCuFw} p_i
\GTIJeHWG{\yAIHQVQC{\varphi_i}}^{\UPMtlSUt 2}} \nonumber \\
&\qquad {} + \Vohxuxyh{\sum_{i \in \UuexCuFw} p_i
\GTIJeHWG{\yAIHQVQC{\varphi_i}}^{\UPMtlSUt 2}}
{\sum_{i=1}^m p_i \GTIJeHWG{\yAIHQVQC{\varphi_i}}^{\UPMtlSUt 2}}
\label{cMqMmEBg} \\
&\leq 6 \sqrt{\varepsilon} \text{,}
\label{TCKJspSw}
\end{align}
where \eqref{cMqMmEBg} follows from
the triangle inequality and at
\eqref{TCKJspSw} we used
\cEgZOrfG{XViefLtt} twice and
\eqref{ZgYnKQXO}.
\end{proof}
\section*{Acknowledgements}
The author would like to thank Rahul Jain,
Sarvagya Upadhyay, and Penghui Yao for helpful
discussions on the topic.
\fERcsqXi

\newcommand{\etalchar}[1]{$^{#1}$}
\begin{thebibliography}{BCWdW01}

\bibitem[Aar09]{Aaronson2009a}
Scott Aaronson.
\newblock On perfect completeness for {QMA}.
\newblock {\em Quantum Information and Computation}, 9(1):81--89, January 2009,
  \href{http://arxiv.org/abs/0806.0450}
  {\textsc{arXiv:\linebreak[0]0806.0450}}.

\bibitem[AB09]{Arora2009}
Sanjeev Arora and Boaz Barak.
\newblock {\em Computational Complexity: A Modern Approach}.
\newblock Cambridge University Press, New York, NY, USA, 1st edition, 2009.

\bibitem[ABD{\etalchar{+}}09]{Aaronson2009}
Scott Aaronson, Salman Beigi, Andrew Drucker, Bill Fefferman, and Peter Shor.
\newblock The power of unentanglement.
\newblock {\em Theory of Computing}, 5(1):1--42, 2009,
  \href{http://arxiv.org/abs/0804.0802}
  {\textsc{arXiv:\linebreak[0]0804.0802}}.

\bibitem[AN02]{Aharonov2002}
Dorit Aharonov and Tomer Naveh.
\newblock Quantum {NP} - a survey.
\newblock October 2002, \href{http://arxiv.org/abs/quant-ph/0210077}
  {\textsc{arXiv:\linebreak[0]quant-ph/0210077}}.

\bibitem[Bab85]{Babai1985}
L\'{a}szl\'{o} Babai.
\newblock Trading group theory for randomness.
\newblock In {\em Proceedings of the 17th annual ACM Symposium on Theory of
  Computing}, STOC '85, pages 421--429, 1985.

\bibitem[BBD{\etalchar{+}}97]{Barenco1997}
Adriano Barenco, Andr\'{e} Berthiaume, David Deutsch, Artur Ekert, Richard
  Jozsa, and Chiara Macchiavello.
\newblock Stabilization of quantum computations by symmetrization.
\newblock {\em SIAM Journal on Computing}, 26(5):1541--1557, 1997,
  \href{http://arxiv.org/abs/quant-ph/9604028}
  {\textsc{arXiv:\linebreak[0]quant-ph/\linebreak[0]9604028}}.

\bibitem[BCWdW01]{Buhrman2001}
Harry Buhrman, Richard Cleve, John Watrous, and Ronald de~Wolf.
\newblock Quantum fingerprinting.
\newblock {\em Physical Review Letters}, 87(16):167902, September 2001,
  \href{http://arxiv.org/abs/quant-ph/0102001}
  {\textsc{arXiv:\linebreak[0]quant-ph/0102001}}.

\bibitem[Boo12]{Bookatz2012}
Adam~D. Bookatz.
\newblock {QMA}-complete problems.
\newblock December 2012, \href{http://arxiv.org/abs/1212.6312}
  {\textsc{arXiv:\linebreak[0]1212.\linebreak[0]6312}}.

\bibitem[Bra06]{Bravyi2006}
Sergey Bravyi.
\newblock Efficient algorithm for a quantum analogue of 2-{SAT}.
\newblock February 2006, \href{http://arxiv.org/abs/quant-ph/0602108}
  {\textsc{arXiv:\linebreak[0]quant-ph/0602108}}.

\bibitem[BSW11]{Beigi2011}
Salman Beigi, Peter Shor, and John Watrous.
\newblock Quantum interactive proofs with short messages.
\newblock {\em Theory of Computing}, 7(1):101--117, 2011,
  \href{http://arxiv.org/abs/1004.0411}
  {\textsc{arXiv:\linebreak[0]1004.0411}}.

\bibitem[BT09]{Blier2009}
Hugue Blier and Alain Tapp.
\newblock All languages in {NP} have very short quantum proofs.
\newblock In {\em Third International Conference on Quantum, Nano and Micro
  Technologies}, pages 34--37, 2009, \href{http://arxiv.org/abs/0709.0738}
  {\textsc{arXiv:\linebreak[0]0709.0738}}.

\bibitem[CKMR07]{Christandl2007}
Matthias Christandl, Robert K\"{o}nig, Graeme Mitchison, and Renato Renner.
\newblock One-and-a-half quantum de {F}inetti theorems.
\newblock {\em Communications in Mathematical Physics}, 273(2):473--498, 2007,
  \href{http://arxiv.org/abs/quant-ph/0602130}
  {\textsc{arXiv:\linebreak[0]quant-ph/0602130}}.

\bibitem[GMR89]{Goldwasser1989}
Shafi Goldwasser, Silvio Micali, and Charles Rackoff.
\newblock The knowledge complexity of interactive proof systems.
\newblock {\em SIAM Journal on Computing}, 18(1):186--208, 1989.

\bibitem[GN13]{Gosset2013}
David Gosset and Daniel Nagaj.
\newblock Quantum {3-SAT} is {QMA$_1$}-complete.
\newblock February 2013, \href{http://arxiv.org/abs/1302.0290}
  {\textsc{arXiv:\linebreak[0]1302.0290}}.

\bibitem[GS86]{Goldwasser1986}
Shafi Goldwasser and Michael Sipser.
\newblock Private coins versus public coins in interactive proof systems.
\newblock In {\em Proceedings of the 18th annual ACM Symposium on Theory of
  Computing}, STOC '86, pages 59--68, 1986.

\bibitem[HM10]{Harrow2010}
Aram~W. Harrow and Ashley Montanaro.
\newblock An efficient test for product states with applications to quantum
  {M}erlin-{A}rthur games.
\newblock In {\em 51st Annual IEEE Symposium on Foundations of Computer
  Science}, pages 633--642, 2010, \href{http://arxiv.org/abs/1001.0017}
  {\textsc{arXiv:\linebreak[0]1001.0017}}.

\bibitem[JJUW10]{Jain2010}
Rahul Jain, Zhengfeng Ji, Sarvagya Upadhyay, and John Watrous.
\newblock {QIP} = {PSPACE}.
\newblock In {\em Proceedings of the 42nd annual ACM Symposium on Theory of
  Computing}, STOC '10, pages 573--582, 2010,
  \href{http://arxiv.org/abs/0907.4737}
  {\textsc{arXiv:\linebreak[0]0907.4737}}.

\bibitem[JKNN12]{Jordan2012}
Stephen~P. Jordan, Hirotada Kobayashi, Daniel Nagaj, and Harumichi Nishimura.
\newblock Achieving perfect completeness in classical-witness quantum
  {M}erlin-{A}rthur proof systems.
\newblock {\em Quantum Information and Computation}, 12(5--6):461--471, May
  2012, \href{http://arxiv.org/abs/1111.5306}
  {\textsc{arXiv:\linebreak[0]1111.5306}}.

\bibitem[JN12]{Jain2012a}
Rahul Jain and Ashwin Nayak.
\newblock Short proofs of the quantum substate theorem.
\newblock {\em IEEE Transactions on Information Theory}, 58(6):3664--3669, June
  2012, \href{http://arxiv.org/abs/1103.6067}
  {\textsc{arXiv:\linebreak[0]1103.6067}}.

\bibitem[KKMV08]{Kempe2008}
Julia Kempe, Hirotada Kobayashi, Keiji Matsumoto, and Thomas Vidick.
\newblock Using entanglement in quantum multi-prover interactive proofs.
\newblock In {\em 23rd Annual IEEE Conference on Computational Complexity},
  pages 211--222, June 2008, \href{http://arxiv.org/abs/0711.3715}
  {\textsc{arXiv:\linebreak[0]0711.3715}}.

\bibitem[KKR06]{Kempe2006}
Julia Kempe, Alexei Kitaev, and Oded Regev.
\newblock The complexity of the local hamiltonian problem.
\newblock {\em SIAM Journal on Computing}, 35(5):1070--1097, 2006,
  \href{http://arxiv.org/abs/quant-ph/0406180}
  {\textsc{arXiv:\linebreak[0]quant-ph/0406180}}.

\bibitem[KLGN13]{Kobayashi2013}
Hirotada Kobayashi, Fran\c{c}ois Le~Gall, and Harumichi Nishimura.
\newblock Stronger methods of making quantum interactive proofs perfectly
  complete.
\newblock In {\em Proceedings of the 4th conference on Innovations in
  Theoretical Computer Science}, ITCS '13, pages 329--352, New York, NY, USA,
  2013. ACM, \href{http://arxiv.org/abs/1210.1290}
  {\textsc{arXiv:\linebreak[0]1210.1290}}.

\bibitem[KMY03]{Kobayashi2003}
Hirotada Kobayashi, Keiji Matsumoto, and Tomoyuki Yamakami.
\newblock Quantum {M}erlin-{A}rthur proof systems: {A}re multiple {M}erlins
  more helpful to {A}rthur?
\newblock In {\em Algorithms and Computation}, volume 2906 of {\em Lecture
  Notes in Computer Science}, pages 189--198. Springer Berlin / Heidelberg,
  2003, \href{http://arxiv.org/abs/quant-ph/0306051}
  {\textsc{arXiv:\linebreak[0]quant-ph/\linebreak[0]0306051}}.

\bibitem[Kni96]{Knill1996}
Emanuel Knill.
\newblock Quantum randomness and nondeterminism.
\newblock October 1996, \href{http://arxiv.org/abs/quant-ph/9610012}
  {\textsc{arXiv:\linebreak[0]quant-ph/9610012}}.

\bibitem[KSV02]{Kitaev2002}
A.~Yu. Kitaev, A.~H. Shen, and M.~N. Vyalyi.
\newblock {\em Classical and Quantum Computation}.
\newblock American Mathematical Society, 2002.

\bibitem[KW00]{Kitaev2000}
Alexei Kitaev and John Watrous.
\newblock Parallelization, amplification, and exponential time simulation of
  quantum interactive proof systems.
\newblock In {\em Proceedings of the 32nd annual ACM Symposium on Theory of
  Computing}, STOC '00, pages 608--617, 2000.

\bibitem[LFKN92]{Lund1992}
Carsten Lund, Lance Fortnow, Howard Karloff, and Noam Nisan.
\newblock Algebraic methods for interactive proof systems.
\newblock {\em Journal of the ACM}, 39(4):859--868, October 1992.

\bibitem[MW05]{Marriott2005}
Chris Marriott and John Watrous.
\newblock Quantum {A}rthur-{M}erlin games.
\newblock {\em Computational Complexity}, 14(2):122--152, 2005,
  \href{http://arxiv.org/abs/cs/0506068}
  {\textsc{arXiv:\linebreak[0]cs/0506068}}.

\bibitem[NC00]{Nielsen2000}
Michael~A. Nielsen and Isaac~L. Chuang.
\newblock {\em {Quantum Computation and Quantum Information}}.
\newblock Cambridge University Press, 2000.

\bibitem[NWZ09]{Nagaj2009}
Daniel Nagaj, Pawel Wocjan, and Yong Zhang.
\newblock Fast amplification of {QMA}.
\newblock {\em Quantum Information and Computation}, 9(11):1053--1068, November
  2009, \href{http://arxiv.org/abs/0904.1549}
  {\textsc{arXiv:\linebreak[0]0904.1549}}.

\bibitem[Per12]{Pereszlenyi2012}
Attila Pereszl\'{e}nyi.
\newblock Multi-prover quantum {M}erlin-{A}rthur proof systems with small gap.
\newblock May 2012, \href{http://arxiv.org/abs/1205.2761}
  {\textsc{arXiv:\linebreak[0]1205.2761}}.

\bibitem[Sha92]{Shamir1992}
Adi Shamir.
\newblock {IP} = {PSPACE}.
\newblock {\em Journal of the ACM}, 39(4):869--877, October 1992.

\bibitem[She92]{Shen1992}
A.~Shen.
\newblock {IP} = {PSPACE}: Simplified proof.
\newblock {\em Journal of the ACM}, 39(4):878--880, October 1992.

\bibitem[Vya03]{Vyalyi2003}
Mikhail~N. Vyalyi.
\newblock {QMA} = {PP} implies that {PP} contains {PH}.
\newblock Technical Report 21 (2003), Electronic Colloquium on Computational
  Complexity, April 2003.
\newblock \href{http://eccc.hpi-web.de/report/2003/021/}{TR03-021}.

\bibitem[Wat00]{Watrous2000}
John Watrous.
\newblock Succinct quantum proofs for properties of finite groups.
\newblock In {\em 41st Annual IEEE Symposium on Foundations of Computer
  Science}, pages 537--546, 2000, \href{http://arxiv.org/abs/cs/0009002}
  {\textsc{arXiv:\linebreak[0]cs/0009002}}.

\bibitem[Wat03]{Watrous2003}
John Watrous.
\newblock {PSPACE} has constant-round quantum interactive proof systems.
\newblock {\em Theoretical Computer Science}, 292(3):575--588, 2003.

\bibitem[Wat08a]{Watrous2008a}
John Watrous.
\newblock Quantum computational complexity.
\newblock April 2008, \href{http://arxiv.org/abs/0804.3401}
  {\textsc{arXiv:\linebreak[0]0804.\linebreak[0]3401}}.

\bibitem[Wat08b]{Watrous2008}
John Watrous.
\newblock Theory of quantum information.
\newblock Lecture notes from Fall 2008,
  \url{https://cs.uwaterloo.ca/~watrous/quant-info/}, 2008.

\bibitem[Wat09]{Watrous2009}
John Watrous.
\newblock Zero-knowledge against quantum attacks.
\newblock {\em SIAM Journal on Computing}, 39(1):25--58, 2009,
  \href{http://arxiv.org/abs/quant-ph/0511020}
  {\textsc{arXiv:\linebreak[0]quant-ph/0511020}}.

\bibitem[Win99]{Winter1999}
Andreas Winter.
\newblock {\em Coding Theorems of Quantum Information Theory}.
\newblock PhD thesis, Universit\"{a}t Bielefeld, 1999,
  \href{http://arxiv.org/abs/quant-ph/9907077}
  {\textsc{arXiv:\linebreak[0]quant-ph/9907077}}.

\bibitem[ZF87]{Zachos1987}
Stathis Zachos and Martin F\"{u}rer.
\newblock Probabalistic quantifiers vs. distrustful adversaries.
\newblock In {\em Proceedings of the Seventh Conference on Foundations of
  Software Technology and Theoretical Computer Science}, volume 287 of {\em
  Lecture Notes in Computer Science}, pages 443--455, London, UK, 1987.
  Springer-Verlag.

\end{thebibliography}
\end{document}